\newtheorem{prop}{Proposition}
\newtheorem{lem}{Lemma}
\newtheorem{thm}{\bf Theorem}
\newenvironment{theorem}{\begin{thm}} {\end{thm}}
\newtheorem{rmk}{\bf Remark}
\newenvironment{remark}{\begin{rmk}} {\end{rmk}}
\begin{document}
\title{Fast Multi-grid Methods for Minimizing Curvature Energies}

\author{Zhenwei Zhang,
        Ke Chen,
        Ke Tang
        and Yuping Duan$^*$
\thanks{The work was partially supported by the National Natural Science Foundation of China (NSFC 12071345, 11701418). \emph{Asterisk indicates the corresponding author.}}
\thanks{Z. Zhang is with Center for Applied Mathematics, Tianjin University, Tianjin 300072, China. E-mail: zzw7548@163.com.}
\thanks{K. Chen is with Liverpool Centre of Mathematics for Healthcare and Centre for Mathematical Imaging Techniques, Department of Mathematical Sciences, University of Liverpool, Liverpool L69 7ZL, UK. E-mail: k.chen@liv.ac.uk}
\thanks{K. Tang is with the Guangdong Key Laboratory of Brain-Inspired Intelligent Computation, Department of Computer Science and Engineering, Southern University of
Science and Technology, Shenzhen 518055, China, and also with the Research Institute of Trustworthy Autonomous Systems, Southern University of Science and Technology, Shenzhen 518055, China. E-mail: tangk3@sustech.edu.cn.}
\thanks{Y. Duan is with Center for Applied Mathematics, Tianjin University, Tianjin 300072, China. E-mail: yuping.duan@tju.edu.cn.}
}

\markboth{Journal of \LaTeX\ Class Files,~Vol.~14, No.~8, August~2021}%
{Shell \MakeLowercase{\textit{et al.}}: A Sample Article Using IEEEtran.cls for IEEE Journals}


\maketitle

\begin{abstract}
The geometric high-order regularization methods such as mean curvature and Gaussian curvature, have been intensively studied during the last decades due to their abilities in preserving geometric properties including image edges, corners, and contrast. However, the dilemma between restoration quality and computational efficiency is an essential roadblock for high-order methods. In this paper, we propose fast multi-grid algorithms for minimizing both mean curvature and Gaussian curvature energy functionals without sacrificing accuracy for efficiency. Unlike the existing approaches based on operator splitting and the Augmented Lagrangian method (ALM), no artificial parameters are introduced in our formulation, which guarantees the robustness of the proposed algorithm. Meanwhile, we adopt the domain decomposition method to promote parallel computing and use the fine-to-coarse structure to accelerate convergence.
Numerical experiments are presented on image denoising, CT, and MRI reconstruction problems to demonstrate the superiority of our method in preserving geometric structures and fine details. The proposed method is also shown effective in dealing with large-scale image processing problems by recovering an image of size $1024\times 1024$ within $40$s, while the ALM method  requires around $200$s.
\end{abstract}

\begin{IEEEkeywords}
Mean curvature, Gaussian curvature, multi-grid method, domain decomposition method, image denoising, image reconstruction.
\end{IEEEkeywords}

\section{Introduction}

\IEEEPARstart{I}{mage} restoration is a fundamental task in image processing, which aims to recover the latent clean image $u$ from the observed noisy image $f:{\Omega}\rightarrow \mathbb{R}$ defined on an open bounded domain ${\Omega}\subset \mathbb{R}^2$.
The total variation (TV) proposed by Rudin, Osher, and Fatemi is the most successful regularization used for image denoising problem \cite{rudin1992nonlinear}, which minimizes the total lengths of all level sets of the image.
Although the Rudin-Osher-Fatemi model has been proven to effectively remove noise and preserve sharp edges, it also suffers from some unfavorable properties including loss of image contrast and staircase effect \cite{Yves2002}.
High-order regularization methods have been investigated for image restoration problems to overcome the drawback of TV regularization including the fourth-order partial differential equation (PDE) \cite{Lysaker2003Noise}, total generalized variation \cite{Bredies2010total}, Euler's elastica \cite{2011AA}, mean curvature \cite{Zhu2012image} and total roto-translational variation \cite{ChambolleP19} etc.

The curvature regularization methods achieved great success by minimizing curvature-dependent energies, which  are well-known for their good  geometric interpretability and strong priors in the continuity of edges and has been applied to various data processing tasks such as
image decomposition \cite{belyaev2018adaptive}, graph embedding \cite{pei2020curvature}, and missing data recovery \cite{dong2020cure} etc.
Considering the associated image surface characterized by $(x, f(x))$ for $x \in\Omega$, the image restoration problem is to find a piecewise smooth surface $(x, u(x))$ to approximate the noisy surface and simultaneously remove the outliers.
The curvature minimization problem can be formulated as follows
\begin{equation}\label{MCGC}
\min_{u\in V}~ \int_\Omega |\kappa(u)|d x+\frac{\alpha}{2}\int_\Omega(u-f)^2d x,
\end{equation}
where $\kappa(u)$ can be either the mean curvature or Gaussian curvature of the image surface, and $V$ is a function space. The definitions of mean curvature and Gaussian curvature are described in Table \ref{curvature used}. To the best of our knowledge, one has not identified the proper function to formulate problem \eqref{MCGC}, which should be a subset of $L^2(\Omega)$.

\begin{table*}[t]
  \centering
  \caption{The mean curvature and Gaussian curvature regularization terms can be used in the image restoration model \eqref{MCGC}.\label{curvature used}}
  \begin{threeparttable}
    \begin{tabular}{|c|c|c|c|}
 \hline
 Curvature & Definition& Description  & References\\
 \hline
\multirow{2}[0]{*}{Mean curvature}  & $\nabla\cdot\Big(\frac{\nabla}{\sqrt{|\nabla u|^2+1}}\Big)$&the mean curvature of the zero level set $z=u(x,y)$& \cite{Zhu2012image,2017Augmented,2010Multigrid,gong2019weighted}  \\
  & $\frac{1}{2}(\kappa_{\min}(u)+\kappa_{\max}(u))$ &the mean curvature of  image surface $(x,y,u(x,y))$ &  \cite{2017Curvature,Zhong2020image,wang2022efficient} \\
  \hline
\multirow{2}[0]{*}{ Gaussian curvature} &$  \frac{ \mbox{det} (\nabla^2 u) }{(1+|\nabla u|^2)^2} $ & the Gaussian curvature of the zero level set $z=u(x,y)$  & \cite{2009Analogue,2015Image} \\
 &  $\kappa_{\min}(u)\kappa_{\max}(u)$ & the Gaussian curvature of image surface $(x,y,u(x,y))$ &\cite{2017Curvature,Zhong2020image,wang2022efficient}\\
     \hline
    \end{tabular}
    \begin{tablenotes}
        \footnotesize
        \item[1] $\kappa_{\max}(u)$ and $\kappa_{\min}(u)$ denotes the two principal normal curvatures.
    \end{tablenotes}
    \end{threeparttable}
\vspace{-0.5cm}
\end{table*}

The minimization of curvature energies is more challenging, such that efficient algorithms for solving the model \eqref{MCGC} are still limited.
Originally, the gradient descent method \cite{Zhu2012image} was presented to solve the mean curvature model, which has to solve fourth-order nonlinear evolution equations.
Liu \emph{et al.} \cite{lu2016implementation} developed a fast numerical algorithm for solving the high-order variational models based on split Bregman iteration.
Zhu, Tai, and Chan \cite{2017Augmented} developed the augmented Lagrangian method (ALM) for the mean curvature model.
Brito-Loeza and Chen \cite{2010Multigrid} propose a  multi-grid algorithm for solving
the mean curvature model, which is based on an augmented Lagrangian formulation with a special linearized fixed point iteration.
The situation is even worse for Gaussian curvature minimization since no fast algorithms are developed yet.
There are not many studies of effective numerical algorithms for Gaussian curvature minimization. Alboul and van Damme \cite{2000Polyhedral} used the total absolute Gaussian curvature in the different contexts of connectivity optimization for the triangulated surfaces.
Gong and Sbalzarini \cite{gong2013local} proposed a variational model using local weighted Gaussian curvature as a regularization term, which was effectively solved by the closed-form solution.
Elsey and Esedoglu \cite{2009Analogue} introduced the Gaussian curvature regularization for surface processing as the natural analog of the total variation,  which was discretized on a triangulated surface for reducing the difficulty of solution.
Brito-Loeza and Chen \cite{2015Image} presented a two-step method based on vector field smoothing and gray level interpolation to solve two second-order PDEs instead of fourth-order PDEs.

Zhong, Yin and Duan \cite{Zhong2020image} formulated  the following curvature regularization model by minimizing either Gaussian curvature or mean curvature over image surfaces
\begin{equation}\label{wmsurface}
\min_{u}~\int_\Omega g(\kappa)\sqrt{1+|\nabla u|^2}d x +\frac\lambda2\int_\Omega(u-f)^2d x,
\end{equation}
where $g(\cdot)$ denotes a certain function of the curvature. Model \eqref{wmsurface} was regarded as a re-weighted minimal surface model and handled by the alternating direction method of multipliers (ADMM).
Although the curvature function can be explicitly evaluated using the current estimation, one nonlinear sub-minimization problem has to be computed by the Newton method resulting in rising computational costs.

Gong and Sbalzarini \cite{2017Curvature} developed the curvature filters by minimizing either Gaussian curvature or mean curvature to smooth the noisy images. Rather than solving the higher-order PDEs, the pixel-wise solutions were presented to find the locally developable and minimal surfaces, which give zero Gaussian curvature and zero mean curvature, respectively. The idea of curvature filters was further studied in \cite{Gong2019mean,gong2019weighted}.
However, the curvatures lack rigorous definitions, which limits their performance in real applications.
Besides, when combined with the data fidelity term, gradient descent was used to estimate the solution leading to the slow convergence.

The multi-grid method is a fast numerical method for solving large-scale linear and nonlinear optimization problems\cite{Nash2000A,2003An,Frohn2004Nonlinear,2006On} and has been successfully applied to image processing models.
Chen and Tai \cite{Chen2007A} proposed a nonlinear multi-grid method for the total variation minimization based on the coordinate descent method. Savage and Chen \cite{savage2005improved} presented a nonlinear multi-grid method based on the full approximation scheme for solving the total variation model. Chan and Chen \cite{Chan2006An} proposed a fast multilevel method using primal relaxations for the total variation image denoising and analyzed its convergence. Zhang \emph{et al.} \cite{Zhang2021ipi} developed a multi-level domain decomposition method for solving the total variation minimization problems, which used the piecewise constant functions to ensure fast computation. Tai, Deng and Yin \cite{Tai2021} proposed a multiphase image segmentation method by solving the min-cut minimization problem under the multi-grid method framework.
For the high-order model, Brito-Loeza and Chen \cite{2010Multigrid} presented a new multi-grid method based upon a stabilized fixed point method for dealing with the mean curvature model.
The nonlinear multi-grid method was applied to fourth-order models to accelerate the convergence in \cite{2011A}. However, these methods require very high computational costs to solve the high-order PDEs and result in low efficiency.

This work presents the efficient multi-grid method for solving the highly nonlinear curvature regularization models.
We formulate a patch-based correction strategy from the fine grid layer to coarse grid layers and then interpolate the correction to each point nodal belonging to the patch.
We proposed a forward-backward splitting scheme \cite{combettes2005signal,raguet2013generalized} to solve the curvature minimization problem and prove its convergence theoretically.
More specifically, we first obtain analytical solutions to the mean curvature/Gaussian curvature minimization based on the local geometry property.
In what follows, we solve a convex optimization problem to estimate the patch-wise update.
To further improve the efficiency, we use the four-color domain decomposition method on each layer to enable all subproblems in the same color can be solved in parallel.
Numerous numerical experiments on both image restoration and image reconstruction are presented to demonstrate the efficiency and effectiveness of our algorithm in dealing with large-scale image processing problems.
To sum up, our contributions are concluded as follows
\begin{itemize}
\item We propose an efficient multi-grid method based on subspace correction method for solving the curvature minimization problem \eqref{MCGC}, where the whole space is transferred into small-size local patches;
\item We use the forward-backward splitting scheme to solve the non-convex patch-wise minimization problems, where  subproblems can be efficiently handled by the closed-form solutions;
\item The non-overlapping domain decomposition method is applied to circumvent the dependencies between the adjacent patches, which enables the parallel computation for these subproblems;
\item We develop a GPU-based curvature minimization package by utilizing the parallel computation ability of a GPU card, which is desirable for high-speed real applications.  All our codes and data are available at \url{https://github.com/Duanlab123/MGMC}.
\item The application of the mean curvature minimization is extended to CT and MR reconstruction problems to prove our approach can well balance the image quality and computational efficiency.
\end{itemize}

The rest of the paper is organized as follows.
In Sect. \ref{section1}, the coordinate descend method is proposed to solve the mean curvature minimization problem and the coarse grid structure is used to achieve the fast multi-grid algorithm.
We extend the idea to Gaussian curvature minimization in Sect. \ref{section2}.
Numerical experiments are conducted to illustrate the advantages by comparing with curvature filters and other curvature-related models in Sect. \ref{section3}.
In Sect. \ref{section4}, the multi-grid algorithm is extended to solve image reconstruction problems including CT reconstruction and MR reconstruction.
We conclude the paper in Sect. \ref{section5} with some remarks and future works.

\section{The mean curvature minimization problem}\label{section1}

Without loss of generality, a gray-scale discrete image $u$ of size $m\times n$ has its pixel values $u[k_1,k_2]$ defined at the locations $[k_1,k_2]$ in the domain $\Omega=\{1,\ldots,m\}\times\{1,\ldots,n\}$, where $k_1$ and $k_2$ are the row and column indices, respectively. Then the discrete mean curvature energy of \eqref{MCGC} can be given as
\begin{equation}\label{MCM}
\min_{u}F(u)\!:=\! \sum_{k_1=1}^m\sum_{k_2=1}^n|H(u[k_1,k_2])|+ \frac {\alpha}{2}(u[k_1,k_2]-f[k_1,k_2])^2,
\end{equation}
where $H(u[k_1,k_2])=\frac{1}2(\kappa_{\min}(u[k_1,k_2])+\kappa_{\max}(u[k_1,k_2]))$ denotes the mean curvature of the pixel $(k_1,k_2,u[k_1,k_2])$. We introduce a series of basis functions $\phi_{k_1,k_2}(x)$ on each pixel $(k_1,k_2)\in\Omega$ as follows
\begin{equation*}
\phi_{k_1,k_2}(x)=\left\{
\begin{array}{rcl}
1, & & \mbox{if}\ x=[k_1,k_2];\\
0, & &  \mbox{if}\ x\neq[k_1,k_2].\\
\end{array} \right.
\end{equation*}
Relying on the above basis functions, the minimization problem \eqref{MCM} can be considered as finding the best correction to minimize the curvature-related energy. Referred to \cite{chen2007nonlinear}, we choose an initial value $u_{0}$ and set $l=0$. For $k_1=1,\ldots,m$, $k_2=1,\ldots,n$, we update $u$ using the correction $c\in\mathbb R$ over each pixel as follows
\begin{equation}\label{correction}
u_{l+1} = u_{l} + \sum_{k_1=1}^m\sum_{k_2=1}^nc\phi_{k_1,k_2}(x).
\end{equation}
Relying on the coordinate descend method \cite{Chen2007A,chan2010multilevel, Zhang2021ipi}, the correction equation \eqref{correction} is transferred into a sequence of the one-dimensional minimization problems, the details of which is described as Algorithm \ref{coordinate descend}.

\begin{algorithm*}[tbh]
\caption{The coordinate descend method for solving the  minimization problem \eqref{MCM}} \label{coordinate descend}
\KwIn{ $u_0$, $f$, and $\alpha$}

\For {$l=0,1,2,\cdots$}
{
      \For{ $k_1=1,\cdots,m$, $k_2=1,\cdots,n$}
      {
       \begin{itemize}
       \item Compute the correction $c$ from \begin{equation}\label{local_MSM}
          c=\arg\min_{c\in \mathbb R}J(c):= |H(u_l[k_1,k_2]+c)| + \frac {\alpha}{2}\big(c-f_l^*[k_1,k_2]\big)^2,
       \end{equation}
where $f_l^*[k_1,k_2] =f[k_1,k_2]-u_l[k_1,k_2]$.
        \item Update $u_{l+1}[k_1,k_2]$ by
        \[u_{l+1}[k_1,k_2]=u_l[k_1,k_2]+c;\]
        \end{itemize}
      }
End till some stopping criterion meets;}
\KwOut{$u_{l+1}$}
\end{algorithm*}

The core issue becomes how to solve \eqref{local_MSM} effectively. Here, we use the forward-backward splitting (FBS) method to reformulate the local problem \eqref{local_MSM} into a couple of sub-minimization problems. The forward-backward splitting algorithm is a popular choice for the minimization problem with a smooth data fidelity; see for instance \cite{pustelnik2011parallel,tang2020compressive}. The detailed algorithm is provided as Algorithm \ref{sub_alg}.

\begin{algorithm}[tbh]
\caption{The forward-backward splitting method for solving the local  minimization problem \eqref{local_MSM}} \label{sub_alg}
\KwIn{ $u_l$, $f$, $c_0=0$, and $\alpha$, $\eta_0=1$;}
\For {$t=0,1,2,\cdots$}
{\begin{itemize}
\item The forward step
\begin{equation}
c_{t+\frac12}=\arg\min_{c}|H(u_l[k_1,k_2]+c)|+\frac 1{2\eta_{t}}(c-c_{t})^2;\label{FOBOS step1}
\end{equation}
\item The backward step
\begin{equation}\label{FOBOS step2}
c_{t+1}=\arg\min_{c}\frac {\alpha}{2}\big(c-f_l^*[k_1,k_2]\big)^2+\frac 1{2\eta_{t}}(c-c_{t+\frac12})^2;
\end{equation}
\item Update
\[\eta_{t+1}=1/\sqrt{(1+t)};\]
\item End till some stopping criterion meets;
\end{itemize}
}
\KwOut{$c_{t+1}$}
\end{algorithm}

\noindent\textbf{\emph{Solution to the sub-minimization problem  \eqref{FOBOS step1}.}}
We can use the geometric interpretation to estimate the correction in a local window according to well-known Bernstein's theorem.

\begin{prop}\label{Bernstein}
Let $d_\ell$ be the distances of $(k_1,k_2,u[k_1,k_2])$ on the image surface leaving from the tangent planes. Supposing that the correction $c$ is defined as $c = \frac{1}{\iota}\sum_{\ell=1}^{\iota}d_\ell$ with $\iota$ being the total number of tangent planes, the mean curvature energy $|H(u[k_1,k_2]+c)|$ decreases.
\end{prop}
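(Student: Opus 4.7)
The plan is to build on the classical Bernstein characterization of minimal surfaces, which asserts that a smooth surface has vanishing mean curvature at a point precisely when that point lies on every tangent plane determined by nearby points. The local sub-minimization \eqref{FOBOS step1} can be viewed as a discrete enactment of this principle: set the correction $c$ so that the averaged signed distance of the updated center point to the surrounding tangent planes is as small as possible, and then read off the consequence for $|H|$.

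First, I would fix notation on the local $3\times 3$ patch around $(k_1,k_2)$ and enumerate the $\iota$ tangent planes $\{\Pi_\ell\}$ determined by triples of neighboring pixels, together with their (downward-pointing) unit normals $n_\ell$. A short computation shows that shifting only the central height by $c$ changes the signed distance from the center to $\Pi_\ell$ linearly, namely $d_\ell(c)=d_\ell-c\,\langle n_\ell,e_3\rangle$. Under the standard assumption that the patch is close to horizontal (so $\langle n_\ell,e_3\rangle\approx 1$ uniformly in $\ell$), one has $d_\ell(c)\approx d_\ell-c$, and the averaged signed distance $\tfrac1\iota\sum_\ell d_\ell(c)$ vanishes exactly when $c=\tfrac1\iota\sum_\ell d_\ell$, which is the proposed update.

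Second, I would connect this averaged distance to the discrete mean curvature itself. Writing the image-surface mean curvature at the central pixel as the average of the two principal normal curvatures $\tfrac12(\kappa_{\min}+\kappa_{\max})$, I would expand each $\kappa$ by a second-order Taylor/finite-difference formula in the central height. Up to a positive geometric factor depending only on the neighbor configuration (and not on $c$), one obtains $H(u[k_1,k_2]+c)=\gamma\bigl(\tfrac1\iota\sum_\ell d_\ell-c\bigr)+\mathcal O(c^2)$. Plugging in the proposed $c=\tfrac1\iota\sum_\ell d_\ell$ cancels the linear part, and to leading order $|H(u[k_1,k_2]+c)|$ is strictly smaller than $|H(u[k_1,k_2])|=\gamma\bigl|\tfrac1\iota\sum_\ell d_\ell\bigr|$.

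The main obstacle is exactly this second step: identifying the correct geometric proportionality constant between the averaged signed distance and $H$ in the discrete, pixel-based setting, and quantifying the remainder so that the decrease is guaranteed rather than only heuristic. This will require either (i) restricting attention to the regime where the local patch is nearly planar, so that $\langle n_\ell,e_3\rangle$ and the Taylor remainder can be uniformly controlled, or (ii) working with a specific discretization of $H$ for which the relation to tangent-plane distances can be written exactly. Once this geometric estimate is in place, the claim reduces to the elementary fact that $c=\tfrac1\iota\sum_\ell d_\ell$ is the unique minimizer of $\sum_\ell (d_\ell-c)^2$ and therefore also pushes $\bigl|\tfrac1\iota\sum_\ell d_\ell - c\bigr|$ to zero, giving the desired monotonicity of $|H|$.
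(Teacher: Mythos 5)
Your proposal follows essentially the same route as the paper, but it attempts considerably more than the paper's own argument actually delivers. The paper's proof is three sentences: it invokes Bernstein's theorem to assert that a minimal graph over $\mathbb R^2$ is a plane, concludes heuristically that ``the flatter the image surface, the smaller the mean curvature regularization term,'' and then observes that $c=\frac1\iota\sum_{\ell=1}^\iota d_\ell$ is the unique minimizer of the least-squares functional $\sum_{\ell=1}^\iota(c-d_\ell)^2$. No quantitative relation between the averaged tangent-plane distance and the discrete mean curvature $H$ is established, and the claimed inequality $|H(u[k_1,k_2]+c)|\le|H(u[k_1,k_2])|$ is never verified beyond this flatness heuristic. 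The obstacle you isolate in your second step --- producing the proportionality constant $\gamma$ in $H(u[k_1,k_2]+c)=\gamma\bigl(\tfrac1\iota\sum_\ell d_\ell-c\bigr)+\mathcal O(c^2)$ and controlling the remainder and the factors $\langle n_\ell,e_3\rangle$ --- is therefore a genuine gap, but it is a gap in the paper's argument as well, not something the paper closes and you overlooked. Your linearization of the signed distances and the near-planarity restriction are sensible ways to make the statement rigorous, and if carried out would prove strictly more than the paper does; as written, both your plan and the paper's proof establish only that the proposed $c$ is the least-squares flattening correction, not the stated monotone decrease of $|H|$.
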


\begin{proof}
According to Bernstein's theorem, a graph of a real function on $\mathbb R^2$ is a minimal surface, which should be a plane in $\mathbb R^3$. Thus, the flatter the image surface, the smaller the mean curvature regularization term. Suppose there are $\iota$ tangent planes and the corresponding distances of $(k_1,k_2, u[k_1,k_2])$ to its tangent planes are denoted by $d_\ell,~\ell=1,\cdots,\iota$.
To make the image surface as flat as possible, we consider the following quadratic minimization problem
\[\min_{c\in \mathbb R} ~\sum_{\ell=1}^\iota (c-d_\ell)^2,\]
where the optimal correction is  $c=\frac{1}{\iota}\sum_{\ell=1}^{\iota}d_\ell$.
\hfill
\end{proof}

As shown in Fig. \ref{fig2}, we enumerate total 8 local tangent planes in a $3\times3$ window, which are denoted as $T_1$ to $T_8$ located pairwise centrosymmetric and passed through the center point to avoid the grid bias.
Note that we can use more tangent planes to obtain accurate principal curvatures, but this also increases the calculation cost.
Therefore, we later introduce more tangent planes on the coarse layers in the multi-grid framework to balance the effectiveness and efficiency.

Similar to our previous work \cite{Zhong2020image}, we compute the distances $d_\ell$, $\ell=1,\cdots,8$, as illustrated in Fig. \ref{tangent}. More specifically, let the plane $XYZ$ be a tangent plane of $O$ and $\bm n$ be the normal vector.
The directed distance from $O$ to the tangent plane can be calculated by $d=\overrightarrow{XO}\cdot\bm n$ as \eqref{distance to tri}
\begin{figure*}
\begin{equation}\label{distance to tri}
d=\overrightarrow{XO}\cdot \bm n=
\frac{(u[k_1,k_2-1]+u[k_1,k_2+1]-2u[k_1,k_2])}{\sqrt{(u[k_1,k_2-1]+u[k_1,k_2+1]-2u[k_1-1,k_2])^2+(u[k_1,k_2+1]-u[k_1,k_2-1])^2+4}},
\end{equation}
\end{figure*}
where $\bm n$ is defined by the cross product of the vector $\overrightarrow{XZ}$ and $\overrightarrow{XY}$, i.e., $\bm n=\overrightarrow{XZ}\times\overrightarrow{XY}$. The computation of $d_\ell$, $\ell=1,\ldots, 8$, can be implemented in the same way.
And the update of $c_{t+\frac12}$ can be estimated as
\begin{equation}\label{c-sol1}
c_{t+\frac12}= \frac18\sum_{\ell=1}^8d_\ell.
\end{equation}
\noindent\textbf{\emph{Solution to the sub-minimization  problem \eqref{FOBOS step2}.}}
We are facing a quadratic minimization problem,
which is formulated as
\begin{equation}\label{quadratic}
\min_{c}\frac {\alpha}{2}\big(c-f_l^*[k_1,k_2]\big)^2+\frac 1{2\eta_{t}}(c-c_{t})^2+\frac 1{2\eta_{t}}(c-c_{t+\frac12})^2.
\end{equation}
The minimization problem \eqref{quadratic} can be solved by the closed-form solution as follows
\begin{equation}\label{c-sol2}
c_{t+1} \!=\! \frac{1}{2+\alpha\eta_t}\Big(c_t+c_{t+\frac12}+\alpha\eta_t f_l^*\Big).
\end{equation}

\subsection{Convergence analysis of Algorithm \ref{sub_alg}}

\begin{figure}[t]
\centering
\includegraphics[width=0.5\textwidth]{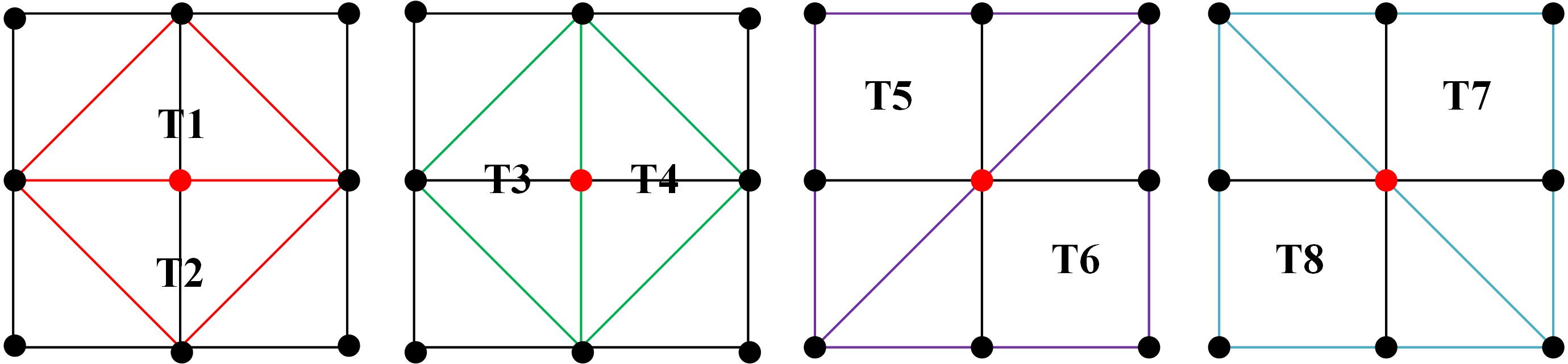}
\caption{Illustration of the eight tangent planes located in a $3\times 3$ local patch on the finest layer, which locates pairwise centrosymmetric  concerning  the center point nodal (marked by red color). }\label{fig2}
\end{figure}

In the subsection, we present a brief discussion to show the energy diminishing of Algorithm \ref{sub_alg}.
Let $c^*\in \mathbb R$ denote a minimizer of model \eqref{local_MSM}, the mean curvature term and date fidelity term are denoted by  $f(c)=|H(u_l[k_1,k_2]+c)|$ and $r(c)=\frac{\alpha}{2}(c-f^*[k_1,k_2])^2$, respectively.
The next lemma provides a key tool for deriving convergence.  For more details, please refer to the forward-backward splitting scheme in \cite{duchi2009efficient}.
\begin{lem}\label{Bounding Step Differences}
(\textbf{Bounding Step Differences})
Assume that the norm of the gradient of $r(c)$ and $f(c)$  are bounded as
\[ \|\nabla r(c)\|^2\leq G^2,\quad \|\partial f(c)\|^2\leq D^2,\]
where $G,D$ are the Lipschitz constant of $\nabla r(c)$ and $\partial f(c)$, respectively. We have
\begin{equation}\label{telescoping}
\begin{split}
 2\eta_t \Big(f(c_{t})+r(c_{t+1})&-J(\widetilde{c})\Big)\leq \|c_{t}- \widetilde{c}\|^2 \\
&- \|c_{t+1}- \widetilde{c}\|^2+ \eta_t^2(5G^2+3D^2).
\end{split}
\end{equation}
\end{lem}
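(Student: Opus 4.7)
The plan is to adapt the standard forward-backward splitting analysis (as developed in Duchi--Singer \cite{duchi2009efficient}) by expanding the squared distance $\|c_{t+1}-\widetilde c\|^2$ in terms of the two proximal updates, then converting the cross-terms into the objective gap plus a controllable residual.

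First I would write the first-order optimality conditions for the two subproblems. From \eqref{FOBOS step1} there exists a (generalized) subgradient $g_f\in\partial f(c_{t+\frac12})$ with $c_{t+\frac12}=c_t-\eta_t g_f$, and from \eqref{FOBOS step2}, since $r$ is smooth and convex, one has $c_{t+1}=c_{t+\frac12}-\eta_t\nabla r(c_{t+1})$. Adding these gives $c_{t+1}-c_t=-\eta_t\bigl(g_f+\nabla r(c_{t+1})\bigr)$, so expanding the squared norm yields
\begin{equation*}
\|c_{t+1}-\widetilde c\|^2=\|c_t-\widetilde c\|^2-2\eta_t\bigl\langle g_f+\nabla r(c_{t+1}),\,c_t-\widetilde c\bigr\rangle+\eta_t^2\|g_f+\nabla r(c_{t+1})\|^2.
\end{equation*}
This is the fundamental identity that we rearrange into the desired form.

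Next I would bound the inner product by the objective gap. Treating $f$ via its (generalized) subgradient inequality $f(c_{t+\frac12})-f(\widetilde c)\leq \langle g_f,c_{t+\frac12}-\widetilde c\rangle$ and using convexity of $r$ to write $r(c_{t+1})-r(\widetilde c)\leq \langle \nabla r(c_{t+1}),c_{t+1}-\widetilde c\rangle$, the base points can be shifted to $c_t$ using $c_{t+\frac12}=c_t-\eta_t g_f$ and $c_{t+1}=c_t-\eta_t(g_f+\nabla r(c_{t+1}))$. This gives
\begin{equation*}
f(c_{t+\frac12})+r(c_{t+1})-J(\widetilde c)\leq \bigl\langle g_f+\nabla r(c_{t+1}),c_t-\widetilde c\bigr\rangle+\eta_t\bigl(\|g_f\|^2+\|g_f\|\,\|\nabla r(c_{t+1})\|\bigr),
\end{equation*}
which when substituted into the expansion above gives $2\eta_t(f(c_{t+\frac12})+r(c_{t+1})-J(\widetilde c))\leq \|c_t-\widetilde c\|^2-\|c_{t+1}-\widetilde c\|^2+\eta_t^2\,\mathcal R$, with $\mathcal R$ a combination of $\|g_f\|^2$, $\|\nabla r(c_{t+1})\|^2$ and the cross term $\|g_f\|\|\nabla r(c_{t+1})\|$. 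A Cauchy--Schwarz / Young inequality together with the gradient bounds $\|g_f\|\leq D$ and $\|\nabla r(c_{t+1})\|\leq G$ absorbs each piece into $aG^2+bD^2$ for suitable constants.

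Finally, the left-hand side of the claim uses $f(c_t)$, not $f(c_{t+\frac12})$, so I would use the Lipschitz-type bound $|f(c_t)-f(c_{t+\frac12})|\leq D\,\|c_t-c_{t+\frac12}\|=D\eta_t\|g_f\|\leq D^2\eta_t$ to replace one by the other at the cost of an additional $2\eta_t^2 D^2$ term. Carefully tallying contributions ($2\eta_t^2\|g_f\|^2$ from the expansion, $2\eta_t^2\|\nabla r(c_{t+1})\|^2$ from the expansion after Young's splitting of the cross term, $2\eta_t^2\|g_f\|\|\nabla r(c_{t+1})\|$ bounded by $\eta_t^2(G^2+D^2)$, and the $2\eta_t^2 D^2$ shift) should yield the stated constant $5G^2+3D^2$. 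The main obstacle is the non-convexity of $f(c)=|H(u_l[k_1,k_2]+c)|$: the subgradient inequality cannot be invoked in the classical sense, so one must either work with the Clarke subdifferential and appeal to a local convexity of $|H(\cdot)|$ near a stationary point, or reinterpret $\widetilde c$ as a comparison point for which the one-sided inequality still holds. Getting the exact constants $5$ and $3$ also requires bookkeeping the Young-inequality splits consistently throughout.
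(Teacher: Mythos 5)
Your proposal follows essentially the same route as the paper's appendix proof: both are the Duchi--Singer forward--backward telescoping argument, expanding $\|c_{t+1}-\widetilde c\|^2$ via the two first-order optimality conditions, bounding the cross terms with convexity of $r$ and Cauchy--Schwarz, and absorbing $4GD\leq 2G^2+2D^2$ into the final constant. The one place you diverge is the passage from $f(c_{t+\frac12})$ to $f(c_t)$: the paper replaces $\partial f(c_{t+\frac12})$ by the one-dimensional secant slopes over $[\widetilde c,\,c_{t+\frac12}]$ and $[c_t,\,c_{t+\frac12}]$ so that the two inner products telescope exactly to $f(\widetilde c)-f(c_t)$ at no extra cost (justified, somewhat informally, by Proposition \ref{Bernstein}), whereas your Lipschitz shift adds a further $2\eta_t^2D^2$ and would not reproduce the stated constant $5G^2+3D^2$ without re-tuning the Young splits --- and the non-convexity of $f$ that you rightly flag as the main obstacle is exactly the point the paper's proof glosses over with that same secant-slope step.
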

\begin{proof}
The proof is sketched in Appendix.
\end{proof}
Based on Lemma \ref{Bounding Step Differences}, we can prove the following result, which is the basis for deriving convergence results.

\begin{lem}\label{Telescoping sum}
Assuming that the norm of $\|\widetilde{c}\|^2\leq E^2$ with $E$ being a positive constant, we sum the residuals \eqref{telescoping} over $t$ from 1 through $T$ and get a telescoping sum
\[\sum_{t=1}^T\eta_t\Big[ f(c_{t})+r(c_{t})-J(\widetilde{c})\Big]\leq\widetilde{G},\]
where $\widetilde{G}= E^2+3\sum_{t=1}^T\eta_t^2(G^2+D^2)$.
\end{lem}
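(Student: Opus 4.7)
The plan is to sum the per-iteration bound from Lemma \ref{Bounding Step Differences} over $t = 1,\ldots,T$ and exploit the telescoping structure of the right-hand side. After summation, the iterate-distance contributions collapse to $\|c_1 - \widetilde{c}\|^2 - \|c_{T+1} - \widetilde{c}\|^2$, and the noise terms accumulate as $\sum_{t=1}^T \eta_t^2(5G^2 + 3D^2)$. The nonpositive boundary term $-\|c_{T+1} - \widetilde{c}\|^2$ can be dropped, which only tightens the inequality.

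Next I would bound the surviving initial term $\|c_1 - \widetilde{c}\|^2$ using the initialization $c_0 = 0$ from Algorithm \ref{sub_alg} and the hypothesis $\|\widetilde{c}\|^2 \leq E^2$. A triangle-type inequality together with the single-step increment bound $\|c_1 - c_0\|^2 \leq \eta_0^2(G^2 + D^2)$, which follows directly from the optimality conditions of the forward and backward steps \eqref{FOBOS step1}--\eqref{FOBOS step2} combined with the Lipschitz assumptions on $\partial f$ and $\nabla r$, yields an estimate compatible with the claimed $\widetilde{G}$. Dividing the summed inequality by $2$ then produces a bound of the form $\sum_{t=1}^T \eta_t\bigl(f(c_t) + r(c_{t+1}) - J(\widetilde{c})\bigr) \leq \tfrac{1}{2}E^2 + \tfrac{1}{2}\sum_{t=1}^T \eta_t^2(5G^2 + 3D^2)$, whose constants fit inside $\widetilde{G} = E^2 + 3\sum_{t=1}^T \eta_t^2(G^2 + D^2)$.

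The main obstacle I anticipate is the index mismatch between $r(c_{t+1})$, which naturally appears in the forward-backward energy identity from Lemma \ref{Bounding Step Differences}, and $r(c_t)$, which appears in the statement of Lemma \ref{Telescoping sum}. I would address this by re-indexing the sum, writing $\sum_{t=1}^T \eta_t\, r(c_{t+1}) = \sum_{s=2}^{T+1} \eta_{s-1}\, r(c_s)$, and then using monotonicity of the step size $\eta_t = 1/\sqrt{1+t}$ to compare $\eta_{s-1}$ with $\eta_s$. The resulting discrepancy at each step is controlled by $|\eta_{s-1} - \eta_s|\, r(c_s) = O(\eta_s^2) \cdot r(c_s)$, which together with the boundary contributions at $s = 1$ and $s = T+1$ can be absorbed into the constant $\widetilde{G}$ without altering its leading-order form.

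The argument is thus essentially a careful bookkeeping exercise: once Lemma \ref{Bounding Step Differences} is in hand, telescoping, dropping the negative boundary term, and re-indexing the lagged summand are all that is needed. I expect the only delicate point to be verifying that the constants genuinely fit within $E^2 + 3\sum_{t=1}^T \eta_t^2(G^2 + D^2)$ after the index shift, but this is a routine algebraic check.
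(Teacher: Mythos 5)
Your proposal follows essentially the same route as the paper, whose proof of this lemma is simply a deferral to Theorem~2 of Duchi and Singer: sum the bound of Lemma~\ref{Bounding Step Differences} over $t$, telescope the distance terms, drop the nonpositive $-\|c_{T+1}-\widetilde{c}\|^2$, and control the surviving initial distance via $\|\widetilde{c}\|^2\leq E^2$. The one stylistic difference is your re-indexing device for the $r(c_{t+1})$-versus-$r(c_t)$ mismatch; the cited argument instead uses convexity of $r$ together with the increment bound $\|c_{t+1}-c_t\|\leq\eta_t(G+D)$ to absorb $\eta_t\bigl(r(c_t)-r(c_{t+1})\bigr)$ directly into the $\eta_t^2$ terms, which avoids the leftover boundary contribution $\eta_1 r(c_1)$ (and the cross term from your triangle inequality on $\|c_1-\widetilde{c}\|$) that your shift would still need to squeeze into the stated constant $\widetilde{G}$.
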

\begin{proof}
The proof is similar to Theorem 2 in \cite{duchi2009efficient}.
\end{proof}
Therefore, a direct consequence of Lemma \ref{Telescoping sum} can be obtained when running  Algorithm \ref{sub_alg}
with $\eta_t \varpropto 1/\sqrt{t}$  or with non-summable step sizes decreasing to zero.

\begin{theorem}\label{MC convergence}
Assume that the conditions of Lemma \ref{Telescoping sum} hold and the step size $\eta_t$ satisfy $\eta_t\rightarrow 0$ and   $\sum_{t=1}^\infty\eta_t=\infty$.
Then we have
\[\liminf_{t\rightarrow\infty} J(c_t)  -J(c^*)=0. \]
\end{theorem}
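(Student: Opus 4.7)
The plan is to leverage the telescoping bound of Lemma \ref{Telescoping sum} and push the running weighted average of the suboptimality gaps to zero using the divergence $\sum_{t=1}^{\infty}\eta_t=\infty$. First, I would substitute $\widetilde{c}=c^{*}$ in Lemma \ref{Telescoping sum}. Since $c^{*}$ minimizes $J=f+r$, every summand $\eta_t[J(c_t)-J(c^{*})]$ is nonnegative, so the lemma yields
\[
\sum_{t=1}^T \eta_t\bigl[J(c_t)-J(c^{*})\bigr] \;\leq\; E^2 + 3(G^2+D^2)\sum_{t=1}^T \eta_t^2.
\]
Setting $S_T:=\sum_{t=1}^T\eta_t$ and $Q_T:=\sum_{t=1}^T\eta_t^2$ and dividing through by $S_T$, the claim reduces to showing that this weighted average of nonnegative quantities cannot remain bounded away from zero.

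Next, I would argue by contradiction. Suppose $\delta:=\liminf_{t\to\infty}[J(c_t)-J(c^{*})]>0$. Then for any $\epsilon\in(0,\delta)$ there exists $t_0$ with $J(c_t)-J(c^{*})\geq \delta-\epsilon$ for every $t\geq t_0$. Truncating the sum at $t_0$ and dividing by $S_T$ produces
\[
(\delta-\epsilon)\,\frac{S_T-S_{t_0-1}}{S_T} \;\leq\; \frac{E^2}{S_T} + 3(G^2+D^2)\,\frac{Q_T}{S_T}.
\]
Since $S_T\to\infty$, the left-hand side converges to $\delta-\epsilon$, so the whole argument hinges on showing that the right-hand side vanishes.

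The main technical step, which I expect to be the crux, is controlling $Q_T/S_T$ \emph{without} assuming square-summability of the step sizes, since for the recommended choice $\eta_t\propto 1/\sqrt{t}$ one has $Q_T\to\infty$. Here the Stolz--Ces\`aro lemma applies cleanly: $S_T$ is strictly increasing and divergent, and the incremental ratio satisfies $(Q_{T+1}-Q_T)/(S_{T+1}-S_T)=\eta_{T+1}\to 0$ by hypothesis, so $Q_T/S_T\to 0$. Combined with $E^2/S_T\to 0$, the right-hand side vanishes, and letting $T\to\infty$ gives $\delta-\epsilon\leq 0$ for every $\epsilon\in(0,\delta)$, contradicting $\delta>0$. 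Hence $\liminf_{t\to\infty}[J(c_t)-J(c^{*})]=0$, as required. The only subtlety beyond bookkeeping is recognizing that the hypothesis $\eta_t\to 0$ (rather than $\sum\eta_t^2<\infty$) is exactly what Stolz--Ces\`aro needs, so the two step-size assumptions in the theorem statement are tight for this argument.
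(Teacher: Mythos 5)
Your proposal is correct and follows the same overall route as the paper: substitute $\widetilde{c}=c^{*}$ into Lemma \ref{Telescoping sum}, observe that the weighted sum of the nonnegative gaps $J(c_t)-J(c^{*})$ is bounded by $E^2+3(G^2+D^2)\sum_{t=1}^{T}\eta_t^2$, and divide by $S_T=\sum_{t=1}^{T}\eta_t\to\infty$. The one place where you genuinely improve on the paper is the treatment of the ratio $Q_T/S_T$ with $Q_T=\sum_{t=1}^T\eta_t^2$: the paper simply writes the bound as $\frac{E^2}{\sum_{t=1}^\infty\eta_t}+\frac{3(G^2+D^2)\sum_{t=1}^\infty\eta_t^2}{\sum_{t=1}^\infty\eta_t}\rightarrow 0$, which for the recommended choice $\eta_t\propto 1/\sqrt{t}$ is an indeterminate $\infty/\infty$ expression (since $\sum 1/t$ diverges) and so does not by itself justify the conclusion. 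Your Stolz--Ces\`aro step, using only $\eta_t\to 0$ and $S_T\to\infty$ to get $Q_T/S_T\to 0$, is exactly the missing justification, and your observation that this is why the theorem assumes $\eta_t\to 0$ rather than square-summability is on point. The contradiction framing around the $\liminf$ is slightly more elaborate than the paper's direct passage from the minimum over $t\le T$ to the $\liminf$, but it is equivalent in substance.
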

\begin{proof}
For the first $T$ iterations,  Lemma \ref{Telescoping sum} gives
\begin{small}
\begin{equation*}
\min_{t\in\{0,\cdots,T\}} \big(f(c_{t})+r(c_{t})-J(\widetilde{c})\big)\sum_{t=1}^T\eta_t
\leq\sum_{t=1}^T\eta_t\Big[f(c_{t})+r(c_{t})-J(\widetilde{c})\Big].
\end{equation*}
\end{small}
Let $\widetilde{c}=c^*$, we have
\begin{small}
\begin{equation*}
\liminf_{t\rightarrow\infty}  J(c_t)-J(c^*) \leq \frac{E^2}{\sum_{t=1}^\infty\eta_t}+\frac{3\sum_{t=1}^\infty \eta_t^2(G^2+D^2)}{\sum_{t=1}^\infty\eta_t}\rightarrow 0.
\end{equation*}
\end{small}
\end{proof}

\begin{remark}
The step size $\eta$ can be a constant or diminishing with iterations, e.g., $\eta_t= 1/\sqrt{t}$.
\end{remark}

\begin{remark}
Since the mean curvature minimization subproblem can be explicitly solved by \eqref{c-sol1}, we simply set $t=0$  in Algorithm \ref{sub_alg} for all numerical experiments.
\end{remark}

\begin{figure}[t]
\centering
\includegraphics[width=0.25\textwidth]{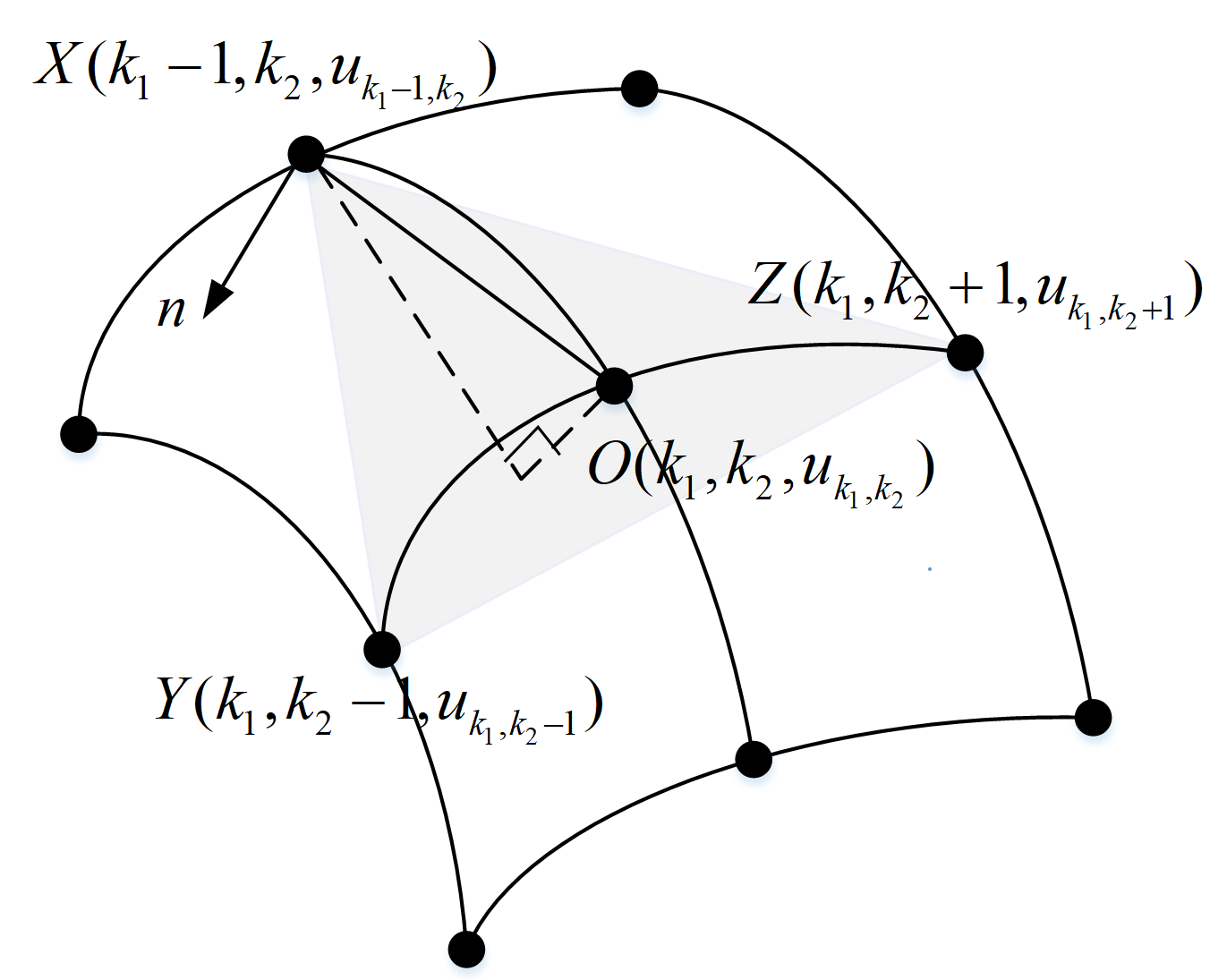}
\caption{Illustration of the distance to the tangent plane, where $x=(k_1,k_2)$ indicates the coordinates and $u$ denotes the image intensity function. }\label{tangent}
\vspace{-0.3cm}
\end{figure}

\subsection{Our multi-grid algorithm}\label{mg}
Regarding problem \eqref{local_MSM} as the finest grid, we can use a larger local window and generate the multi-grid algorithm for solving the mean curvature minimization problem.
Without loss of generality, we assume the initial grid $\mathcal{T}$ consists of $m\times n$ grid points.
Starting from the finest grid $\mathcal{T}_1=\mathcal{T}$, we consider a sequence of coarse structure,
$\mathcal{T}_1,\mathcal{T}_2,\cdots,\mathcal{T}_J$, with $J$ being the total number of layers.
Our multi-grid structure is constructed by gathering the grids point into non-overlapping patches of different sizes.
Specifically, the size of the patch set $\tau_j$ on the $j$th coarse layer is $(2^{j}-1)\times (2^j-1)$, and  $\mathcal{T}_j$ contains $m_j\times n_j$ patches with
\[m_j=\lceil m/(2^{j}-1)\rceil,~n_j=\lceil n/(2^{j}-1)\rceil,\]
where $\lceil \cdot \rceil$ is a rounding up function.
Then there are $N_j=m_j\times n_j$ patches on the $j$th coarse layers and the
partition can be expressed as $\mathcal{T}_j=\{\tau_j^i\}_{i=1}^{N_j}$.
In order to ensure each patch is complete, we prolongate the image domain before the partition using boundary conditions. It is straightforward to define $V_j$ as a finite element space
\[V_j=\{v:\ v|_{\tau_j}\in\mathcal{P}_c(\tau_j),\ \forall\tau_j\in\mathcal{T}_j  \},\]
where $\mathcal{P}_c$ denotes the space of all piecewise constant functions.
We equip the piecewise constant function space $V_j$ with a set of basis functions $\{\phi_j^i\}_{i=1}^{N_j}$, which is defined as
\begin{equation*}
\phi_j^i(x)=\left\{
\begin{array}{rcl}
1, & & \mbox{if}\ x\in\tau_j^i;\\
0, & &  \mbox{if}\ x\notin\tau_j^i;\\
\end{array}\quad i=1,\cdots,N_j. \right.
\end{equation*}
Associated with each basis function, we define the one dimensional subspace  $V_j^i=\mathrm{span}\{\phi_j^i\}$.
Then, the whole space $V$ can be expressed as
$V=\sum_{j=1}^J\sum_{i=1}^{N_j}V_j^i$.

On the coarse grids, we consider a larger local patch  including more local tangent planes to be enumerated. We can come up with a recurrence formula for the number of tangent planes on the $j$th layer as $\iota = 2^{j+2}$. For example, we enumerate the total of $16$ triangular planes for patches on the first coarse layer, half of which are the same as the finest layer and the left ones are displayed in Fig. \ref{second level}.
Correspondingly, we define one-dimensional subspace minimization problem \eqref{FOBOS step2} over the subspace $V_j^i$, $i=1,\cdots,N_j$, $j=1,\cdots,J$ as follows
\begin{equation*}
\min_{c\in \mathbb R}~\frac {1}{2\eta_t}\Big(c-\frac{1}{\iota}\sum_{\ell=1}^{\iota}d_\ell\Big)^2 +\frac {1}{2\eta_t}(c-c_t)^2+ \frac{\alpha s}{2}\big(c-f_l^*\big)^2,
\end{equation*}
where
$f_l^*=\sum\limits_{(k_1,k_2)\in\tau_j^i}(f[k_1,k_2]-u_l[k_1,k_2])/s$, $s=\sum\limits_{x\in\tau_j^i}\phi_j^i(x)$.
The closed-form solution is defined as follows
\[c_{t+1} = \frac{1}{2+\eta_t\alpha s}\Big(c_t+\frac{1}{\iota}\sum_{\ell=1}^\iota d_\ell+\eta_t\alpha sf_l^*\Big).\]
Then, the correction $\bm c_j =(c_j^1,\cdots,c_j^i,\cdots,c_j^{N_j})$ on the $j$th layer is reshaped into a matrix of size $m_j\times n_j$ as follows
\[\bm c_j\!=\!\begin{bmatrix}
 \begin{array}{c|c|c|c|c}
   \cdots &\vdots & \vdots &\vdots & \cdots  \\
   \hline
    \vdots & c_j^{i_1-1,i_2-1}& c_j^{i_1-1,i_2} &c_j^{i_1-1,i_2+1}& \vdots\\
    \hline
     \vdots & c_j^{i_1,i_2-1}& c_j^{i_1,i_2} &c_j^{i_1,i_2+1}& \vdots\\
     \hline
    \vdots & c_j^{i_1+1,i_2-1}& c_j^{i_1+1,i_2} &c_j^{i_1+1,i_2+1}& \vdots\\
    \hline
   \cdots &\vdots & \vdots &\vdots & \cdots  \\
    \end{array}
\end{bmatrix}_{m_j\times n_j},\]
where $i_1=1,\cdots,m_j$, $i_2=1,\cdots,n_j$, and $m_j$, $n_j$ are the number of patches in the row direction and column direction, respectively.
Because we use the piecewise constant basis function over the support set, we can define an interpolation matrix $
L_j:\mathbb R^{m_jn_j}\rightarrow \mathbb R^{mn}$ to update the solution on the finest layer, i.e.
\[L_j\bm c_j\!=\!\begin{bmatrix}
\begin{footnotesize}
  \begin{array}{c|ccc|c}
   \vdots & \vdots & \vdots   &  \vdots & \vdots \\
   c_j^{i_1-1,i_2-1}&c_j^{i_1-1,i_2}& \cdots &c_j^{i_1-1,i_2}&c_j^{i_1-1,i_2+1} \\
    \hline
     c_j^{i_1,i_2-1}   &  c_j^{i_1,i_2}   & \cdots & c_j^{i_1,i_2} &c_j^{i_1,i_2+1}  \\
    \cdots & \vdots & \cdots &\vdots&\cdots\\
     c_j^{i_1,i_2-1}   & c_j^{i_1,i_2}   & \cdots &c_j^{i_1,i_2} & c_j^{i_1,i_2+1} \\
     \hline
       c_j^{i_1+1,i_2-1}& c_j^{i_1+1,i_2} &\cdots & c_j^{i_1+1,i_2}&c_j^{i_1+1,i_2+1} \\
       \vdots   & \vdots   & \cdots & \vdots & \vdots \\
     \end{array}
\end{footnotesize}
\end{bmatrix}_{m\times n}.\]
Then the solution can be defined as $u_{l+1} = u_{l}+L_j\bm c_j$.

\begin{figure}[t]
\centering
\includegraphics[width=0.5\textwidth]{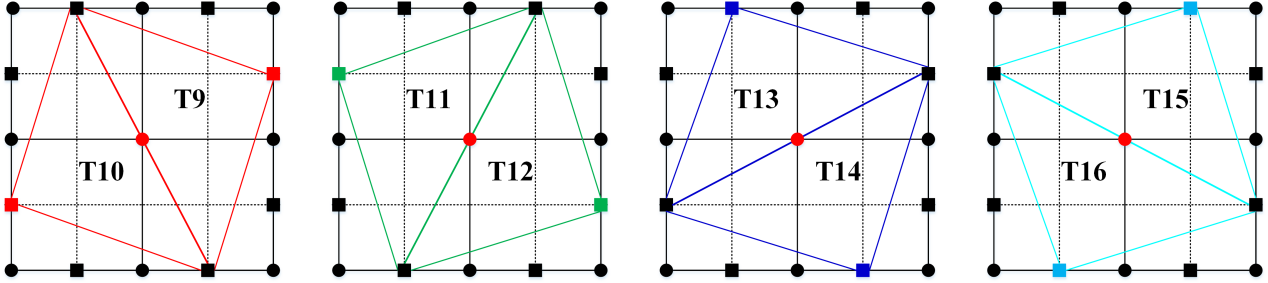}
\caption{Illustration of the eight tangent planes $T_{9}-T_{16}$ together with $T_1 -T_8$ in Fig. \ref{fig2} form a coherent whole for a $5\times 5$ patch on the first coarse layer. }\label{second level}
\vspace{-0.3cm}
\end{figure}

We use the V-cycle to solve the minimization model from the finest layer $V_1$ to the coarsest layer $V_J$, and then from the coarsest layer to the finest layer. In practice, we find that half of the V-cycle is sufficient for the decrease of the energy functional, while the other half of the V-cycle does little improvement. Thus the coarse to fine subspace correction can be omitted.

\subsection{The domain decomposition strategy}
The domain decomposition method (DDM) is another promising technique to deal with large-scale problems, which divides the large-scale problem into smaller problems for parallel computation.
In the following, we will apply the non-overlapping domain decomposition method to enable parallel computation.

Fig. \ref{4-color} displays the four-color decomposition on the finest layer and the second layer, respectively.
More specifically, we divide the basis function $\{\phi_j^i\}_{i=1}^{N_j}$ into four groups $\cup_{k=1}^4\{\phi_j^i:i\in I_k\}$
to reduce the dependency on the order of basis functions and improve the parallelism for subproblems on each layer, where $I_k$ contains the indexes with the same color.
This decomposition guarantees that neighboring patches in a 4-connected neighborhood are in different subsets. We can see that the support of the basic functions $\{\phi_j^i:i\in I_k\}$ are non-overlapping for each $k=1,2,3,4$, and the minimization of $F(u+c_j^i\phi_j^i)$ for $i\in I_k$ can be solved in parallel.
In particular, four subproblems are solved in consecutive order
\[\min_{\delta u\in V_j^{(k)}}F(u+\delta u),\quad \mbox{for} ~~k=1,2,3,4,\]
where $V_j^{(k)}=\mathrm{span}\{\phi_j^i:i\in I_k\}$ and $V_j=\sum_{k=1}^4V_j^{(k)}$. It is readily checked that
\begin{equation*}
\min_{\delta u\in V_j^{(k)}} F(u+\delta u)=\min_{\bm c_j\in \mathbb R^{|I_k|}}F(u+\sum_{i\in I_k}c_j^i\phi_j^i),
\end{equation*}
where $\bm c_j=(c_j^1,c_j^2,\cdots,c_j^{|I_k|})$ with $|I_k|$ being the total number of elements in $I_k$ and $N_j=\sum_{k=1}^4|I_k|$.

\begin{figure}[t]
  \centering
  \includegraphics[width=0.39\textwidth]{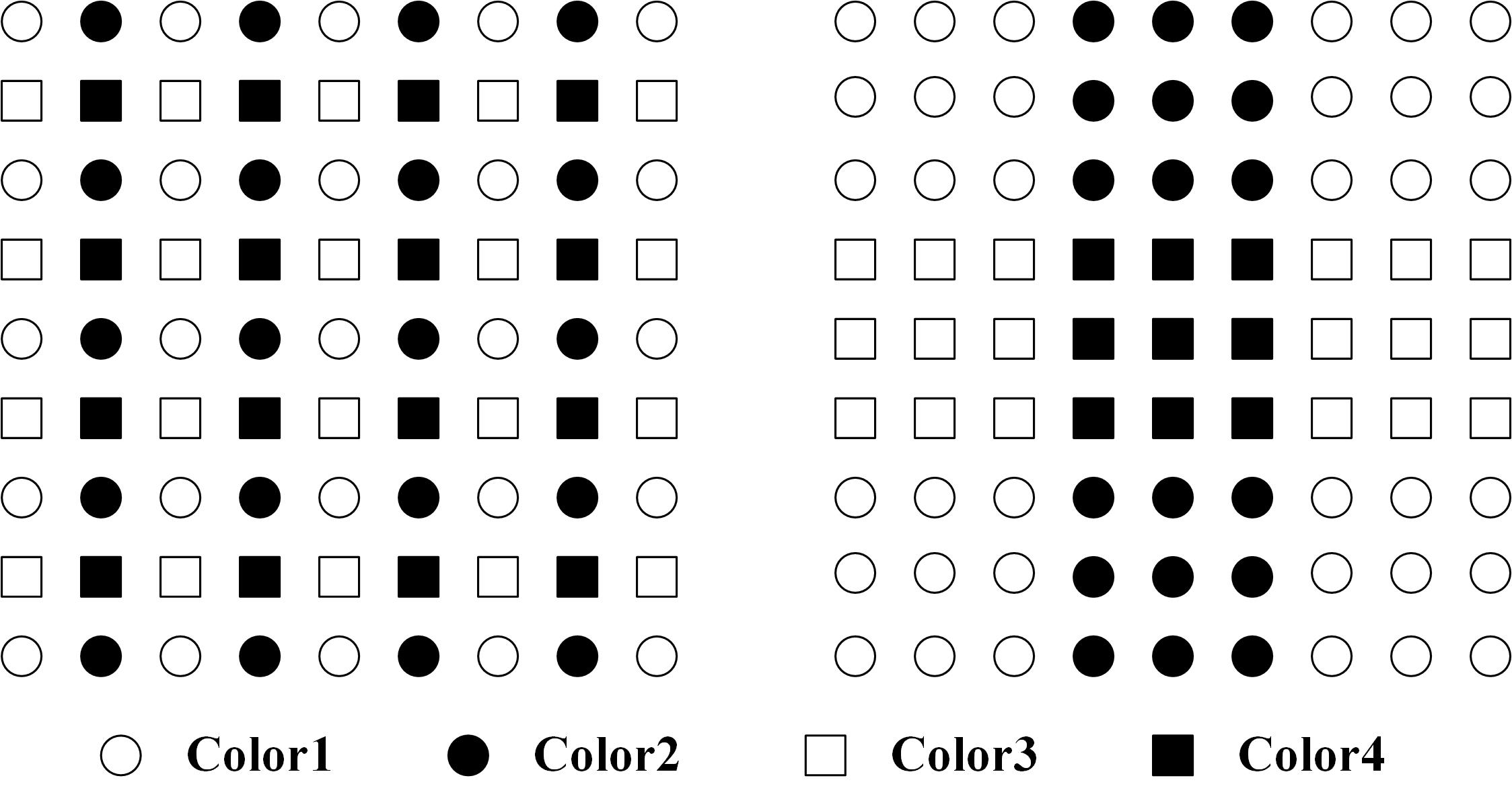}
    \vspace{-0.3cm}
  \caption{Illustration of 4-color domain decomposition for a domain of size $9\times 9$, where the subproblems of the same color can be computed in parallel.}\label{4-color}
  \vspace{-0.4cm}
\end{figure}
Then, the implementation of the algorithm to solve the mean curvature minimization problem \eqref{MCM} is summarized in Algorithm \ref{MC_algor}.
\begin{algorithm}[tbh]
\caption{The multi-grid method for solving the mean curvature minimization model \eqref{MCM}} \label{MC_algor}
\KwIn{ $u_0$, $f$, $\alpha$;}
\For {$l=1,2,\cdots$\tcc{Outer iterations}}
{
      \For{$j=1$ to $J$ \tcc{ From the fine layer to coarse layer}}
      {

       \For {$k=1$ to $4$, \tcc{the four-color DDM iterations}}
      {
$\bm c_j\!=\!\arg\min\limits_{\bm c\in \mathbb R^{|I_k|}} F(u+\sum_{i\in I_k}c_j^i\phi_j^i)$  \;

$u_{l+j+\frac{k}{4} }\leftarrow u_{l+j+\frac{k-1}{4}}+\sum\limits_{i\in I_k}c_j^i\phi_j^i$;

}
      }

      End till some stopping criterion meets;

}
\KwOut{$u_{l+1}$}
\end{algorithm}

\subsection{Complexity analysis}
In this section, the floating point operations (FLO) are used to evaluate the complexity of the algorithm.
The cost of our method is  mainly related to the following two parts.
The first is the distances to the tangent planes $d_\ell$, which is about  $4\times 2^{j+2}$ floating point operations (FLO) on each patch set $\tau_j^i$.
Therefore, the total cost on the coarse level $j$ is about $\mathcal{O}(4m_j n_j2^{j+2})\approx\mathcal{O}(4m n/(2^{j}-1))$.
The second cost is the interpolation operator $L_j$, which is about  $mn$ FLO.
The computation of $f^*$ is about $2(2^{j}-1)(2^{j}-1)m_jn_j\thickapprox2mn$.
Then the number of FLOs over all $J$ levels should be $\mathcal{O}(\sum_{j=1}^J(3+4/(2^{j}-1))mn)$.
 In particular, the computational complexity of our multi-grid algorithm with 3 layers is about $\mathcal{O}(12mn)$.

As a point of reference, the mean curvature method \cite{2017Augmented} contains three sub-problems, which can be solved by either the shrinkage operation or the FFT with the total computational complexity of $\mathcal{O}(6mn\log_2 mn+8mn)$.
The computational complexity of the total absolute mean curvature model \cite{Zhong2020image} and Euler's elastica model \cite{2011AA} can be obtained similarly, which are $\mathcal{O}(2mn\log_2 mn+3mn)$ and  $\mathcal{O}(6mn\log_2 mn+4mn)$, respectively. It is shown that our multi-grid  algorithm is with much low computational complexity compared to the existing high order methods.

\section{The Gaussian curvature minimization problem}\label{section2}

We can directly extend the proposed multi-grid method to solve the following Gaussian curvature minimization problem
\begin{equation}\label{GCM}
\min_{u}F(u)\!:=\! \sum_{k_1=1}^m\sum_{k_2=1}^n|K(u[k_1,k_2])|+ \frac {\alpha}{2}(u[k_1,k_2]-f[k_1,k_2])^2,
\end{equation}
where $K(u[k_1,k_2]) = \kappa_{\min}(u[k_1,k_2])\kappa_{\max}(u[k_1,k_2])$ is the Gaussian curvature over pixel $(k_1,k_2,u[k_1,k_2])$.
The one-dimensional problem for the Gaussian curvature minimization problem over the finest grid is given as follows
\begin{equation}\label{local_GC}
\min_{c\in \mathbb R}~~ |K(u[k_1,k_2]+c)| + \frac {\alpha}{2}\big(c-f^*[k_1,k_2]\big)^2.
\end{equation}
Similarly, we use the FBS scheme to solve the local problem \eqref{local_GC}. The only difference is how to estimate the minimizer of curvature regularization term.
Supposing that $\iota$ tangent planes are enumerated, we can estimate $\iota$ normal curvatures $\kappa_\ell,~\ell=1,2,\cdots,\iota$ in the local patch. According to differential geometry theory, the normal curvature can be calculated as the quotient of \emph{the second fundamental form} and \emph{the first fundamental form} as follows
\[\kappa_\ell=\frac{\mathrm{II}}{\mathrm{I}}\approx \frac{d_\ell}{ds^2},\]
where $d_\ell$ denotes the distance of a neighboring point to the tangent plane and $ds$ denotes the arc-length between the neighboring point and the central point.
Then Gaussian curvature can be defined by the two principal curvatures,
where the principal curvatures are obtained by
$\kappa_{\min}=\min\{\kappa_\ell(u[k_1,k_2])\},~\kappa_{\max}=\max\{\kappa_\ell(u[k_1,k_2])\}$
for $\ell=1,\cdots,\iota$. We further denote $\kappa^*(u[k_1,k_2])=\min\{|\kappa_{\min}|,|\kappa_{\max}|\}$ be the principal curvature with the smaller absolute value, and $T^{*}$ be the corresponding tangent plane.
Thereupon, we have the following proposition to estimate the analytical solution for Gaussian curvature minimization.
\begin{prop}\label{min GC}
The correction $c$ on each point $(k_1,k_2,u[k_1,k_2])\in\Omega$ to minimize the Gaussian curvature $|K(u[k_1,k_2]+c)|$ is given as $c=d^*$, where $d^*$ is the distance of  $(k_1,k_2,u[k_1,k_2])$ to the tangent plane $T^{*}$.
\end{prop}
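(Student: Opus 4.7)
The plan is to exploit the product structure $K = \kappa_{\min}\kappa_{\max}$ and the fact that $|K|\ge 0$: any correction that forces one of the two principal curvatures to vanish is automatically a global minimizer of the local objective $|K(u[k_1,k_2]+c)|$. So the proof reduces to showing that $c=d^*$ zeros out exactly one principal curvature after the update.

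First, I would track how the local geometric quantities respond to the scalar correction $c$. Because each tangent plane $T_\ell$ is constructed from neighboring pixel values (see Fig.~\ref{tangent} and \eqref{distance to tri}), and these are \emph{not} changed when we replace $u[k_1,k_2]$ by $u[k_1,k_2]+c$, the tangent planes $T_\ell$ themselves are invariant under the correction. Shifting only the central vertex vertically by $c$ therefore transforms each signed distance as $d_\ell\mapsto d_\ell-c$, and correspondingly
\begin{equation*}
\kappa_\ell \;\longmapsto\; \frac{d_\ell - c}{ds_\ell^{\,2}} \;=\; \kappa_\ell - \frac{c}{ds_\ell^{\,2}}.
\end{equation*}

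Second, I would pick $c = d^*$, the distance to the tangent plane $T^*$ realizing $\kappa^*=\min\{|\kappa_{\min}|,|\kappa_{\max}|\}$. This choice kills the corresponding normal curvature: $\kappa^*_{\mathrm{new}}=0$. Since the shift is \textbf{uniform} across all enumerated directions (every $\kappa_\ell$ is decreased by the same quantity $d^*/ds_\ell^{\,2}$, and in the isotropic $ds$ regime the ordering is exactly preserved), the new principal curvatures are the extrema of the shifted set. In particular, whichever of $\kappa_{\min}$ or $\kappa_{\max}$ was equal to $\kappa^*$ becomes zero after the shift, so that $K_{\mathrm{new}}=\kappa_{\min,\mathrm{new}}\cdot\kappa_{\max,\mathrm{new}}=0$, attaining the minimum of $|K|$.

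The main obstacle, and the step that really needs justification rather than routine calculation, is the order-preserving property of the shift: I must argue that $\kappa^*_{\mathrm{new}}=0$ is actually one of the two extrema (min or max) of the updated normal curvature list, not merely some intermediate value. This is exactly where the choice of the \emph{smaller-magnitude} principal curvature is used — because $\kappa^*\in\{\kappa_{\min},\kappa_{\max}\}$ is already extremal before the shift, and the shift is identical for every $\ell$, the same index remains extremal after the shift and now carries the value zero. I would close by remarking that among all candidate corrections $c=d_\ell$ that annihilate some normal curvature, $c=d^*$ has the smallest absolute value, which is desirable in the forward step \eqref{FOBOS step1} since it keeps the proximal penalty $\tfrac{1}{2\eta_t}(c-c_t)^2$ and, downstream, the fidelity penalty in \eqref{FOBOS step2} small.
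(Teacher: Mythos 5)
Your proposal is correct and takes essentially the same route as the paper: choose $c=d^*$ so that the updated point lies on the tangent plane $T^*$, whence one principal curvature vanishes, $|K(u[k_1,k_2]+d^*)|=0$, and this is the global minimum because $|K|\ge 0$. The paper states this in a single line; your additional argument that the annihilated normal curvature remains extremal after the shift (so that the product $\kappa_{\min}\kappa_{\max}$ is genuinely zero, a point the paper leaves implicit) is a welcome refinement rather than a different proof.
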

\begin{proof}
Since the point $(k_1,k_2,u[k_1,k_2]+d^*)$ is on the tangent plane w.r.t. the principle curvature, we have $0=\big|K(u[k_1,k_2]+d^*)\big|\leq \big|K(u[k_1,k_2])\big|$.
\hfill
\end{proof}
Then, we can use Algorithm \ref{sub_alg} to solve the patch problem \eqref{local_GC}, and both the multi-grid method and domain decomposition method can be applied to solve the Gaussian curvature minimization problem \eqref{GCM} without much effort. Therefore, we omit the details here.

\section{Numerical experiments}\label{section3}
In this section, we evaluate the performance of the proposed multi-grid algorithm on the image denoising problem.
The qualities of the denoised images are measured by both the Peak Signal to Noise Ratio (PSNR)
and the Structural Similarity Index Measure (SSIM). All of the experiments are implemented in a MATLAB R2016a environment on a desktop with an Intel Core i9 CPU at 3.3 GHz and 8 GB memory.

\begin{figure}[t]
  \centering
   \includegraphics[width=0.35\textwidth]{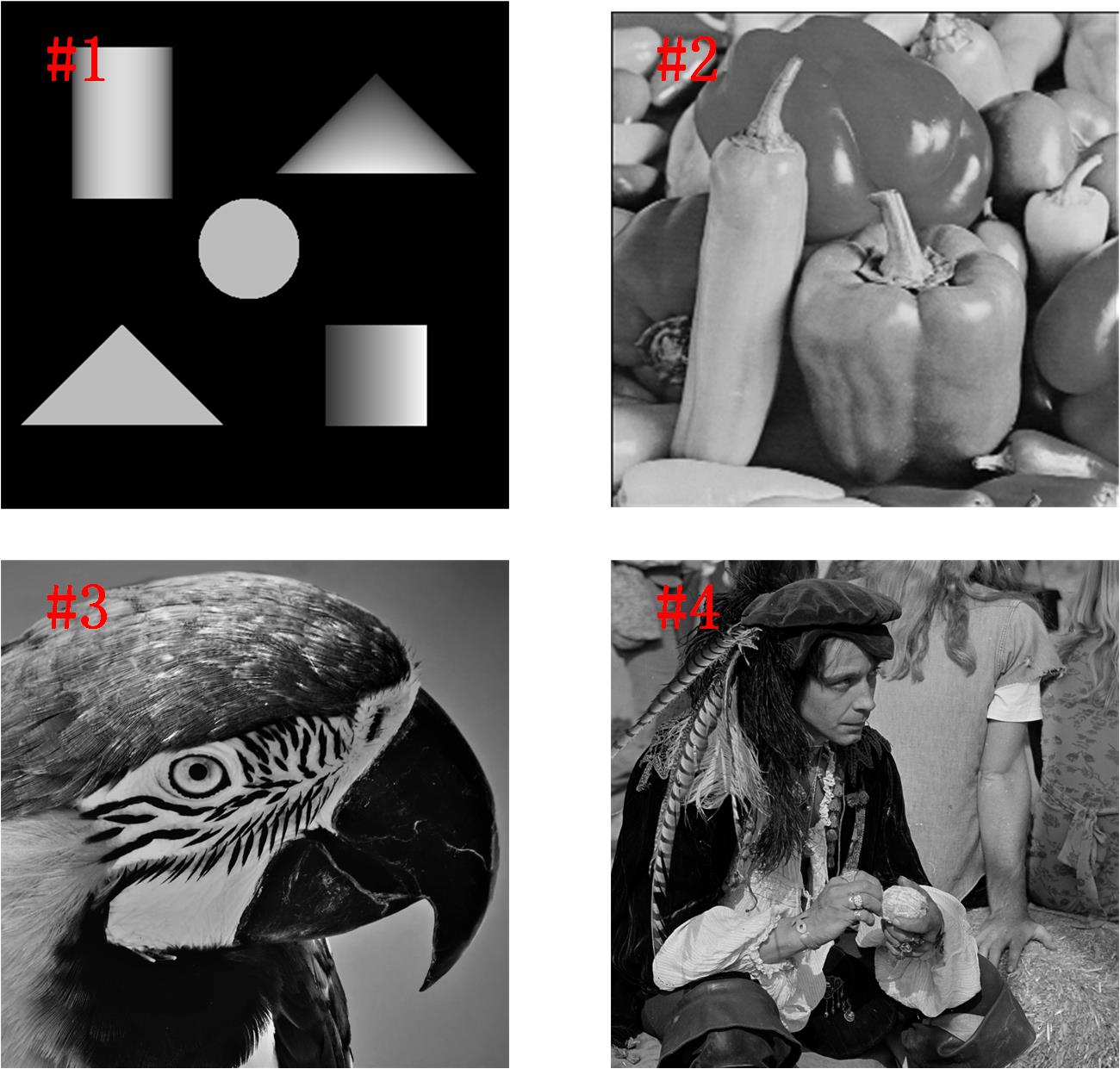}
  \caption{The test images used in the numerical experiments, where the image `Triangle' and `Peppers' are of size $512\times 512$, image `Parrot' and `Man' are of size $1024\times1024$.}\label{example}
\vspace{-0.3cm}
\end{figure}

\begin{table*}[!htbp]
 \centering
 \small
  \caption{The denoising results on test images to different numbers of layers for the noise level $\sigma=10$.}
\begin{tabular}{p{8pt}|p{3pt}|p{5pt}p{11pt}p{21pt}|p{5pt}p{11pt}p{21pt}|p{5pt}p{11pt}p{21pt}|p{5pt}p{11pt}p{21pt}|p{5pt}p{11pt}p{21pt}|p{5pt}p{11pt}p{21pt}}
  \hline
  \hline
  & &\multicolumn{9}{c|}{Mean curvature}&\multicolumn{9}{c}{Gaussian curvature}\\
  \hline
  &$\alpha$ &\multicolumn{3}{c|}{0.1}& \multicolumn{3}{c|}{0.06}&\multicolumn{3}{c|}{0.03}  &\multicolumn{3}{c|}{0.1}& \multicolumn{3}{c|}{0.06}&\multicolumn{3}{c}{0.03} \\
   \cline{2-20}
&$J$& {\scriptsize Iter}  & {\scriptsize CPU(s)} &{\scriptsize  Energy}  &  {\scriptsize Iter}  & {\scriptsize CPU(s)} &{\scriptsize  Energy}& {\scriptsize Iter}  & {\scriptsize CPU(s)} &{\scriptsize  Energy}  & {\scriptsize Iter}  & {\scriptsize CPU(s)} &{\scriptsize  Energy} & {\scriptsize Iter}  & {\scriptsize CPU(s)} &{\scriptsize  Energy} & {\scriptsize Iter}  & {\scriptsize CPU(s)} &{\scriptsize  Energy} \\
   \hline
    \multirow{6}[0]{*}{$\#1$}
   & 1 & 257  & 24.41 & 2.3158  & 378  & 38.01 & 2.4117 & 417 & 34.41 & 2.5074& 118   & 10.05  & 1.8541  & 197   & 15.47  & 2.1098  & 275   & 21.10  & 2.3178 \\
   & 2 & 77   & 8.96  & 2.3004  & 80   & 9.43  & 2.4161 & 114 & 13.21 & 2.5077& 32    & 4.84  & 1.8496  & 41    & 4.81  & 2.1064  & 40    & 6.72  & 2.2934 \\
   & 3 & 55   & \textbf{5.39}  & 2.2835  & 62   & \textbf{5.47}  & \textbf{2.4003} & 67  & \textbf{8.67}  & \textbf{2.5071}& 27    & \textbf{3.66}  & \textbf{1.8465}  & 35    & \textbf{4.07}  & \textbf{2.1002}  & 43    & \textbf{5.82}  & \textbf{2.2911}  \\
   & 4 & 57   & 8.74  & \textbf{2.2821}  & 65   & 8.98  & 2.4005 & 77  & 10.94 & 2.5078& 31    & 4.84  & 1.8471  & 30    & 4.80  & 2.1005  & 43    & 5.92  & 2.2972 \\
   & 5 & 61   & 10.11 & 2.2891  & 61   & 6.65  & 2.4008 & 68  & 9.42  & 2.5077& 31    & 4.68  & 1.8496  & 37    & 5.70  & 2.1005  & 41    & 6.26  & 2.2936 \\
   & 6  & 56  & 9.84  & 2.2875  & 60   & 7.91  & 2.4009 & 69  & 9.67  & 2.5072 & 30    & 4.80  & 1.8496  & 32    & 4.92  & 2.1006  & 37    & 5.68  & 2.2914 \\
         \hline
    \multirow{6}[0]{*}{$\#2$}
   & 1 &195   & 21.28 & 2.2833 & 318   & 32.16 & 2.4764 & 431   & 43.97 & 2.4703 & 120   & 10.43 & 1.8855 & 188   & 15.46 & 2.1717 & 254   & 20.82 & 2.3629 \\
   & 2 &80    & 9.23  & 2.2675 & 93    & 11.65 & 2.4754 & 107   & \textbf{12.91} & 2.4772 & 53    & 6.69  & 1.8776 & 73    & 9.33  & 2.1529 & 86    & 10.81 & 2.3661\\
   & 3 &56    & \textbf{7.56}  & \textbf{2.2637} & 64    & \textbf{9.37}  &\textbf{2.4557} & 95    & 13.64 & \textbf{2.4721}  & 47    & \textbf{6.65}  & \textbf{1.8641} & 59    & \textbf{8.14}  & \textbf{2.1512} & 76    & 10.46 & 2.3635\\
   & 4 &50    & 7.98  & 2.2639 & 66    & 10.54 & 2.4558 & 106   & 15.81 & 2.4722 & 46    & 6.88  & 1.8691 & 65    & 9.81  & 2.1513 & 71    & \textbf{10.45} & \textbf{2.3625} \\
   & 5 &52    & 8.03  & 2.2641 & 60    & 9.91  & 2.4595 & 98    & 14.45 & 2.4761 & 46    & 6.95  & 1.8655 & 56    & 9.21  & 2.1517 & 69    & 10.63 & 2.3646\\
   & 6 &60    & 10.45 & 2.2641 & 64    & 10.75 & 2.4578 & 106   & 16.61 & 2.4737 & 50    & 7.56  & 1.8684 & 56    & 9.86  & 2.1514 & 69    & 10.55 & 2.3642 \\
              \hline
    \multirow{6}[0]{*}{$\#3$}
    & 1  &159    & 58.76 & 1.1672 & 227   & 85.18 & 1.3693 & 293  & 100.2 & 1.5325& 124   & 46.99  & 8.7365  & 184   & 68.37  & 1.1333  & 246   & 90.40  & 1.3789  \\
    & 2  &42     & 23.66 & 1.1512 & 54    & 32.49 & 1.3647 & 54   & 29.81 & 1.5339 & 46    & 25.92  & 8.7362  & 57    & \textbf{30.69}  & 1.1205  & 63    & 35.93  & 1.3769  \\
    & 3  &38     & \textbf{23.24} & \textbf{1.1512}& 51    & \textbf{26.86} & \textbf{1.3646} & 59   & \textbf{28.71} & \textbf{1.5155}& 44    & 27.30  & \textbf{8.7352}  & 50    & 31.58  & \textbf{1.1024}  & 54    & \textbf{33.96}  & 1.3536 \\
    & 4  &41     & 26.67 & 1.1529 & 43    & 29.01 & 1.3661 & 47   & 31.78 & 1.5174& 45    & \textbf{23.91}  & 8.7358  & 58    & 38.63  & 1.1094  & 84    & 57.77  & \textbf{1.3529}  \\
    & 5  &40     & 27.75 & 1.1522 & 42    & 29.35 & 1.3671 & 49   & 31.77 & 1.5220  & 44    & 29.23  & 8.7353  & 56    & 38.10  & 1.1028  & 73    & 51.80  & 1.3560  \\
    & 6  &42     & 29.72 & 1.1548 & 42    & 27.87 & 1.3667 & 51   & 33.67 & 1.5202 & 44    & 29.56  & 8.7358  & 52    & 35.79  & 1.1038  & 77    & 54.39  & 1.3592  \\
           \hline
    \multirow{6}[0]{*}{$\#4$}
   & 1  & 177    & 65.96 & 1.2591 & 273   & 98.02 & 1.5321 & 350    & 124.4 & 1.7344 & 127   & 47.95  & 9.4909  & 186   & 69.20  & 1.2658  & 245   & 89.48  & 1.3643\\
   & 2  & 54     & 25.59 & 1.2585 & 57    & 30.33 & 1.5366 & 59     & 30.44 & 1.7259  & 53    & \textbf{31.34}  & 9.4814  & 60    & 36.25  & 1.2664  & 73    & 40.80  & 1.3463   \\
   & 3  & 43     & 26.36 & 1.2588 & 50    & \textbf{26.68} & 1.5371 & 53     & \textbf{29.32} & \textbf{1.7233}& 49    & 31.75  & 9.4709  & 56    & \textbf{34.67}  & \textbf{1.2611}  & 60    & \textbf{37.00}  & \textbf{1.3406}  \\
   & 4  & 38     & \textbf{24.23} & \textbf{1.2523} & 41    & 28.46 & \textbf{1.5320} & 57     & 30.74 & 1.7234& 47    & 33.21  & \textbf{9.4702}  & 55    & 35.73  & 1.2618  & 73    & 49.36  & 1.3412  \\
   & 5  & 45     & 28.58 & 1.2599 & 45    & 28.69 & 1.5364 & 51     & 32.33 & 1.7237 & 41    & 32.50  & 9.4716  & 52    & 35.68  & 1.2616  & 52    & 45.34  & 1.3415 \\
   & 6  & 47     & 30.81 & 1.2602 & 47    & 29.99 & 1.5323 & 43     & 31.45 & 1.7239  & 49    & 35.07  & 9.4724  & 54    & 38.92  & 1.2615  & 50    & 44.06  & 1.3411\\
\hline
\hline
\end{tabular}
\label{table1}
\end{table*}

\subsection{The effect of the multi-grid method}\label{test_level}
The choice of the maximal number of layers is important in the proposed multi-grid method, which affects the numerical performance of the curvature minimizations. We implement both multi-grid mean curvature (MGMC) and multi-grid Gaussian curvature (MGGC) methods on test images shown in Fig. \ref{example}, which are corrupted by white Gaussian noise with zero mean and standard deviation $\sigma = 10$. In the experiment, the regularization parameter varies as $\alpha\in\{0.1,0.06,0.03\}$ and the number of layers changes as $J\in\{1,2,3,4,5,6\}$. Both MGMC and MGGC are stopped when the following relative error of the numerical energy is smaller than the predefined tolerance
\begin{equation}\label{reerr}
RelErr\big(F(u_{l+1})\big)=|F(u_{l+1})-F(u_{l})|/|F(u_{l+1})|\leq\epsilon,
\end{equation}
which is set as $\epsilon=10^{-6}$.

\begin{table*}[htbp]
  \centering
  \caption{The compared result of the number of iterations, PSNR, CPU(s), and CPU ratio for different size images with V-cycle and half of the V-cycle (denoted by H-V-cycle).}
\small
    \begin{tabular}{p{44pt}|p{20pt}|c|c|c|c|c|c|c|c|c}
    \hline
    \hline
        \multirow{2}[0]{*}{Method}  &  \multirow{2}[0]{*}{Sizes}     &    \multirow{2}[0]{*}{N}   & \multicolumn{4}{c|}{$\#1$}    & \multicolumn{4}{c}{$\#3$} \\
          \cline{4-11}
          & & & \multicolumn{1}{c|}{$\#$} & \multicolumn{1}{c|}{PSNR} & \multicolumn{1}{c|}{CPU} & \multicolumn{1}{c|}{CPU Ratio} & \multicolumn{1}{c|}{$\#$} & \multicolumn{1}{c|}{PSNR} & \multicolumn{1}{c|}{CPU} & \multicolumn{1}{c}{CPU Ratio} \\
          \hline
    \multirow{5}[0]{*}{H-V-cycle} & 128   & 16384  & 85    & 32.63 & 0.72  & $-$ & 78    & 22.93 & 0.84  &  $-$    \\
          & 256   & 65536  & 65    & 35.19 & 2.27  & 3.2  & 61    & 25.43 & 2.01  & 2.4 \\
          & 512   & 262144 & 65    & 38.91 & 9.14  & 4.0& 57    & 28.23 & 8.21  & 4.0 \\
          & 1024  & 1048576 & 63    & 40.65 & 36.94 & 4.0 & 48    & 32.85 & 30.34 & 3.8  \\
          & 2048  & 4194304 & 62    & 41.19 & 148.50& 4.0& 47    & 34.15 & 122.45 & 4.0 \\
          \hline
    \multirow{5}[0]{*}{V-cycle} & 128   & 16384      & 63    & 32.66 & 1.01  &$-$ & 64    & 22.94 & 0.98  & $-$  \\
          & 256   & 65536   & 63    & 35.12 & 2.83  & 2.8 & 58    & 25.42 & 2.68  & 2.7 \\
          & 512   & 262144     & 62    & 38.91 & 11.82 & 4.0& 55    & 28.18 & 10.25 & 4.0  \\
          & 1024  & 1048576     & 60    & 40.68 & 44.01 & 4.0 & 47    & 32.85 & 38.31 & 3.8 \\
          & 2048  & 4194304     & 59    & 41.19 & 179.75 & 4.0& 45    & 34.15 & 152.68 & 4.0  \\
     \hline
     \hline
    \end{tabular}%
   \label{linear}%
\end{table*}%

Table \ref{table1} displays the number of iterations, CPU time, and numerical energies for different combinations of the number of grid layers $J$ and regularization parameter $\alpha$.
As can be seen, both MGMC and MGGC converge to similar numerical energies for a fixed value of $\alpha$. Besides, we also conclude the following two observations
\begin{itemize}
\item[$\bullet$]Introducing the coarse layers can greatly reduce the outer iterations. Much CPU time is saved by increasing the maximum layers from $J=1$ to $J=3$. However, the CPU time increases as $J$ keeps increasing to $J=6$ for all examples.
\item[$\bullet$]The advantage of the multi-grid method is dominant when the regularization parameter $\alpha$ becomes smaller and smaller. The computational time of the single layer method is almost doubled as $\alpha$ decreases from $\alpha=0.1$ to $\alpha=0.03$, while the growth of the multi-grid method is much smaller.
\end{itemize}
Thus, the number of layers is fixed as $J=3$ for both MGMC and MGGC  in the following experiments.

\subsection{Complexity discussion}
We verify the linear convergence of our multi-grid method on both images `Triangle' and `Parrot', the size of which varies as $\{128\times 128,~256\times 256,~512\times 512,~1024\times 1024,~2048\times 2048\}$.
All images are corrupted by Gaussian noises with zero mean and standard deviation $\sigma = 10$.
We set the regularization parameter as $\alpha=0.06$ and the error tolerance as $\epsilon=10^{-5}$.
We implement both V-cycle (fine-to-coarse-to-fine) and half V-cycle (fine-to-coarse).
The comparison results of the number of iterations, PSNR, CPU(s), and CPU ratio are recorded in Table \ref{linear}.
By CPU ratio, it can be checked that the computational time of both the V-cycle and half of the V-cycle is proportional to the size of image $N$, and of complexity $\mathcal{O}(N)$.
For different sizes of images,  half of the V-cycle algorithm always consumes fewer costs than the V-cycle one, especially for images of size $2048\times2048$ down by a sixth, without sacrificing any accuracy.
Therefore, the fine-to-coarse structure is used in our experiments.

\begin{figure*}[t]
      \centering
      \subfloat{
			\includegraphics[width=0.13\linewidth]{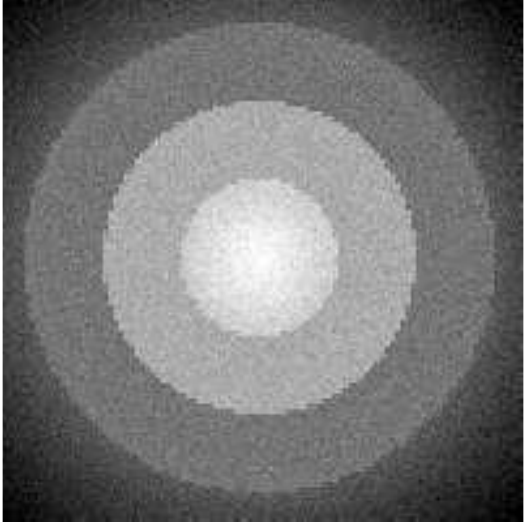}}
      \subfloat{
			\includegraphics[width=0.133\linewidth]{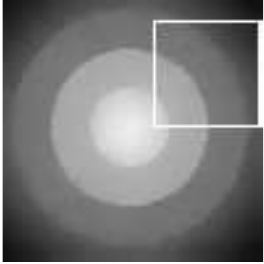}}
      \subfloat{
            \includegraphics[width=0.13\linewidth]{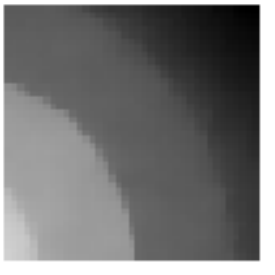}}
     \subfloat{
			\includegraphics[width=0.13\linewidth]{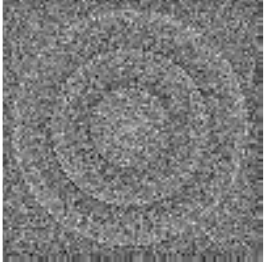}}
      \subfloat{
            \includegraphics[width=0.13\linewidth]{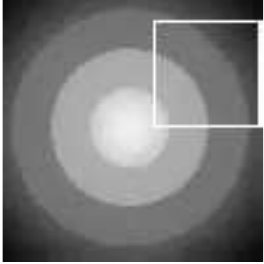}}
      \subfloat{
            \includegraphics[width=0.133\linewidth]{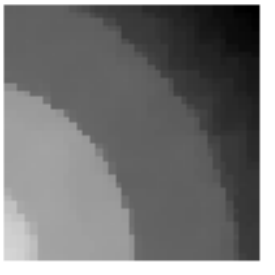}}
      \subfloat{
			\includegraphics[width=0.13\linewidth]{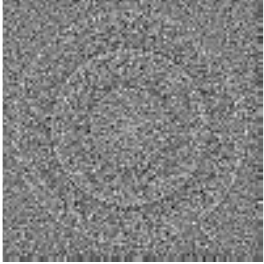}}\\
      \vspace{-0.2cm}
      \setcounter{subfigure}{0}
      \subfloat[\footnotesize{Noisy image}]{
			\includegraphics[width=0.13\linewidth]{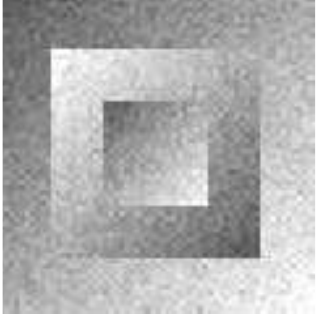}}
      \subfloat[\footnotesize{Euler's elastica}]{
			\includegraphics[width=0.13\linewidth]{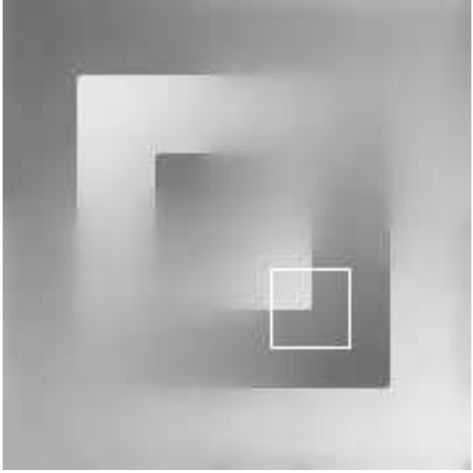}}
      \subfloat[\footnotesize {Zoom region}]{
              \includegraphics[width=0.13\linewidth]{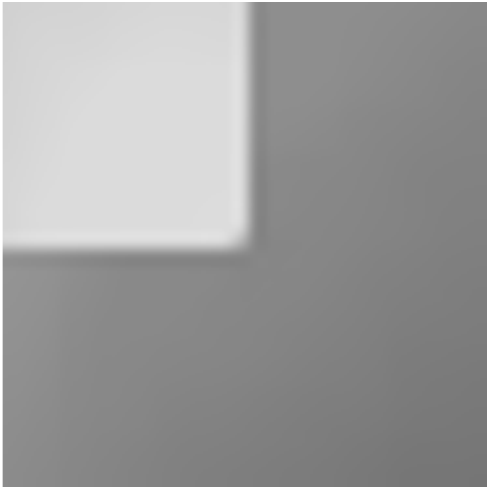}}
      \subfloat[\footnotesize{Residual image}]{
			\includegraphics[width=0.13\linewidth]{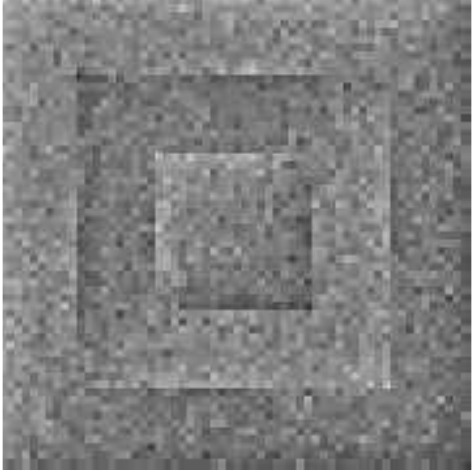}}
      \subfloat[\footnotesize{MGGC}]{
                 \includegraphics[width=0.13\linewidth]{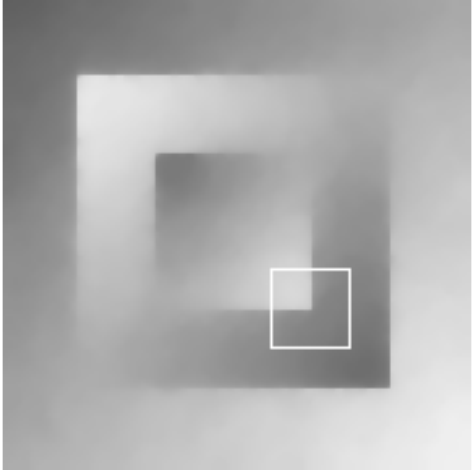}}
      \subfloat[\footnotesize{Zoom region}]{
			  \includegraphics[width=0.13\linewidth]{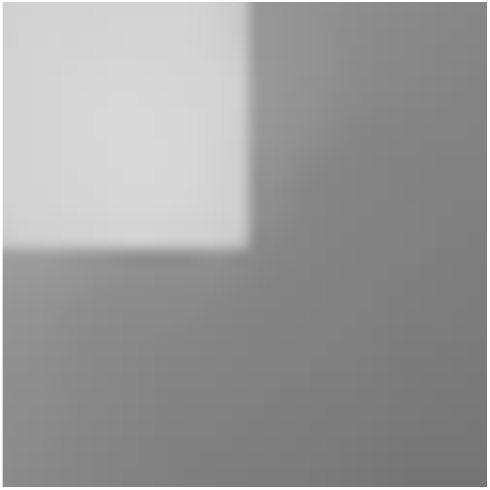}}
      \subfloat[\footnotesize{Residual image}]{
			\includegraphics[width=0.13\linewidth]{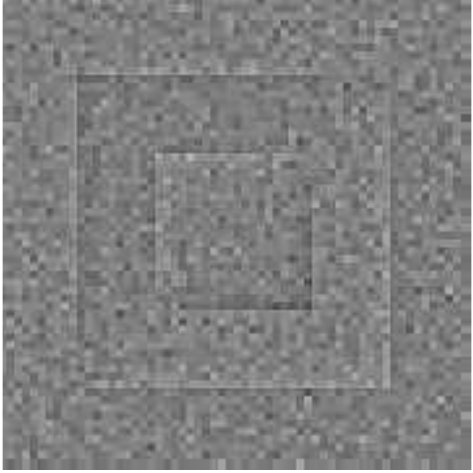}}\\
	\caption{The denoising results of the smooth images A1  and A2 (from top to bottom) obtained by the Euler`s elastica \cite{2011AA} and our Gaussian curvature model, where we set the regularization parameter as $\alpha=0.06$ and the error tolerance as $\epsilon=10^{-4}$.}
	\label{smoothimages}
\end{figure*}

\begin{figure*}[t]
      \centering
      \subfloat[Clean image]{
			\includegraphics[width=0.15\linewidth]{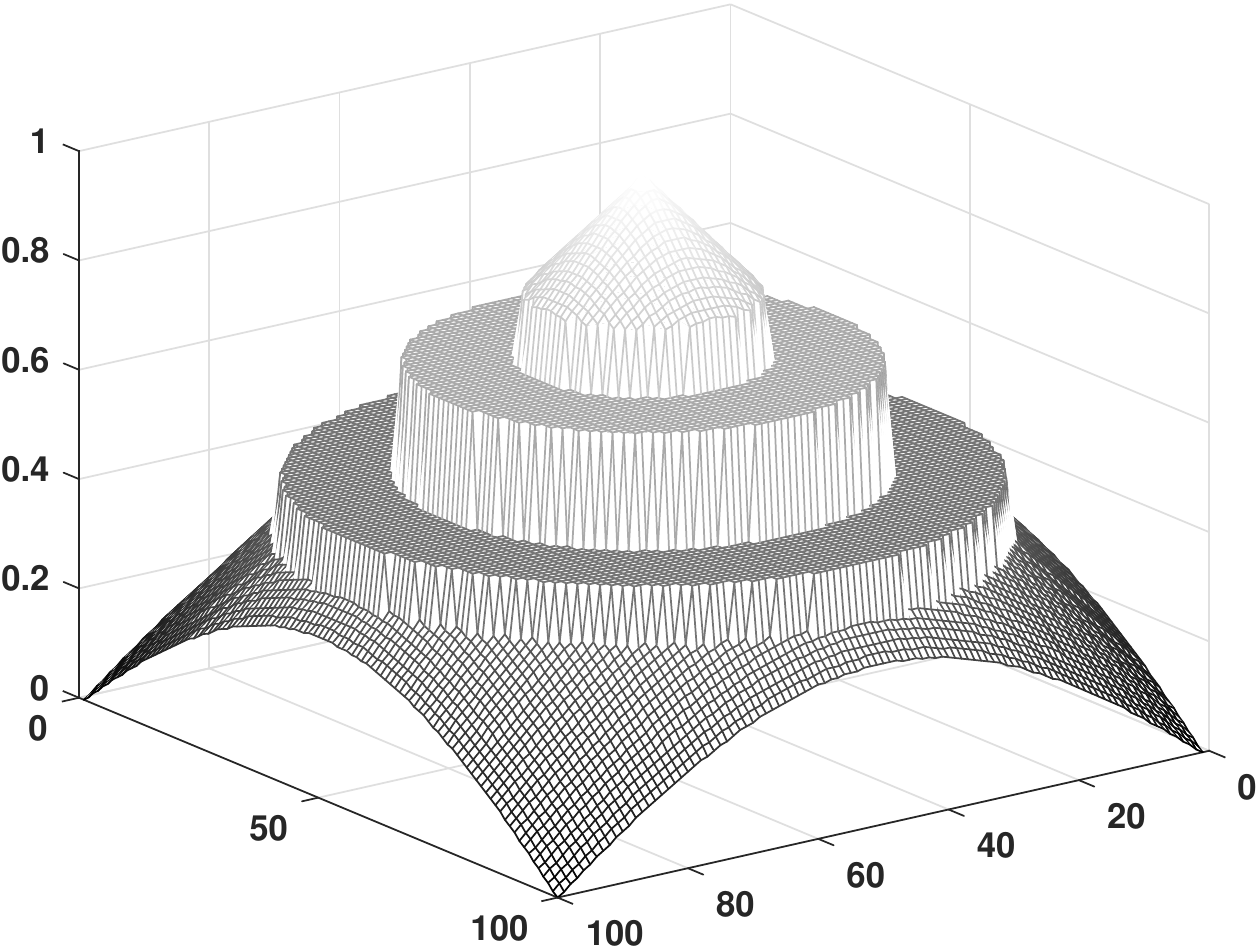}}
      \subfloat[Euler's elastica]{
			\includegraphics[width=0.15\linewidth]{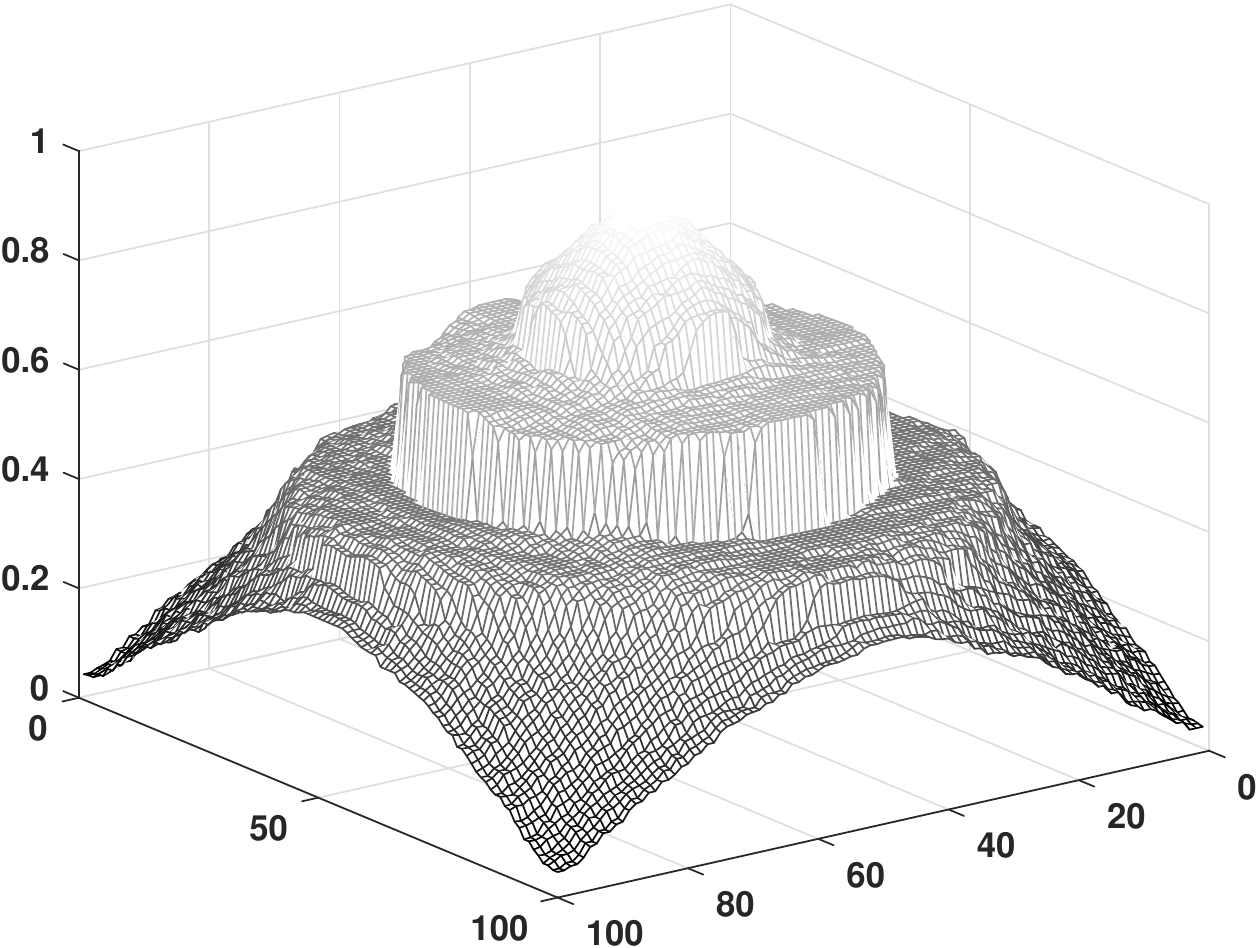}}
      \subfloat[MGGC]{
            \includegraphics[width=0.15\linewidth]{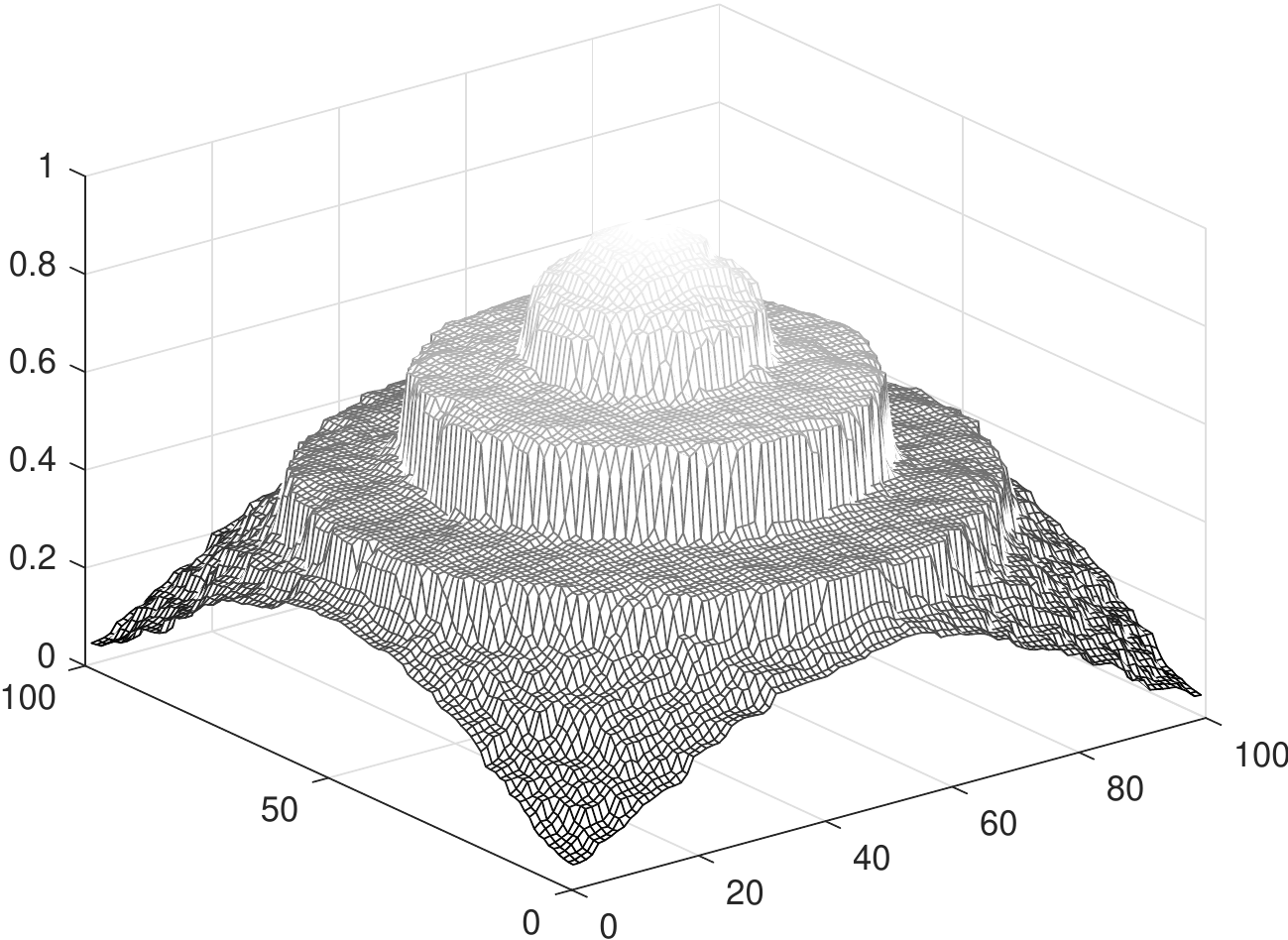}}
      \subfloat[Clean image]{
			\includegraphics[width=0.15\linewidth]{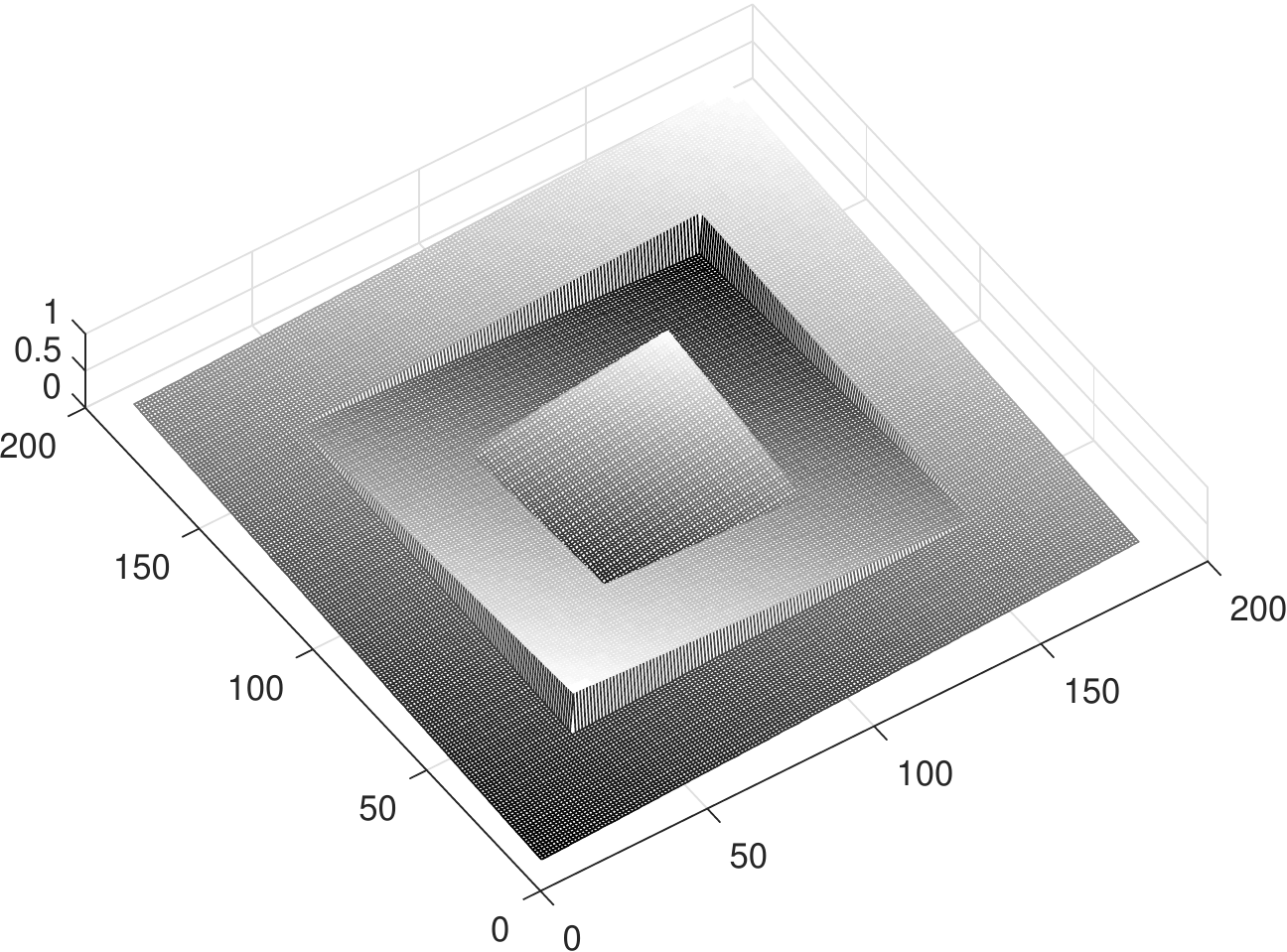}}
      \subfloat[Euler's elastica]{
			\includegraphics[width=0.15\linewidth]{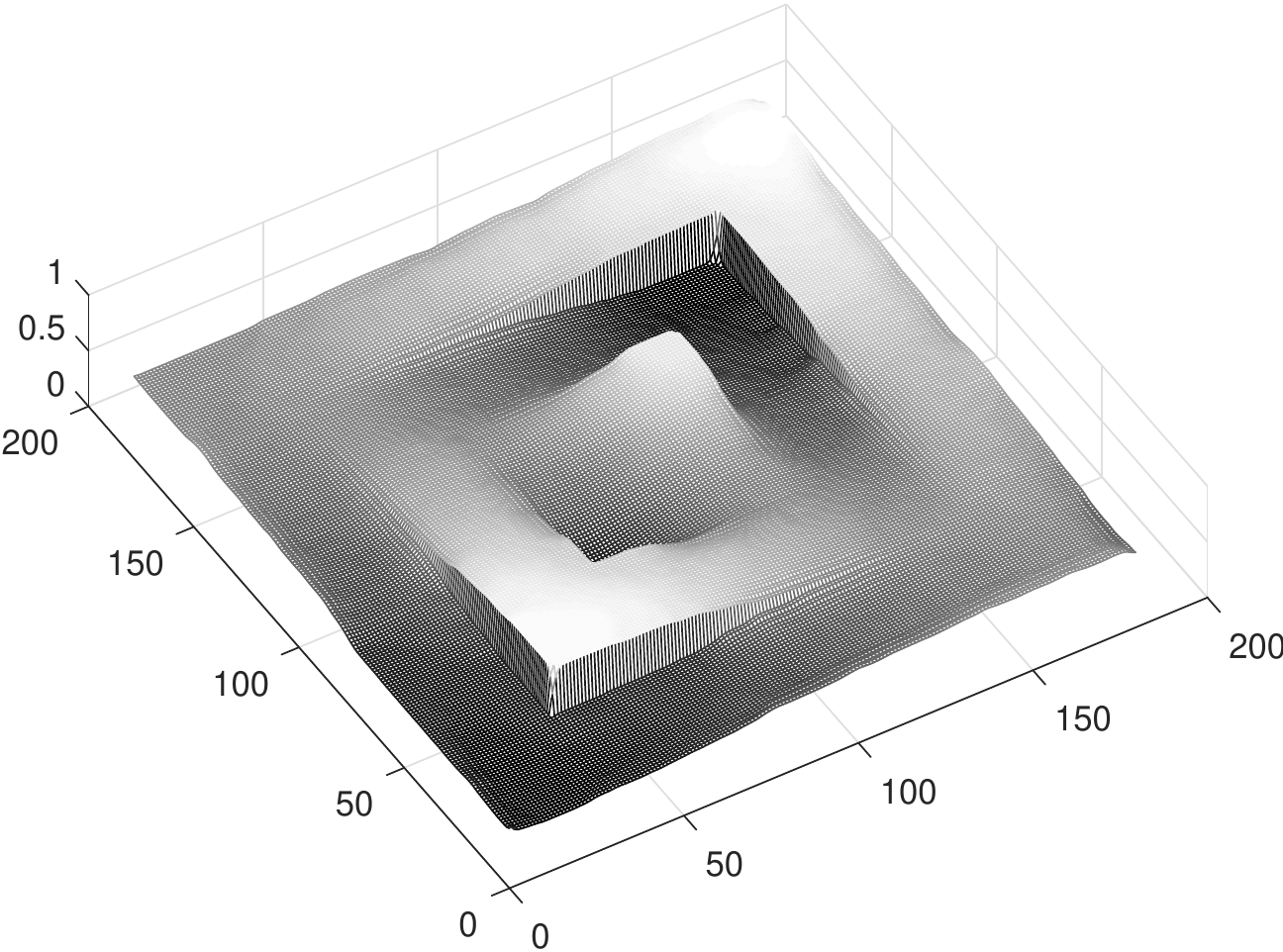}}
      \subfloat[MGGC]{
              \includegraphics[width=0.15\linewidth]{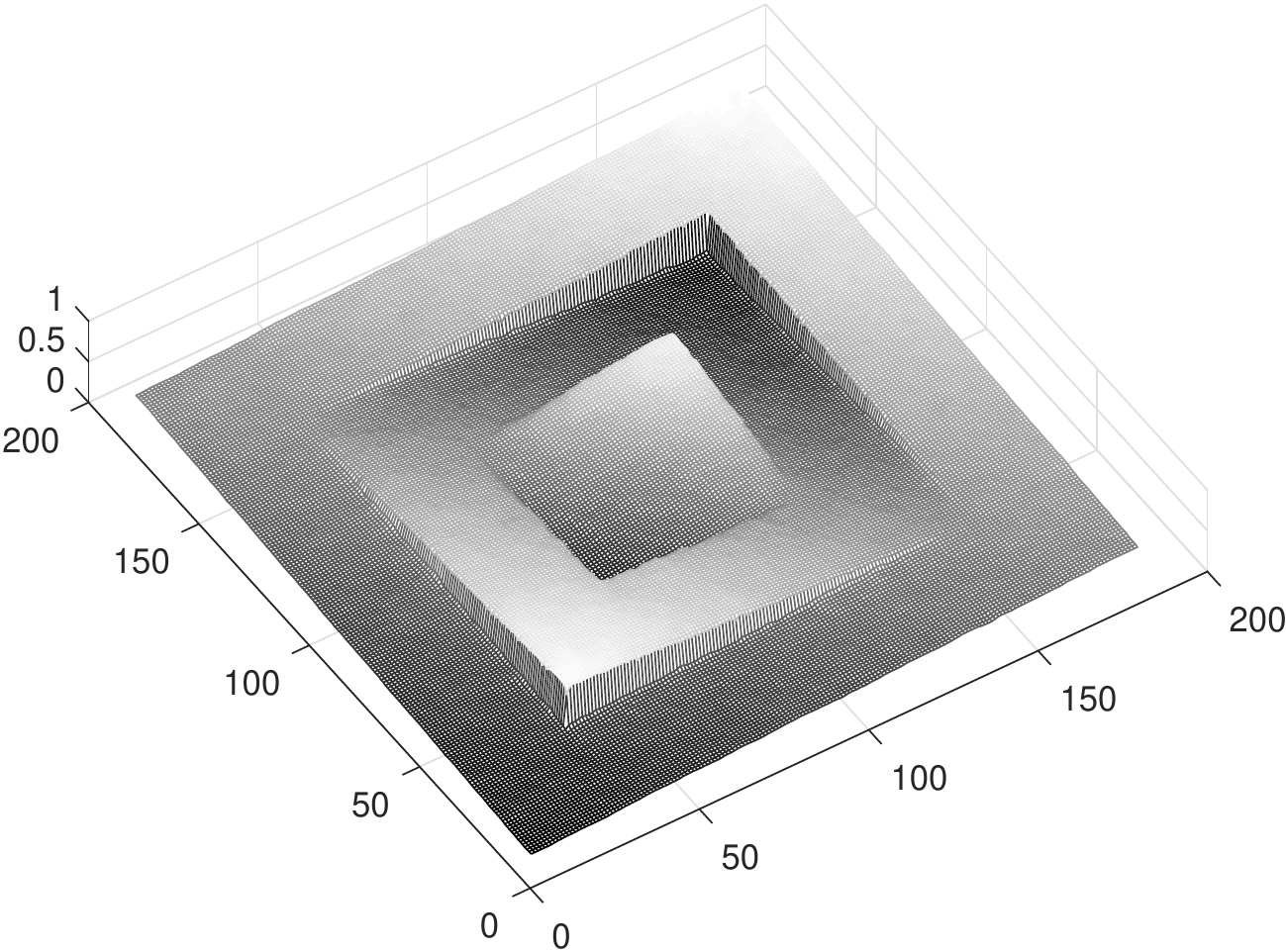}}
	\caption{The image surfaces of the clean images and restoration images were obtained by Euler's elastic regularization model \cite{2011AA} and  our Gaussian curvature regularization model.}
	\label{smoothsurface}
\end{figure*}

\begin{table*}[!htbp]
  \centering
\footnotesize
  \caption{The comparison of image denoising between the multi-grid method and mean curvature filter for the noise level $\sigma=10$.}
    \begin{tabular}{p{20pt}|p{8pt}|p{8pt}p{15pt}p{16pt}|p{8pt}p{15pt}p{16pt}|p{8pt}p{15pt}p{16pt}|p{5pt}p{5pt}|p{5pt}p{5pt}|p{5pt}p{5pt}}
    \hline
    \hline
  \multicolumn{2}{c|}{Curvature}&\multicolumn{9}{c|}{Mean curvature}&\multicolumn{6}{c}{Gaussian curvature}\\
     \hline
           & \multirow{2}[0]{*}{Im} & \multicolumn{3}{c|}{$\alpha$=0.1} & \multicolumn{3}{c|}{$\alpha$=0.06} & \multicolumn{3}{c|}{$\alpha$=0.03}& \multicolumn{2}{c|}{$\alpha$=0.1} & \multicolumn{2}{c|}{$\alpha$=0.06} & \multicolumn{2}{c}{$\alpha$=0.03} \\
          \cline{3-17}
          &       & \multicolumn{1}{l}{\scriptsize{CFMC}} & \multicolumn{1}{c}{\scriptsize{MG}}&\multicolumn{1}{c|}{\scriptsize{MGMC}} & \multicolumn{1}{l}{\scriptsize{CFMC}} & \multicolumn{1}{c}{\scriptsize{MG}}&\multicolumn{1}{c|}{\scriptsize{MGMC}} & \multicolumn{1}{l}{\scriptsize{CFMC}} & \multicolumn{1}{c}{\scriptsize{MG}}&\multicolumn{1}{c|}{\scriptsize{MGMC}} & \multicolumn{1}{l}{\scriptsize{CFGC}}&\multicolumn{1}{c|}{\scriptsize{MGGC}}& \multicolumn{1}{l}{\scriptsize{CFGC}}&\multicolumn{1}{c|}{\scriptsize{MGGC}}& \multicolumn{1}{l}{\scriptsize{CFGC}}&\multicolumn{1}{c}{\scriptsize{MGGC}}\\
          \hline
 \multirow{4}[0]{*}{\footnotesize{PSNR}}
          & $\#1$  & 40.43 &41.81 &\textbf{42.24} & 40.88 &42.83 &\textbf{43.01} & 40.82&\textbf{42.61} &42.58 & 38.15& \textbf{39.84}    & 39.01 & \textbf{41.61}  & 38.96& \textbf{41.09}\\
          & $\#2$  & 33.54& 35.42&\textbf{35.49}& 33.67 &35.57 &\textbf{35.61}  & 33.24&35.39 &\textbf{35.52} & 33.67  & \textbf{35.64}  & 33.87& \textbf{35.85} & 33.19  & \textbf{35.41}\\
          & $\#3$  & 30.19&32.17 &\textbf{32.23}& 30.21  & 32.84&\textbf{32.85}  & 29.67&32.12 &\textbf{32.24}& 30.52& \textbf{32.91}   & 31.89 & \textbf{33.92}  & 30.44& \textbf{31.83} \\
          & $\#4$    & 30.17&\textbf{32.51} &32.42  & 30.26& 32.62&\textbf{32.68}  & 29.46& 31.37&\textbf{32.25}& 29.49 & \textbf{31.66}  & 29.84 & \textbf{32.23}  & 29.41 & \textbf{30.44} \\
                 \hline
    \multirow{4}[0]{*}{\footnotesize{Iter}}
          & $\#1$    & 118&6 &55  & 214& 8 &62 & 346&8 &67 & 154 & 45  & 253 & 55  & 363& 73  \\
          & $\#2$  &125 &7 &56  & 220&8  & 64 & 325&8 &95 &114 & 47   & 268& 59  & 335& 76 \\
          & $\#3$  & 134 &5 &38 & 199&6  & 51& 362 &6 &59& 124& 44  & 191 & 50  & 351 & 54\\
          & $\#4$    & 135&6 &43  & 208& 8& 50& 382&8 &53 & 124 & 49 & 186 & 56  & 350 &60 \\
       \hline
    \multirow{4}[0]{*}{\footnotesize{CPU(s)}}
          & $\#1$    & 9.21&110.25&\textbf{5.39}   & 10.15& 124.31 &\textbf{5.47}  & 17.21& 148.01&\textbf{8.67} & 9.31  & \textbf{3.66}  & 13.25 & \textbf{4.07}  & 20.13& \textbf{5.82}  \\
          & $\#2$  & 10.24 &116.07 &\textbf{7.56}  & 13.21&125.48  & \textbf{9.37} & 19.62&130.24 &\textbf{13.64}& 9.12 & \textbf{6.65}   & 14.13& \textbf{8.14}  & 16.12 & \textbf{10.46} \\
          & $\#3$  & 27.15 &475.21 &\textbf{23.24} & 45.15&482.68  & \textbf{26.86}& 51.64 &499.21 &\textbf{28.71}& 34.41 & \textbf{27.30}  & 47.41 & \textbf{30.58}  & 80.46 & \textbf{33.96}\\
          & $\#4$    & 28.32& 463.93&\textbf{26.36}  & 50.87&550.68 & \textbf{28.68}& 60.85& 562.41&\textbf{29.32} & 41.12 & \textbf{31.75}  & 55.14 & \textbf{34.67}  & 67.12 & \textbf{37.00}\\
        \hline
   \multirow{4}[0]{*}{\footnotesize {Energy}}
          & $\#1$    & 2.4556 &2.2946 &\textbf{2.2835}  & 2.4871&2.4087 & \textbf{2.4003} & 2.5577&\textbf{2.4982} &2.5071 & 1.9186  & \textbf{1.8465}   & 2.1904 & \textbf{2.1009}  & 2.3539 & \textbf{2.2811} \\
          & $\#2$  & 2.3478&2.2746 &\textbf{2.2639}  & 2.5609&2.4615  & \textbf{2.4557}& 2.5502 &2.4821 &\textbf{2.4721}& 1.9261   & \textbf{1.8641} & 2.2485& \textbf{2.1510}  & 2.5352& \textbf{2.3635} \\
          & $\#3$  & 1.2409 &1.1588 &\textbf{1.1512} & 1.4527&1.3698 & \textbf{1.3646} & 1.6183  &1.5213 &\textbf{1.5155}& 8.8509 & \textbf{8.7352}   & 1.2284& \textbf{1.1064}   & 1.4294 & \textbf{1.3536}\\
          & $\#4$    & 1.3415 &1.2598 &\textbf{1.2588}   & 1.6692&1.5435 & \textbf{1.5371} & 1.8582&1.7998 &\textbf{1.7259}  & 9.5501 & \textbf{9.4709}    & 1.3216 & \textbf{1.2615}  & 1.4164& \textbf{1.3406} \\
        \hline
        \hline
  \end{tabular}%
  \label{table2}%
\end{table*}%

\subsection{Properties of curvature regularization}
Both mean curvature and Gaussian curvature are well-known for their abilities in preserving image contrast and structural features \cite{Zhu2012image}\cite{2015Image}. Now we use our Gaussian curvature regularization model as an example and compare it with the Euler`s elastica regularization model on two synthetic images. As shown in figure \ref{smoothimages},
both images are corrupted by Gaussian noise with zero mean and standard deviation $\sigma = 20$.
The restoration images demonstrate that our model outperforms Euler`s elastica \cite{2011AA} in preserving edges and corners. Moreover, the residual images obtained by Euler's elastica also contain more image information than ours, which confirms Gaussian curvature regularization is better at maintaining image contrast.
In addition, we display the image surface plots of clean images and restored images of Euler's elastica and MGGC in Figure \ref{smoothsurface},
where our Gaussian curvature regularization model effectively keeps the sharp corners and jumps.

\subsection{Comparison with curvature filters}\label{test_speed}
In what follows, we verify the advantages of the proposed multi-grid method by comparing it with the multi-grid method in \cite{2010Multigrid} and curvature filter \cite{Gong2019mean}.
Note that the domain decomposition method has been applied to the curvature filter for a fair comparison. We degrade the test images in Fig. \ref{example} by the white Gaussian noises with zero mean and standard deviation $\sigma=10$.
The regularization parameter $\alpha$ are set as $\alpha\in\{0.1~0.06~0.03\}$ and error tolerance is fixed as $\epsilon=10^{-6}$ for all methods.
There are no other parameters for the mean curvature filter (CFMC) and Gaussian curvature filter (CFGC), where both methods are solved by gradient flow as presented in \cite{Gong2019mean}. The parameters of multi-grid method \cite{2010Multigrid} (denoted by MG) are set as: the total number of iterations is $10$, the maximal level is $3$, the stopping condition is set as \eqref{reerr}, and all other parameters are set as suggested by the paper.

Table \ref{table2} records the PSNR, the number of iterations, CPU time, and the numerical energies obtained by different approaches.
As can be seen, our method always achieves higher PSNR and smaller energies than the curvature filter, while providing better or similar PSNR as the MG method.
More importantly, much CPU time can be saved by our fine-to-coarse strategy, especially for images of large scales and regularization parameters.
We notice that multi-grid method \cite{2010Multigrid} is very time consuming for solving the high-order PDEs.
Obviously, our multi-grid method can well balance efficiency and effectiveness.

More than that, we evaluate and compare the performance of the multi-grid method and curvature filter on images corrupted by different noise levels, i.e., $\sigma\in \{10,20,30\}$, where $\alpha$ is chosen to achieve the best restoration results. As provided in Table \ref{table3}, our multi-grid method always outperforms the curvature filter in both image quality and computational efficiency. The main reason behind this is that both mean curvature and Gaussian curvature in our model are estimated by the definitions in differential geometry. To make it more clear, we present one representative restoration result in Fig. \ref{fig6}, where the results of the one-layer multi-grid methods are also illustrated for comparison. It can be observed the one-layer multi-grid methods produce much better results than curvature filters with much smoother details. And the multi-grid strategy can further improve the restoration quality.

 \begin{table*}[!htbp]
  \centering
  \footnotesize
  \caption{The comparison between the multi-grid method and curvature filter with different noise levels of $\sigma=10,\ 20,\ 30$, respectively.}
       \begin{tabular}{c|c|cc|cc|cc|cc|cc|cc}
    \hline
    \hline
          & \multirow{2}[0]{*}{$\alpha$} & \multicolumn{2}{c|}{$\sigma=10$, $\alpha=0.06$} & \multicolumn{2}{c|}{$\sigma=20$, $\alpha=0.05$} & \multicolumn{2}{c|}{$\sigma=30$, $\alpha=0.04$} &\multicolumn{2}{c|}{$\sigma=10$, $\alpha=0.06$} & \multicolumn{2}{c|}{$\sigma=20$, $\alpha=0.05$} & \multicolumn{2}{c}{$\sigma=30$, $\alpha=0.03$}\\
          \cline{3-14}
          &       & \multicolumn{1}{l}{CFMC} & \multicolumn{1}{l|}{MGMC} & \multicolumn{1}{l}{CFMC} & \multicolumn{1}{l|}{MGMC} & \multicolumn{1}{l}{CFMC} & \multicolumn{1}{l|}{MGMC}&\multicolumn{1}{l}{CFGC} & \multicolumn{1}{l|}{MGGC} & \multicolumn{1}{l}{CFGC} & \multicolumn{1}{l|}{MGGC} & \multicolumn{1}{l}{CFGC} & \multicolumn{1}{l}{MGGC} \\
          \hline
    \multirow{4}[0]{*}{PSNR}
           & $\#1$  & 40.88 & \textbf{43.01} & 35.17 & \textbf{38.11}  & 33.87& \textbf{35.53}& 39.01 & \textbf{41.61}   & 34.51 & \textbf{37.33} & 32.31& \textbf{35.19} \\
          &  $\#2$ & 33.67 & \textbf{35.61} & 30.95 & \textbf{33.37} & 29.75& \textbf{31.51}  & 33.87& \textbf{35.85}  & 31.09 & \textbf{33.52} & 29.89& \textbf{31.33}\\
          & $\#3$  & 30.21 & \textbf{32.85} & 28.22 & \textbf{30.48} & 26.94 & \textbf{29.11} & 31.89 & \textbf{33.92} & 27.97& \textbf{29.61}  & 26.21 & \textbf{28.14}\\
          & $\#4$    & 30.26 & \textbf{32.68} & 27.39  & \textbf{29.51} & 26.28& \textbf{28.35}& 29.84& \textbf{32.23}  & 27.18 & \textbf{28.55} & 25.62 & \textbf{27.22} \\
        \hline
    \multirow{4}[0]{*}{CPU(s)}
          &  $\#1$  & 10.15 & \textbf{5.47}  & 13.11& \textbf{5.88}  & 21.41 & \textbf{6.42} & 13.25 & \textbf{4.07}  & 18.95& \textbf{6.88}  & 34.45  & \textbf{7.29} \\
          &  $\#2$   & 13.21& \textbf{9.37}  & 15.41 & \textbf{10.34} & 24.47& \textbf{14.78} & 14.13 & \textbf{8.14} & 19.25 & \textbf{11.17} & 37.76& \textbf{10.47} \\
          & $\#3$ & 45.15 & \textbf{26.86}  & 67.97& \textbf{26.42}  & 72.13 & \textbf{33.85}& 47.41& \textbf{30.58} & 79.94  & \textbf{30.23} & 82.72 & \textbf{39.37}\\
          &$\#4$    & 50.87 & \textbf{28.68} & 70.98 & \textbf{29.42} & 78.67 & \textbf{35.11} & 55.14 & \textbf{34.67} & 97.42 & \textbf{36.41} & 98.42 & \textbf{40.45}\\
       \hline
    \multirow{4}[0]{*}{Energy}
           & $\#1$ & 2.4871& \textbf{2.4003}  & 8.2315 & \textbf{8.1815} & 2.0181& \textbf{2.0104}  & 2.1904& \textbf{2.1009}  & 8.3539 & \textbf{8.2546} & 1.9741 & \textbf{1.9387} \\
          &  $\#2$  & 2.5609& \textbf{2.4557}  & 8.5179 & \textbf{8.3194} & 2.0155& \textbf{1.9791} & 2.2485& \textbf{2.1575} & 8.4884  & \textbf{8.3789} & 1.9833 & \textbf{1.9257} \\
          &  $\#3$   & 1.4527 & \textbf{1.3646} & 3.8236& \textbf{3.6539}  & 7.1025 & \textbf{6.9583} & 1.2284 & \textbf{1.1064} & 3.8313& \textbf{3.7934}& 8.5485  & \textbf{8.3979}\\
          &  $\#4$     & 1.6692 & \textbf{1.5371} & 3.9651 & \textbf{3.7614} & 6.8714 & \textbf{6.6511} & 1.3216 & \textbf{1.2615} & 3.9682& \textbf{3.9344}  & 8.7079& \textbf{8.6038} \\
     \hline
     \hline
    \end{tabular}
  \label{table3}
\end{table*}%

\begin{figure}[!htbp]
\centering
\subfloat[\small{CFMC ($29.75$ dB)}]{\includegraphics[width=0.23\textwidth]{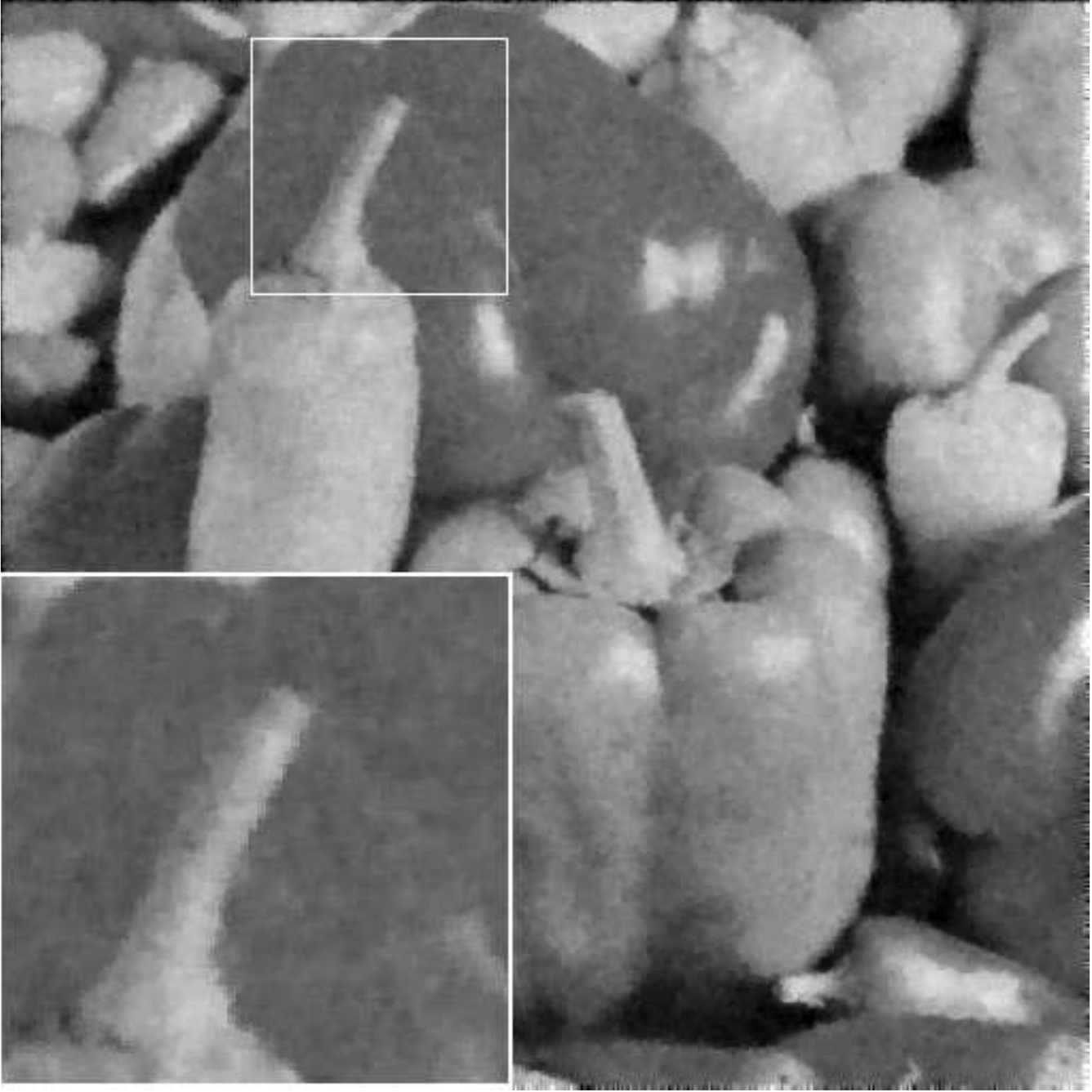}}
\hspace{-1mm}
\subfloat[\small{CFGC ($29.89$ dB)}]{\includegraphics[width=0.23\textwidth]{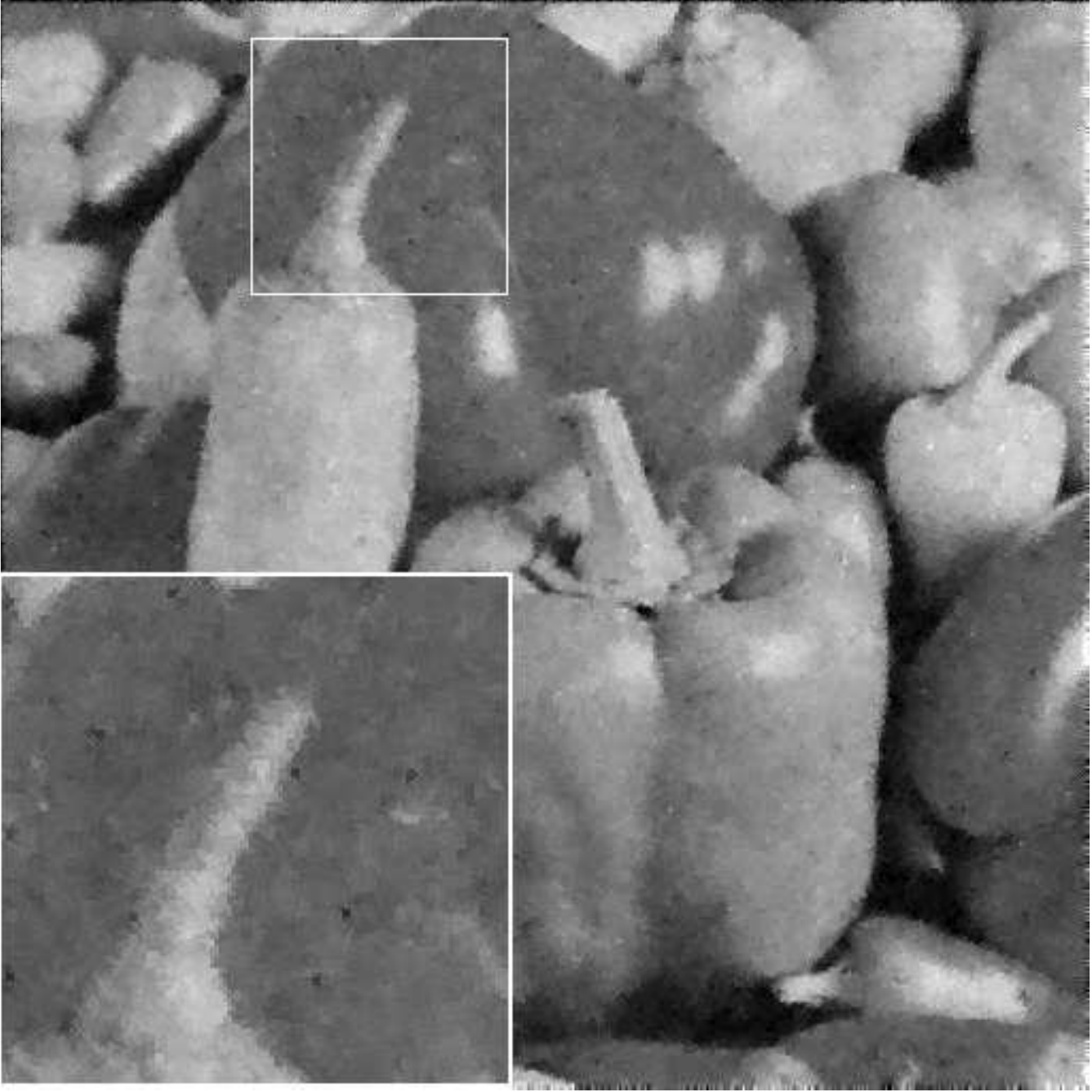}}\\
\vspace{-2mm}
\subfloat[\small{one-layer MGMC ($31.25$ dB)}]{\includegraphics[width=0.23\textwidth]{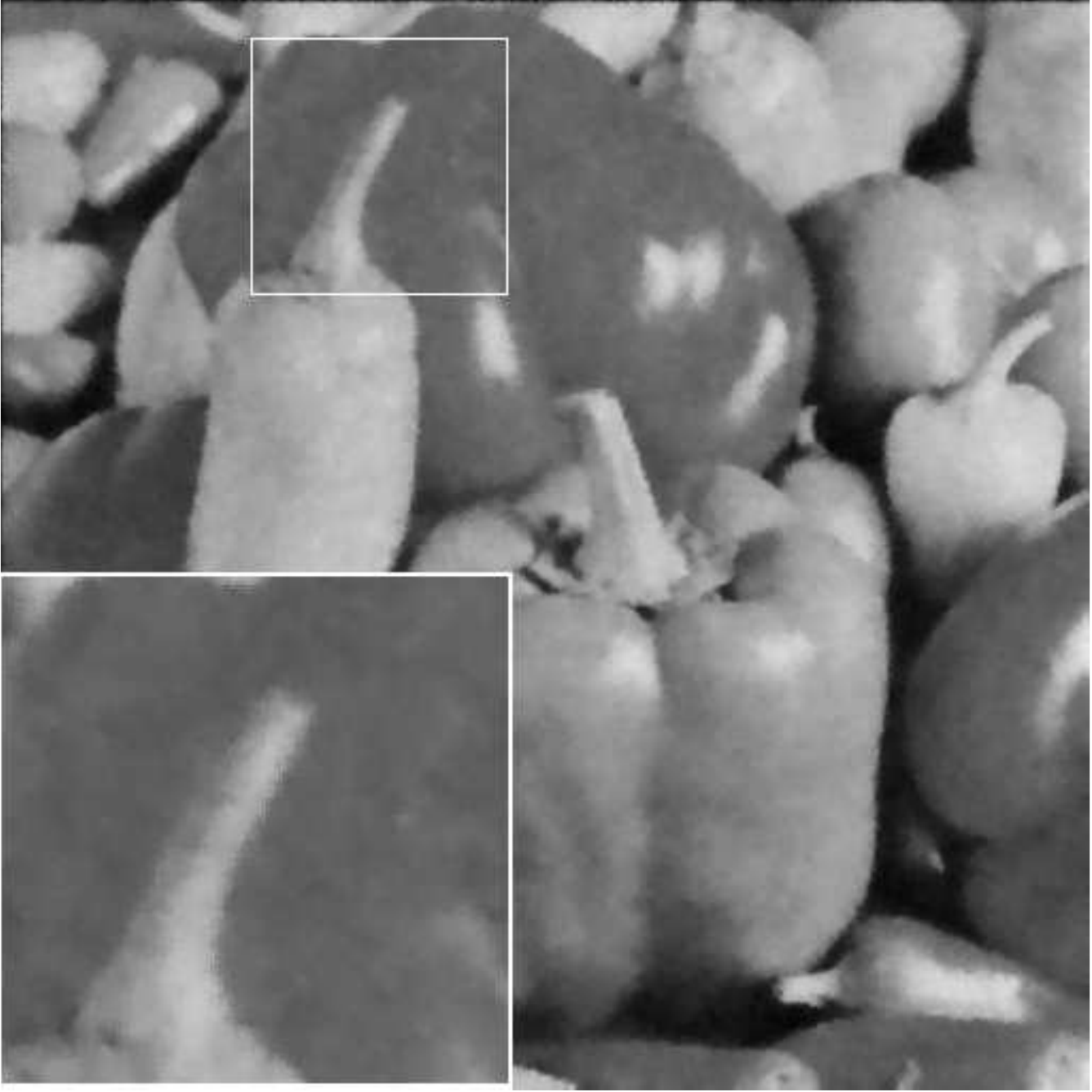}}
\hspace{-1mm}
\subfloat[\small{one-layer MGGC ($31.04$ dB)}]{\includegraphics[width=0.23\textwidth]{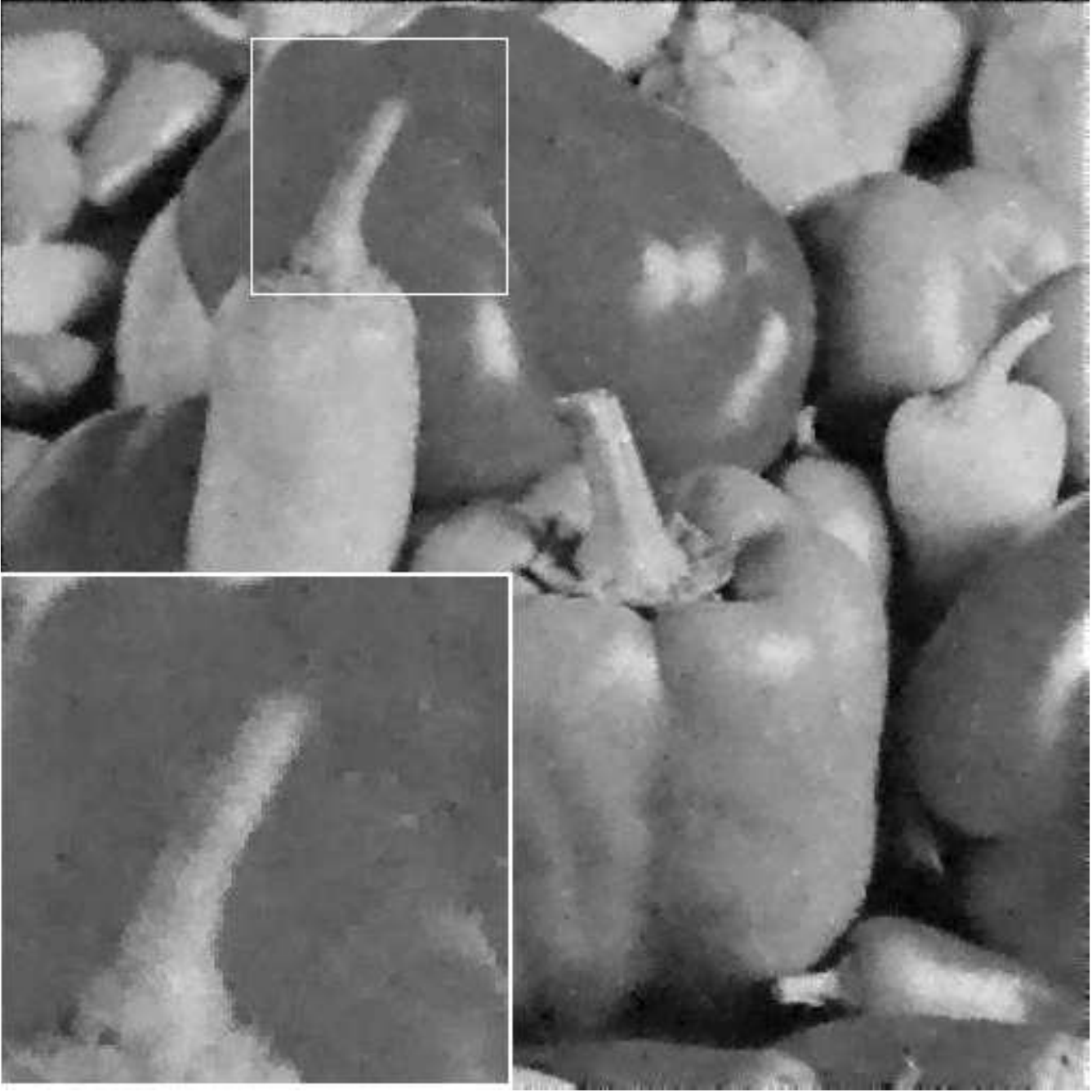}}\\
\vspace{-2mm}
\subfloat[\small{MGMC ($31.51$ dB)}]{\includegraphics[width=0.23\textwidth]{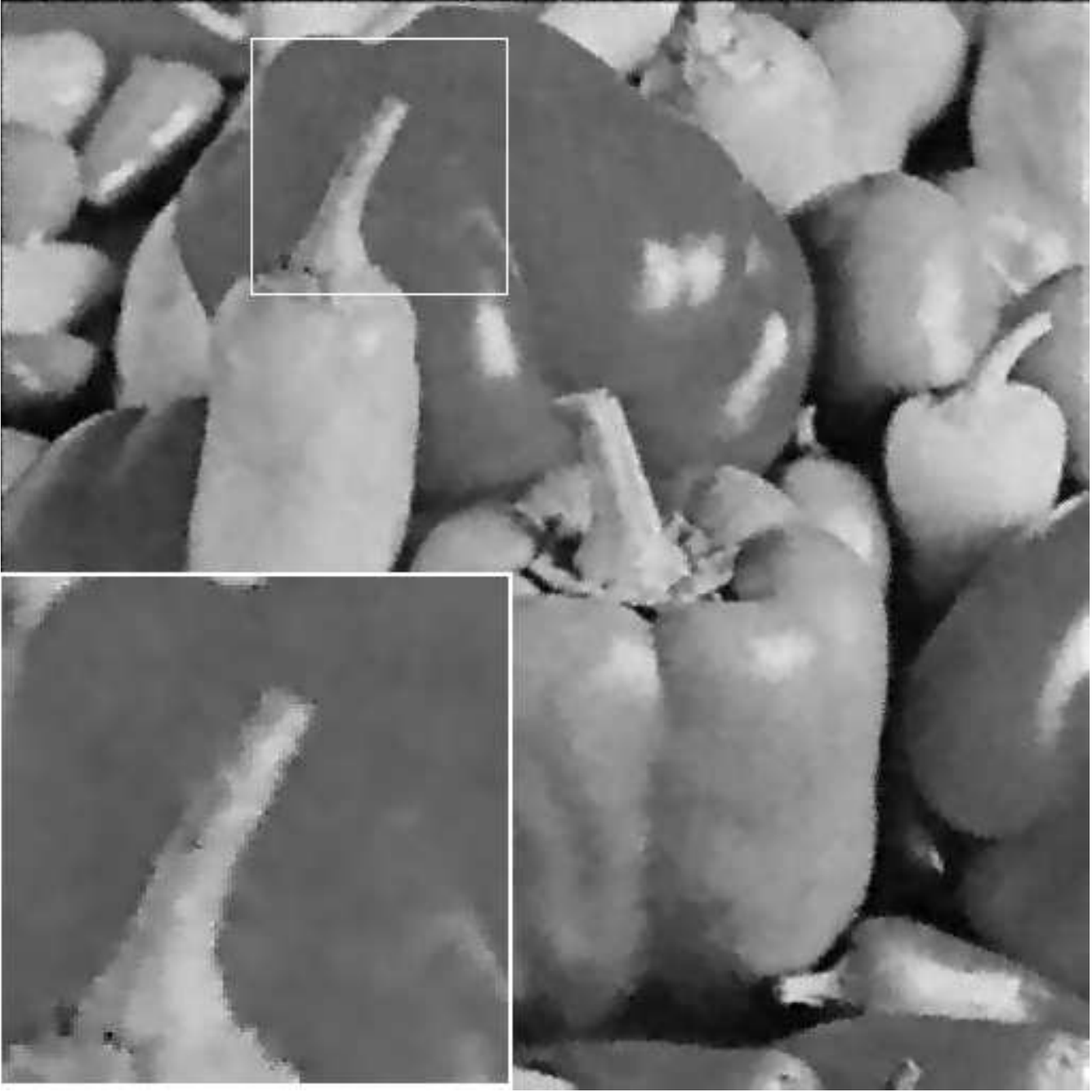}}
\hspace{-1mm}
\subfloat[\small{MGGC ($31.33$ dB)}]{\includegraphics[width=0.23\textwidth]{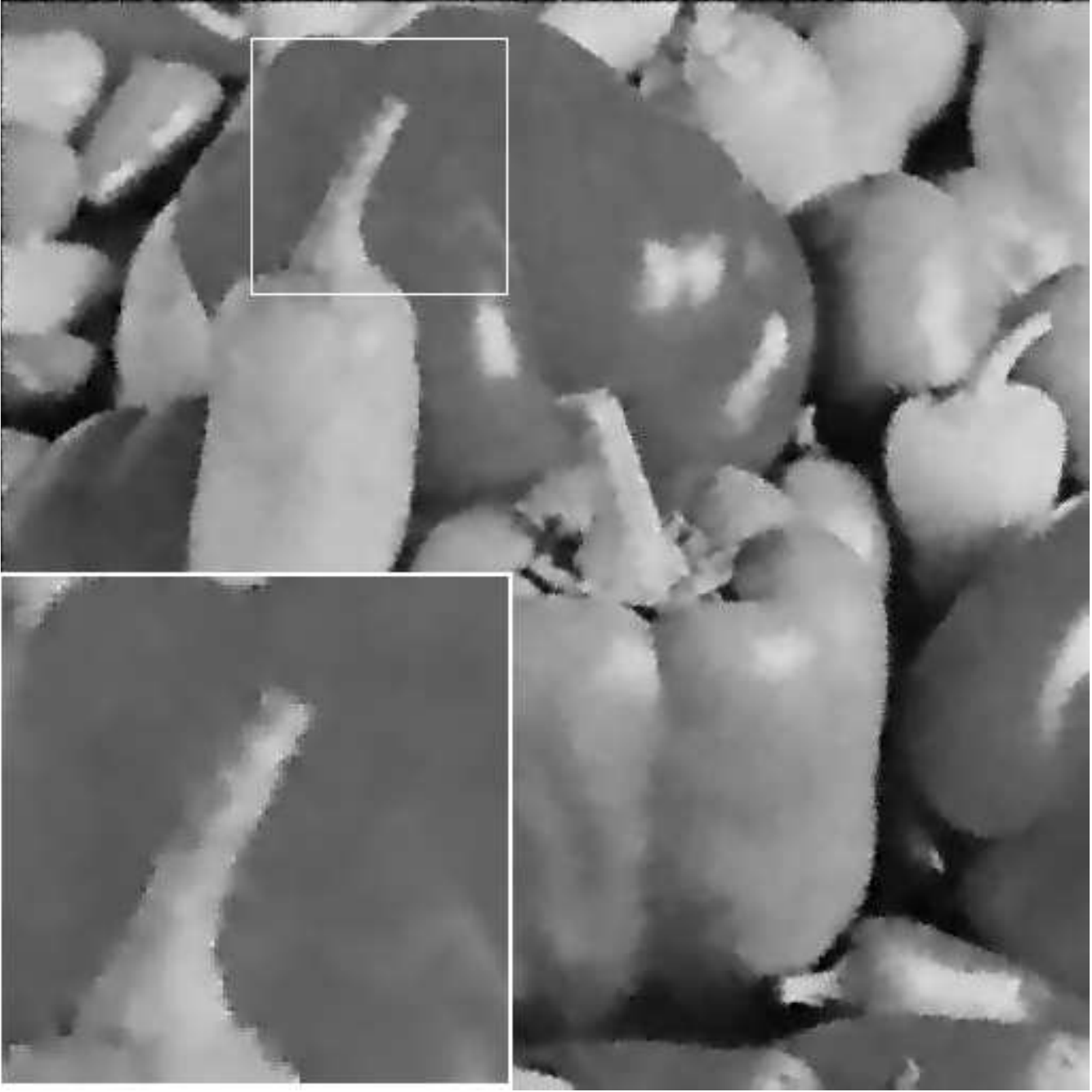}}
\caption{The denoised results obtained by curvature filter and our method on image `Pepper' with noise level $\sigma=30$.}\label{fig6}
\vspace{-0.5cm}
\end{figure}

\subsection{Comparison study and GPU implementation}

In this subsection, we compare our mean curvature method with several state-of-the-art image denoising methods on a dataset containing 30 gray images as shown in Fig. \ref{30images}, where all images are degraded by white Gaussian noise of zero mean and standard deviation $\sigma=20$. All methods are terminated by the relative error in $u$ as given below
\begin{equation}
RelErr(u_{l+1})=\|u_{l+1}-u_{l}\|_1/\|u_{l+1}\|_1\leq\epsilon,
\end{equation}
where $\epsilon$ is fixed as $\epsilon=10^{-4}$. The parameters of the comparison methods are provided as follows
\begin{enumerate}
\item[1)]CFMC \cite{Gong2019mean}: The regularization parameter is set as $\alpha=0.05$.
\item[2)]Hybrid first and second order model (TVTV$^2$) \cite{papafitsoros2014combined}: The parameters are set as $\alpha=0.2$, $\beta=0.2$, $p_1=1$, $p_2=2$ and $p_3=20$.
\item[3)] Total generalized variation (TGV) method \cite{Bredies2010total}: The parameters are selected as suggested in the corresponding papers, which are set as $\alpha_0=1.5$, $r_4=1.5\times 10^5$ and $\lambda=10$.
\item[4)]Euler's elastica model (denoted as Euler) \cite{2011AA}: The parameters of ALM are fixed as $a=1$, $b=10$, $\mu=10$, $\eta=200$, $r_1=2$, $r_2=200$, $r_4=250$.
\item[5)]Mean curvature model (denoted as MC) \cite{2017Augmented}: The parameters of ALM are set as $\varepsilon=0.4$, $\lambda=1600$, $r_1=200$, $r_2=200$, $r_3=1\times 10^4$ and $r_4=1\times 10^4$.
\item[6)]Total absolute mean curvature model (denoted as TAC) \cite{Zhong2020image}: The model is solved by the ADMM-based algorithm with the parameters given by $a=1$, $b=0.4$, $\lambda=0.09$, and $r=2$.
\item[7)]MGMC: The regulation parameter is set as $\alpha=0.05$.
The iteration number for solving the sub-problem $J(c)$ is fixed as $15$.
 \end{enumerate}

\begin{figure*}[!htbp]
  \centering
    \includegraphics[width=1\textwidth]{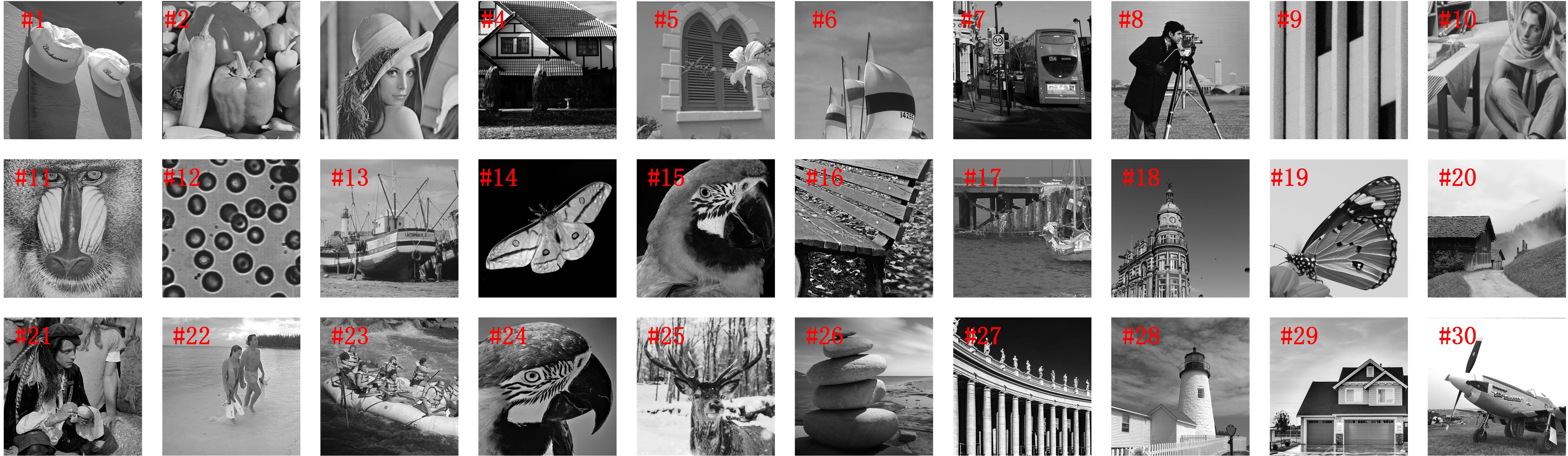}
  \caption{The test dataset includes 30 gray images, where the size of images $\#1-\#15$ is $512\times512$ and the size of images $\#16-\#30$ is $1024\times1024$.}  \label{30images}
  \vspace{-0.3cm}
\end{figure*}

\begin{figure*}[t]
\centering
\subfloat[\scriptsize{CFMC(27.45dB)}]{\includegraphics[width=0.135\textwidth]{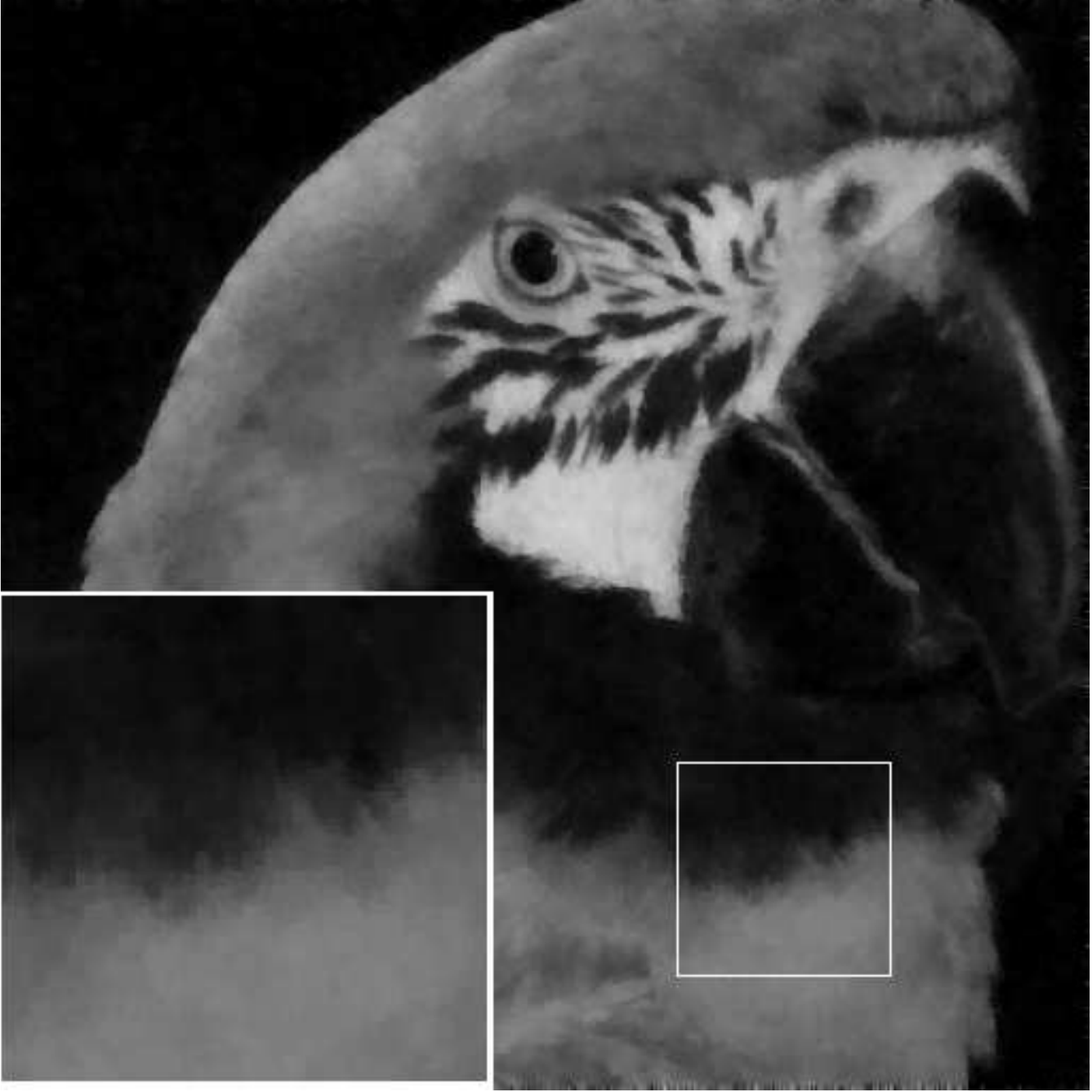}}
\hspace{-0.2mm}
\subfloat[\scriptsize{TVTV$^2$(28.49dB)}]{\includegraphics[width=0.135\textwidth]{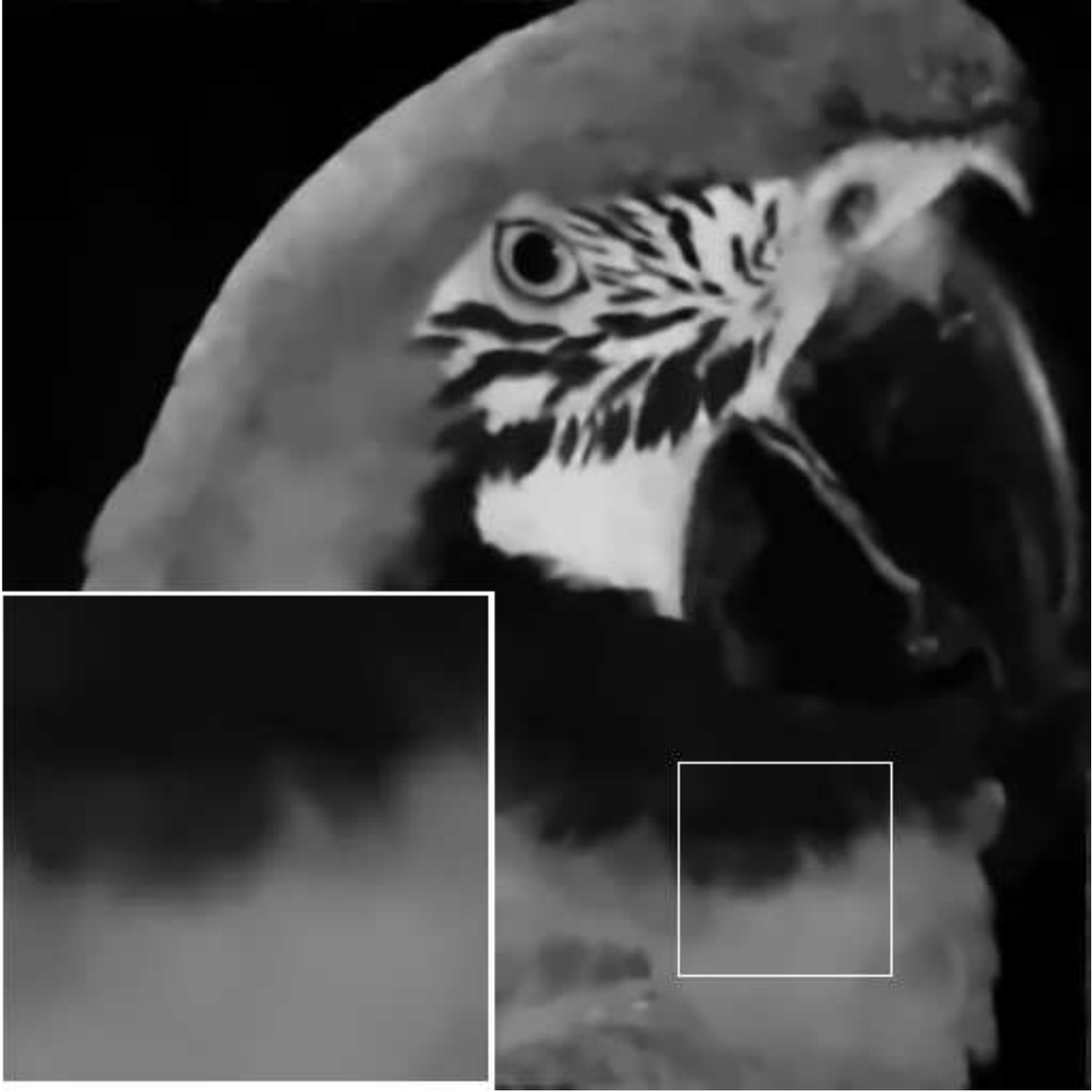}}
\hspace{-0.2mm}
\subfloat[\scriptsize{TGV(28.52dB)}]{\includegraphics[width=0.135\textwidth]{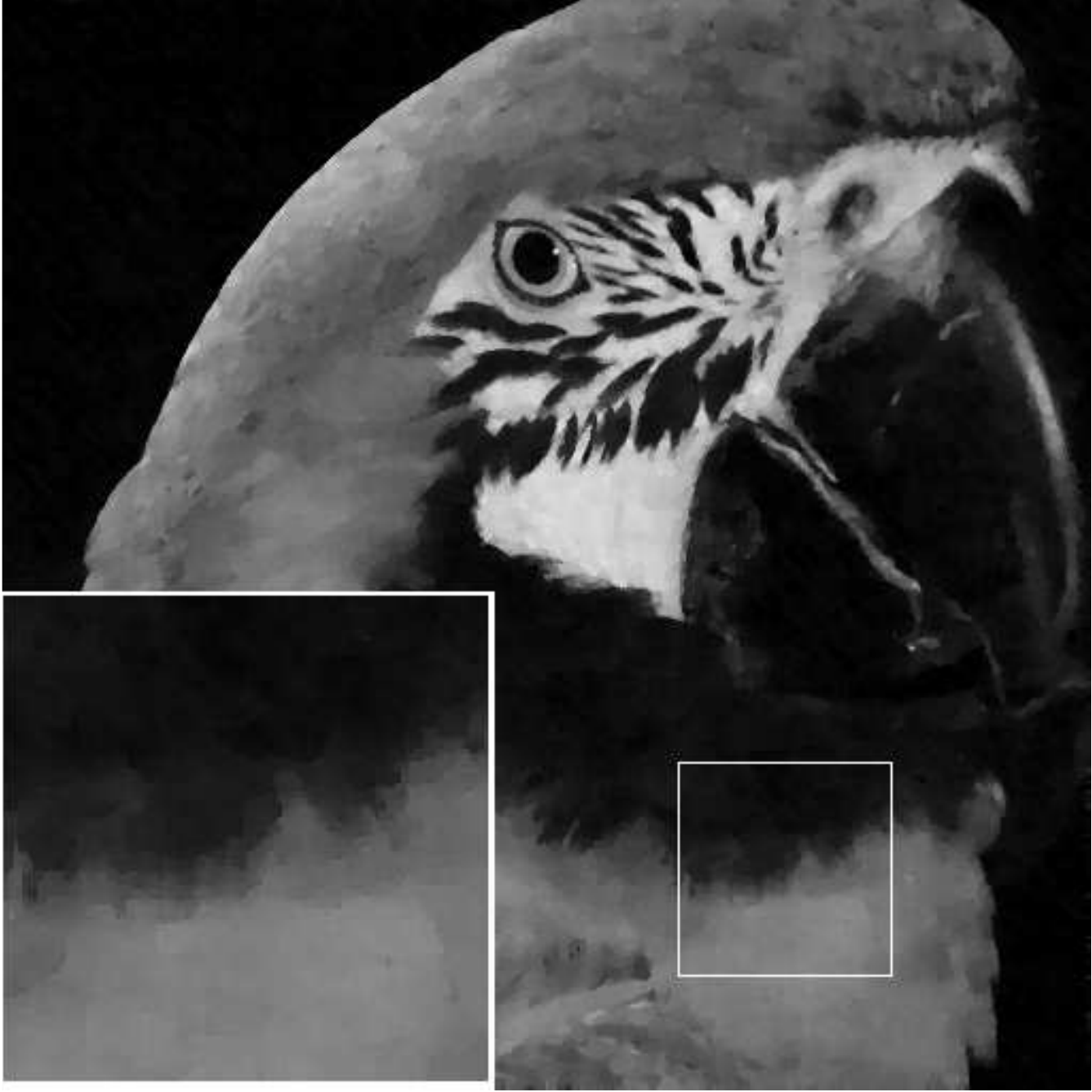}}
\hspace{-0.2mm}
\subfloat[\scriptsize{Euler(28.67dB)}]{\includegraphics[width=0.135\textwidth]{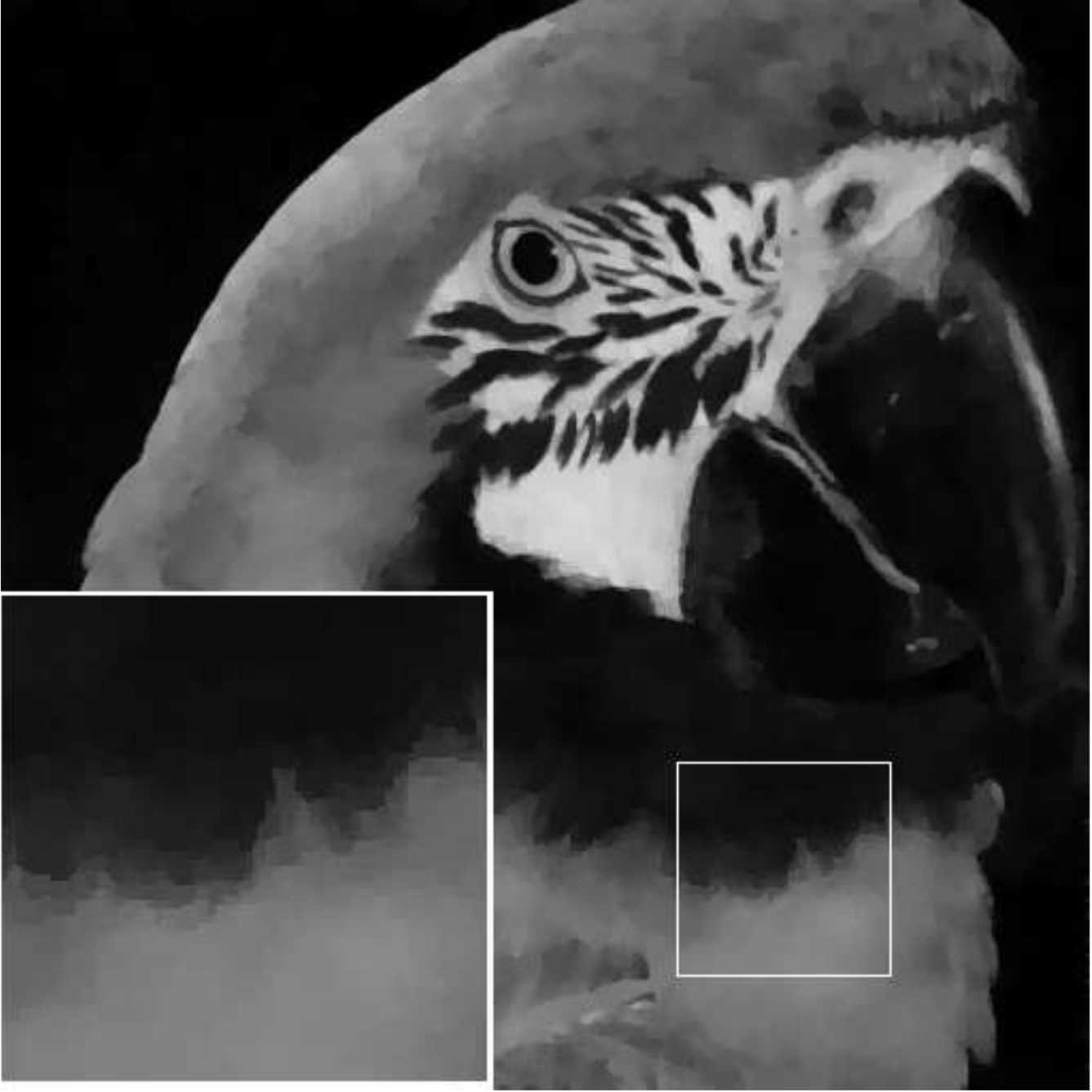}}
\hspace{-0.2mm}
\subfloat[\scriptsize{MC(28.81dB)}]{\includegraphics[width=0.135\textwidth]{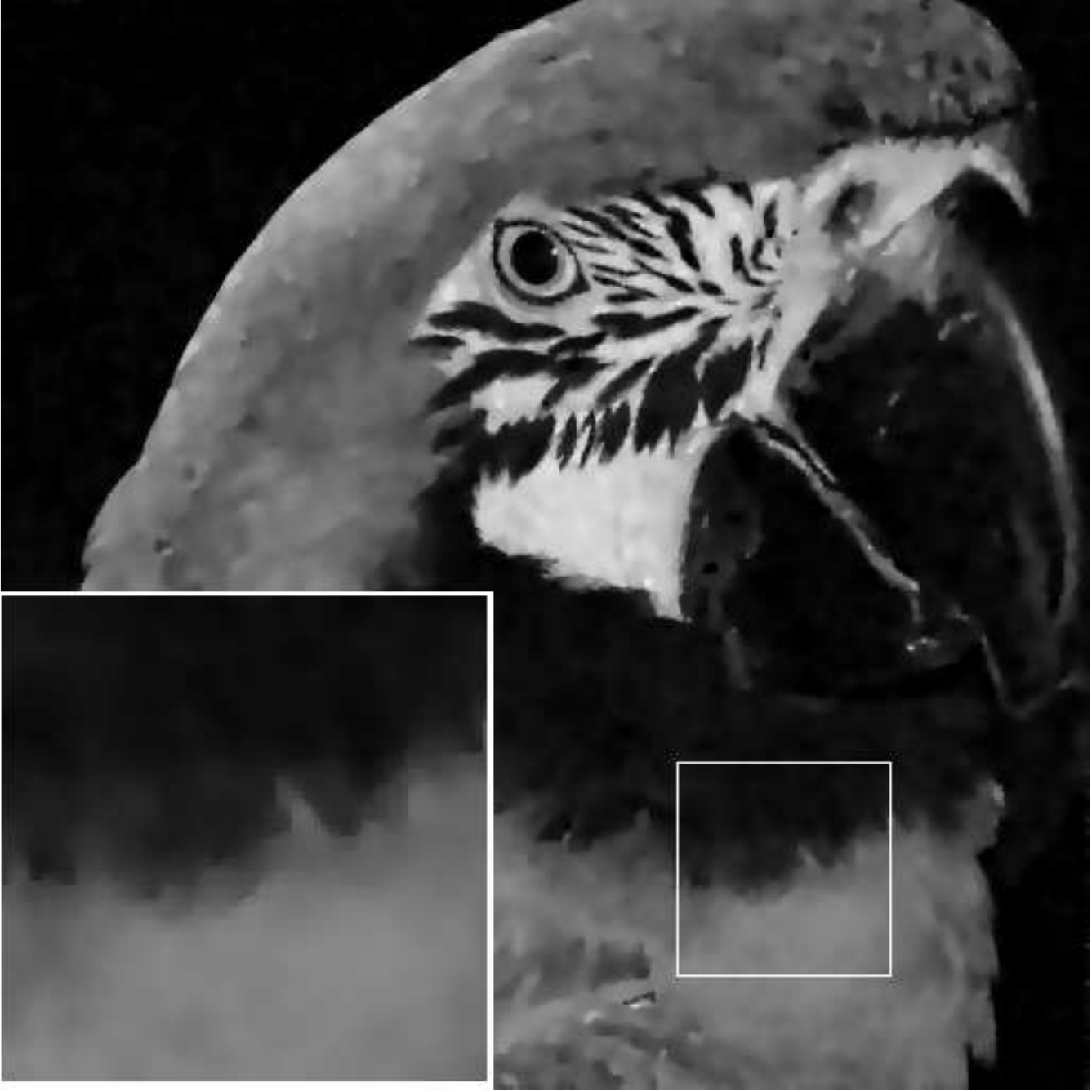}}
\hspace{-0.2mm}
\subfloat[\scriptsize{TAC(28.88dB)}]{\includegraphics[width=0.135\textwidth]{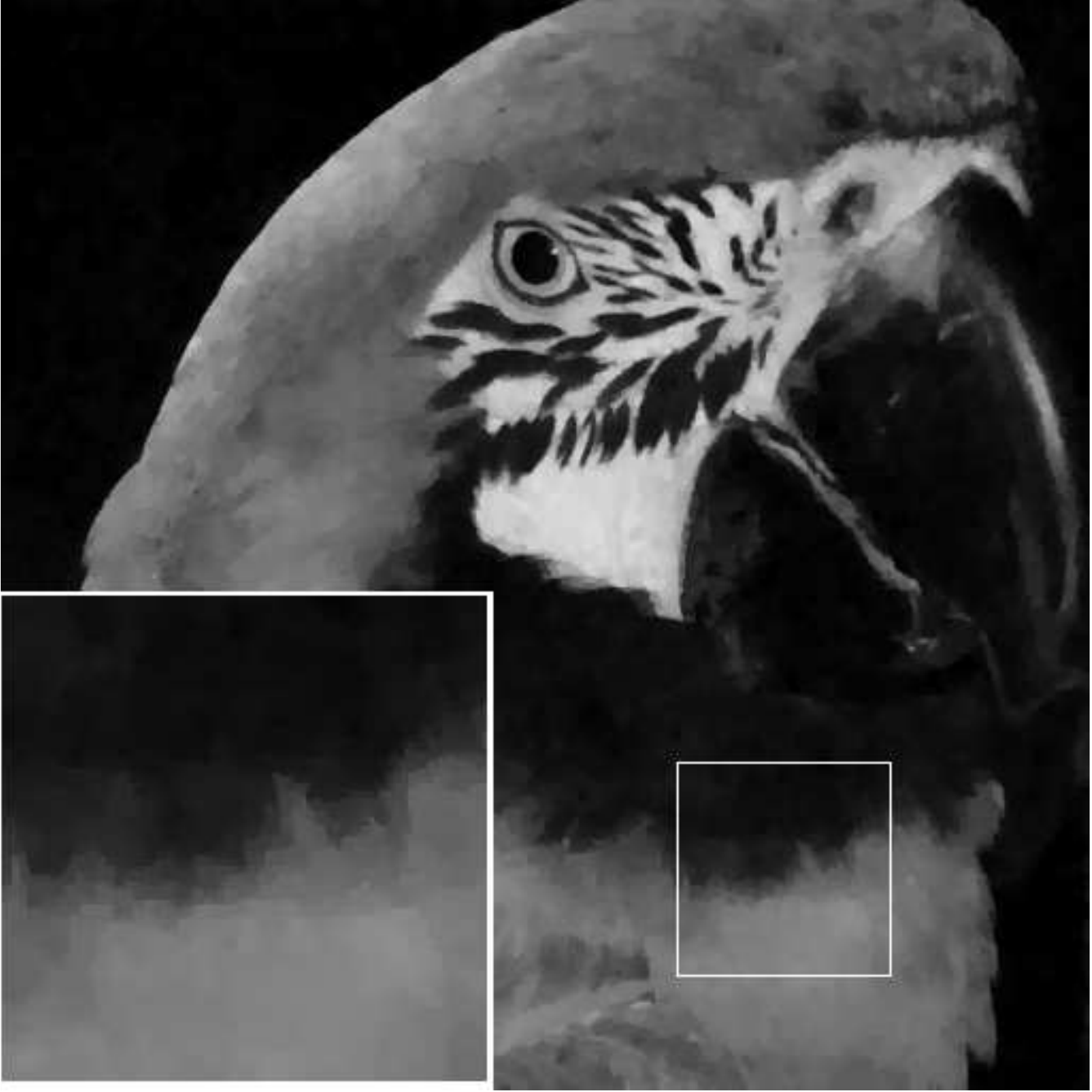}}
\hspace{-0.2mm}
\subfloat[\scriptsize{MGMC(29.08dB)}]{\includegraphics[width=0.135\textwidth]{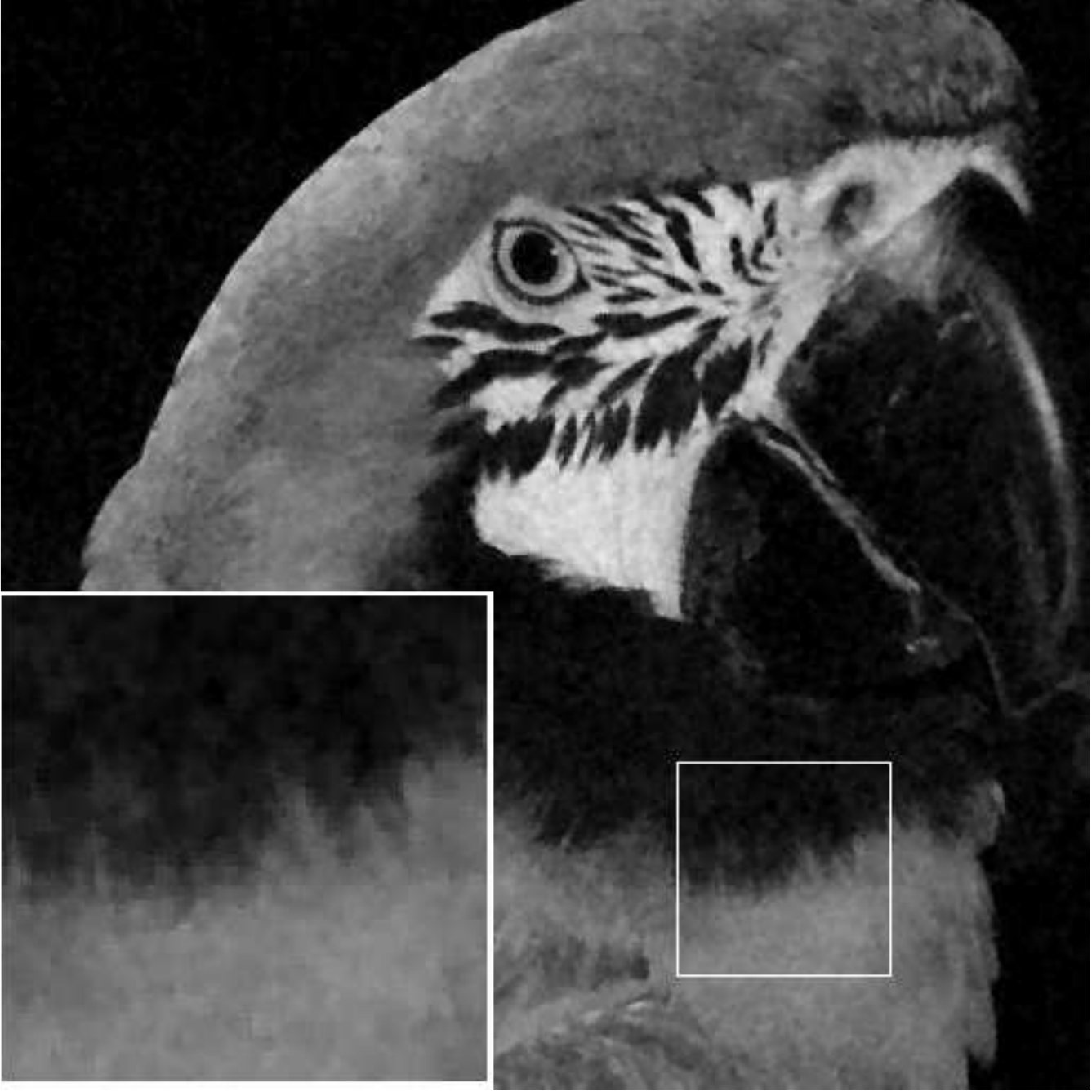}}\\
\vspace{-1.0mm}
\subfloat[\scriptsize{CFMC(27.88dB)}]{\includegraphics[width=0.135\textwidth]{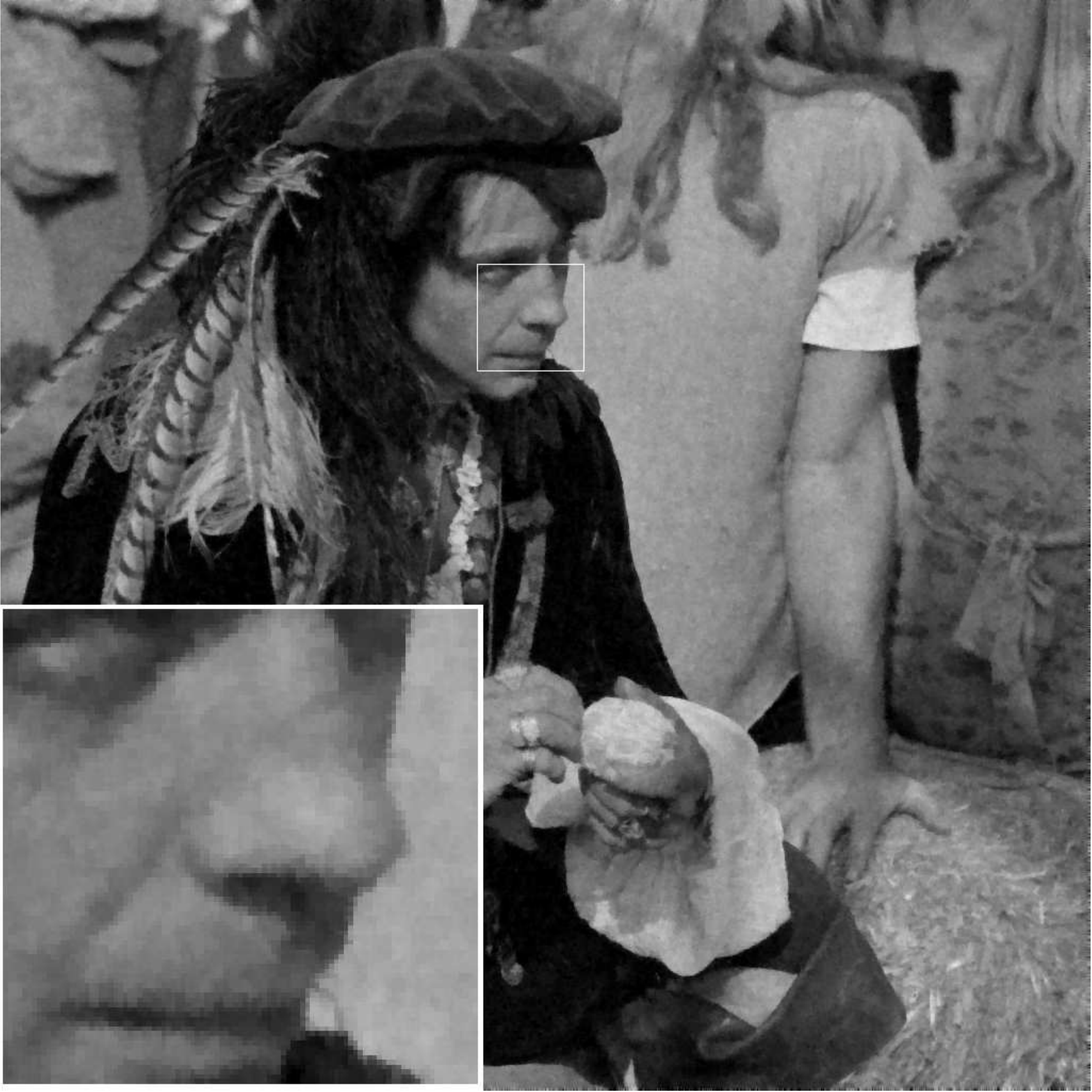}}
\hspace{-0.2mm}
\subfloat[\scriptsize{TVTV$^2$(29.31dB)}]{\includegraphics[width=0.135\textwidth]{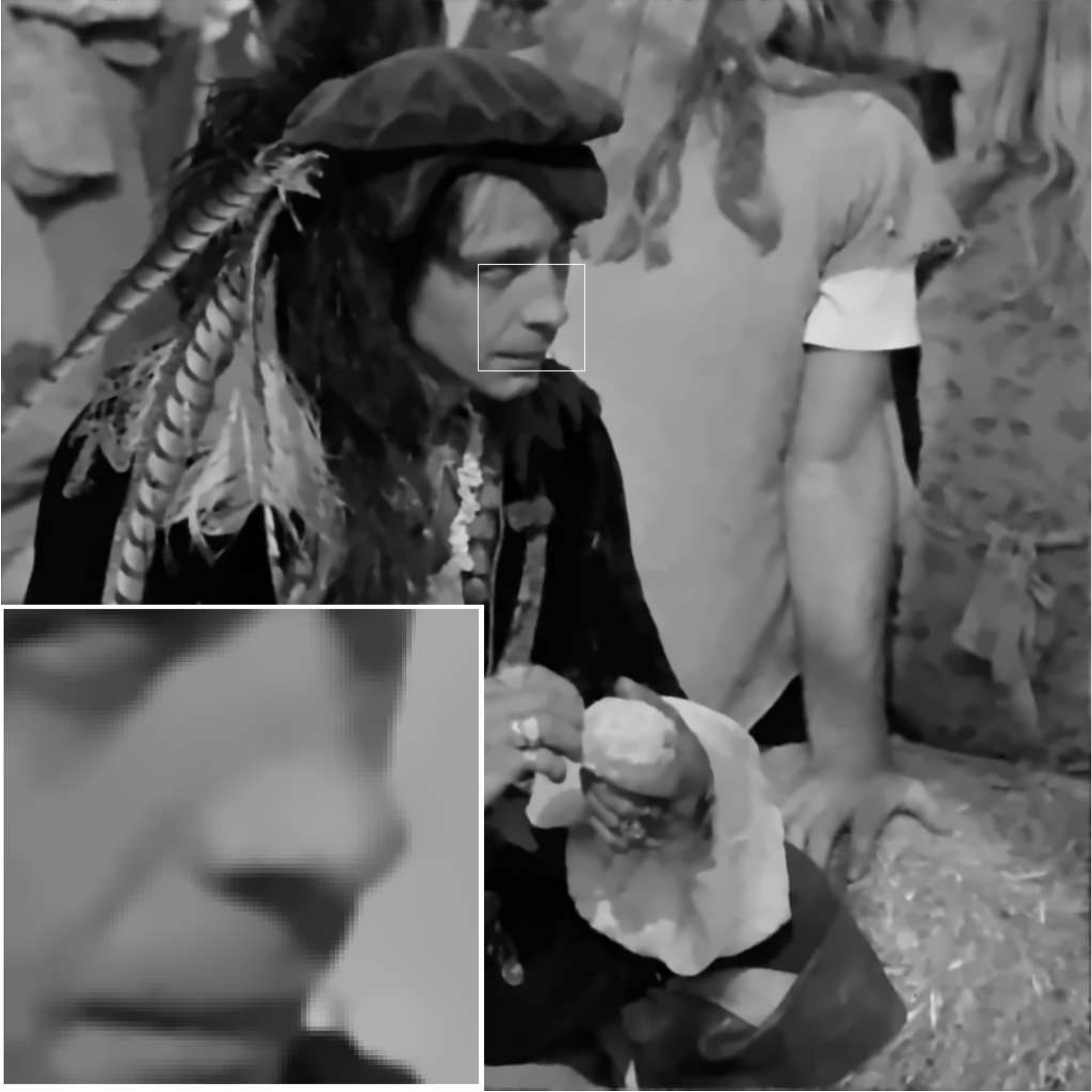}}
\hspace{-0.2mm}
\subfloat[\scriptsize{TGV(29.50dB)}]{\includegraphics[width=0.135\textwidth]{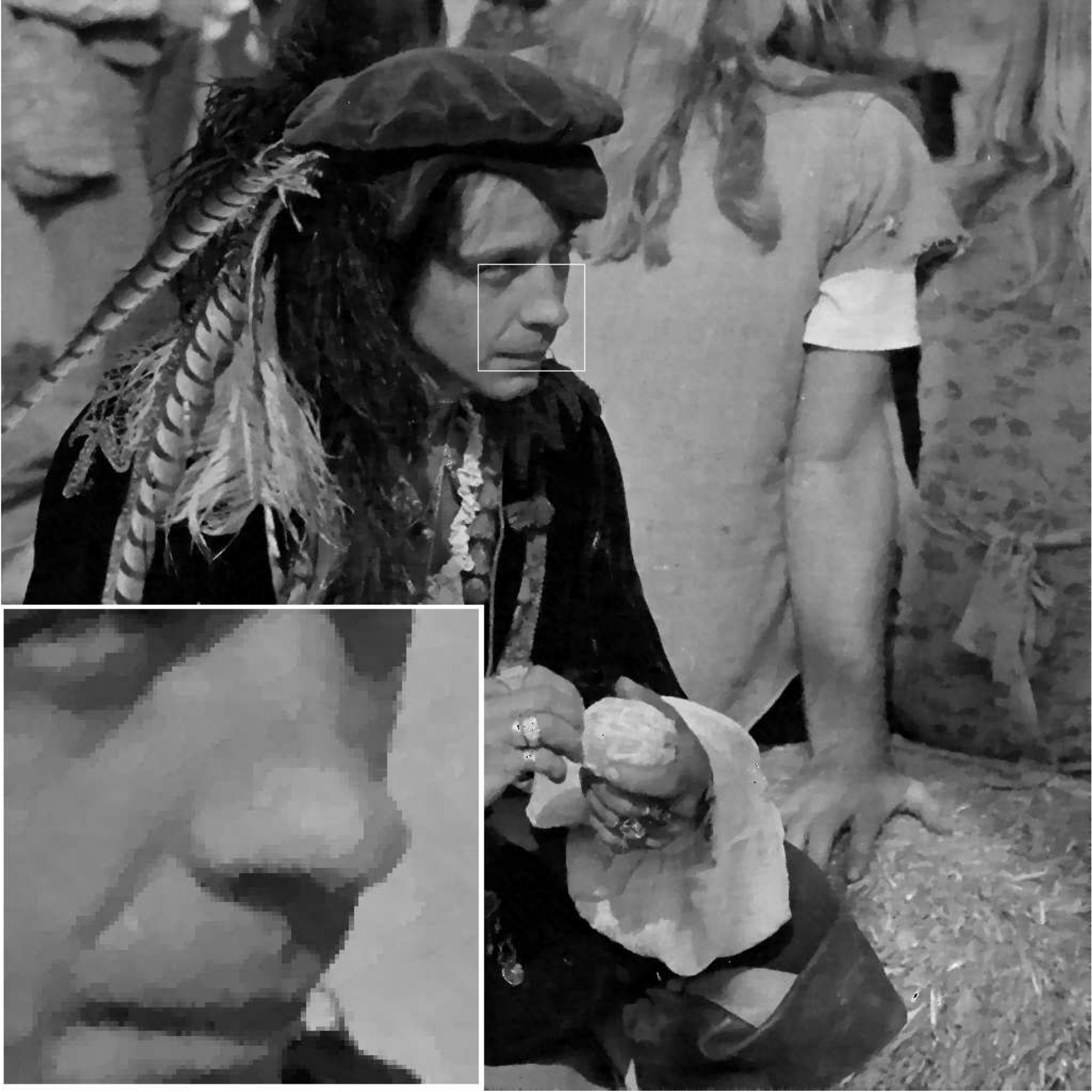}}
\hspace{-0.2mm}
\subfloat[\scriptsize{Euler(28.81dB)}]{\includegraphics[width=0.135\textwidth]{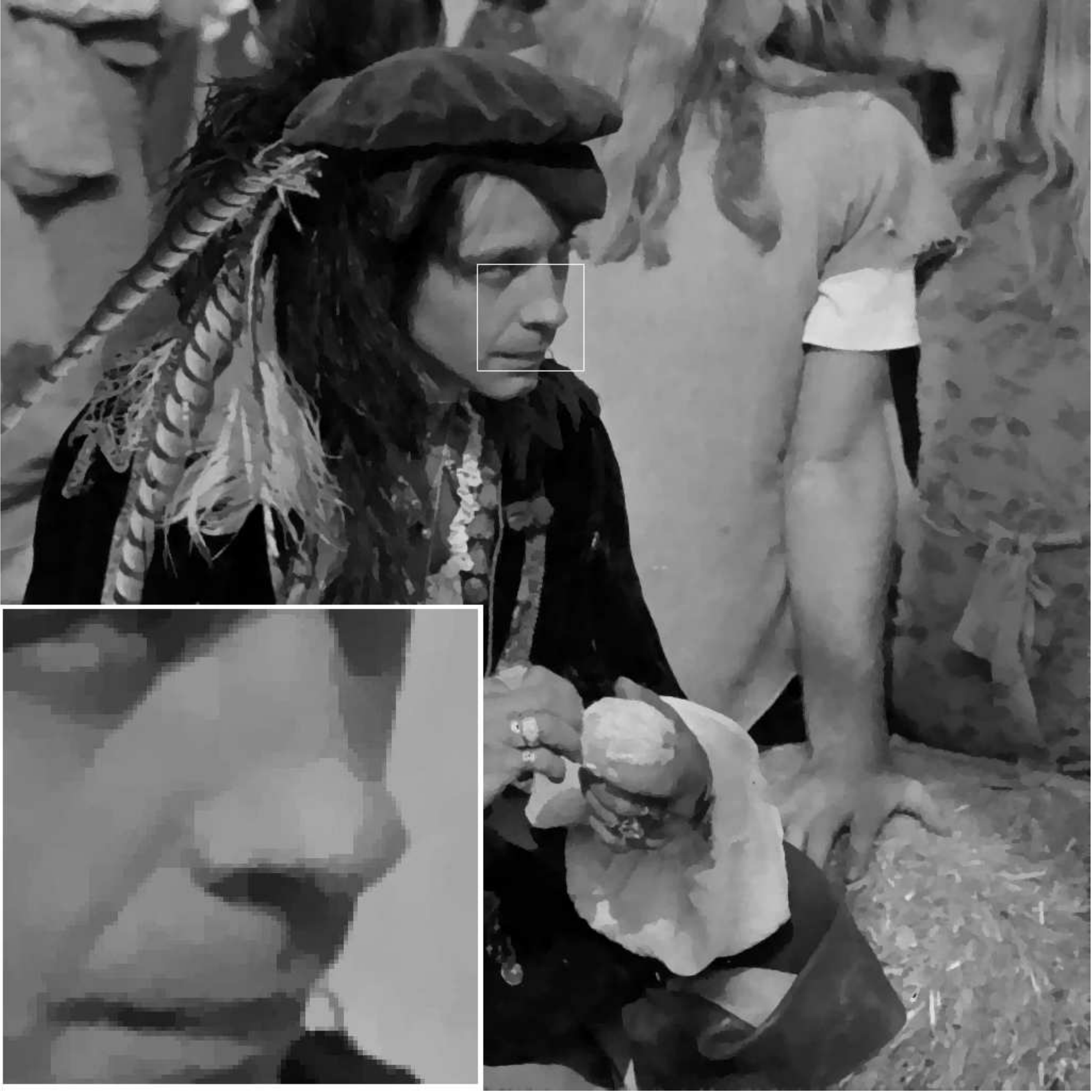}}
\hspace{-0.2mm}
\subfloat[\scriptsize{MC(29.54dB)}]{\includegraphics[width=0.135\textwidth]{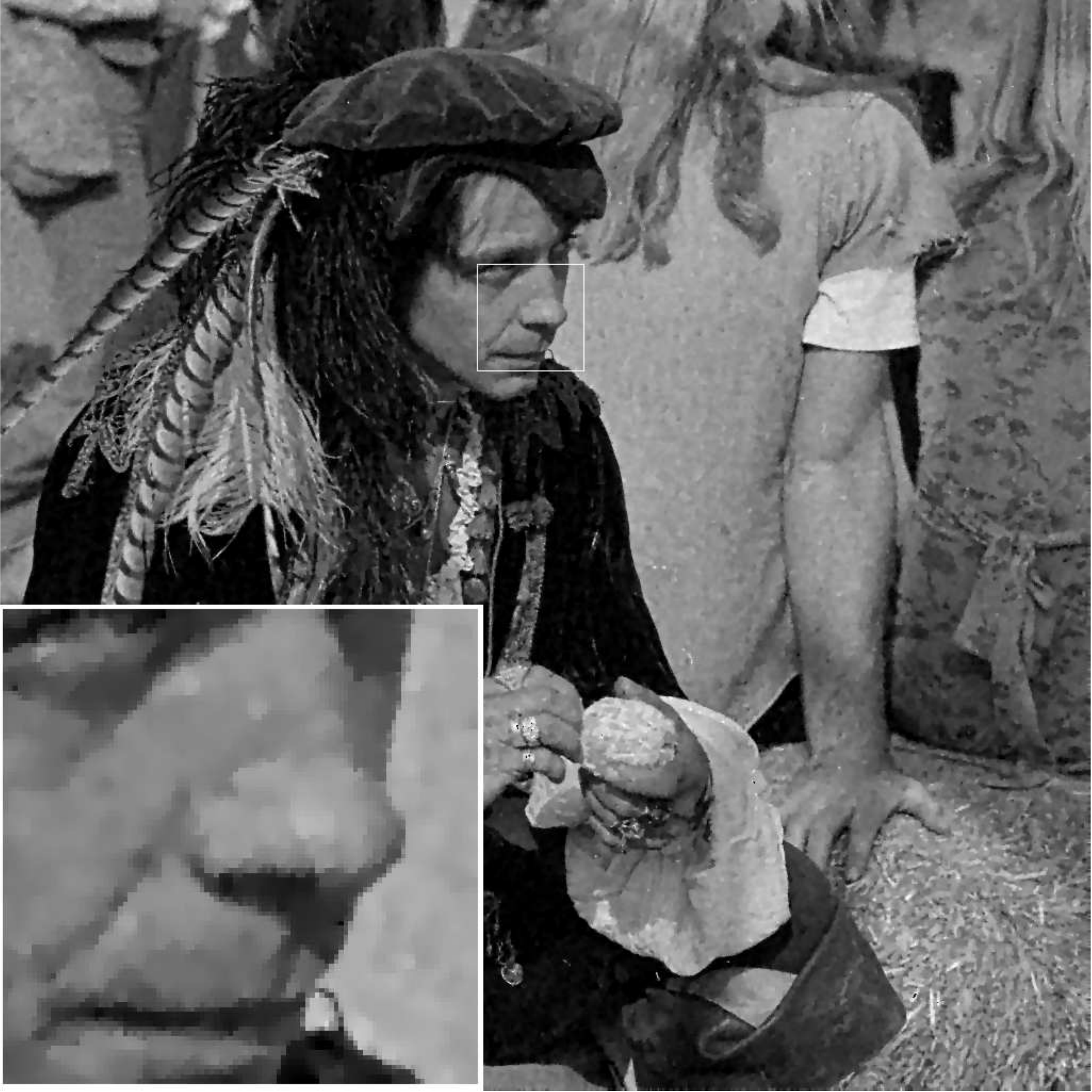}}
\hspace{-0.2mm}
\subfloat[\scriptsize{TAC(29.53dB)}]{\includegraphics[width=0.135\textwidth]{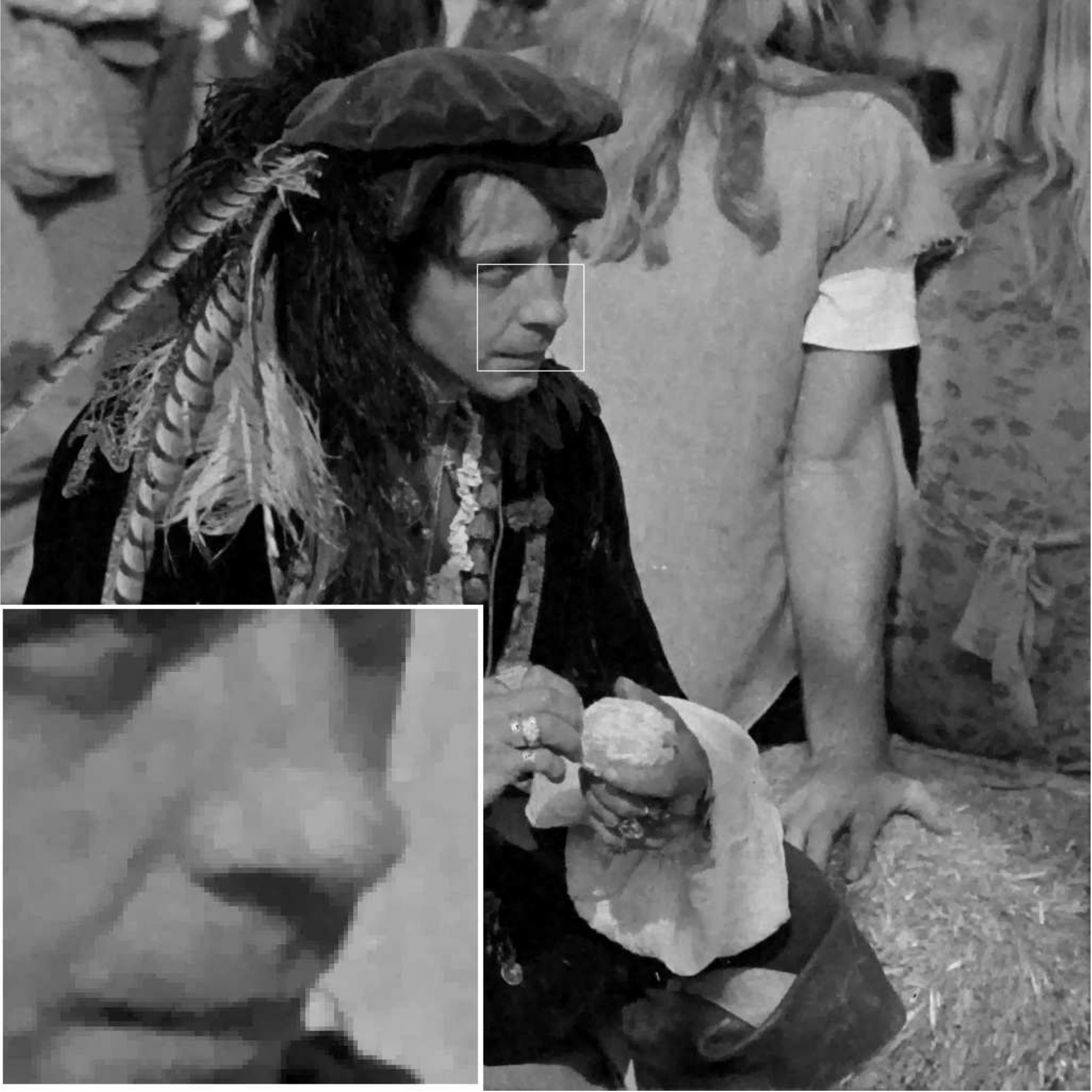}}
\hspace{-0.2mm}
\subfloat[\scriptsize{MGMC(29.62dB)}]{\includegraphics[width=0.135\textwidth]{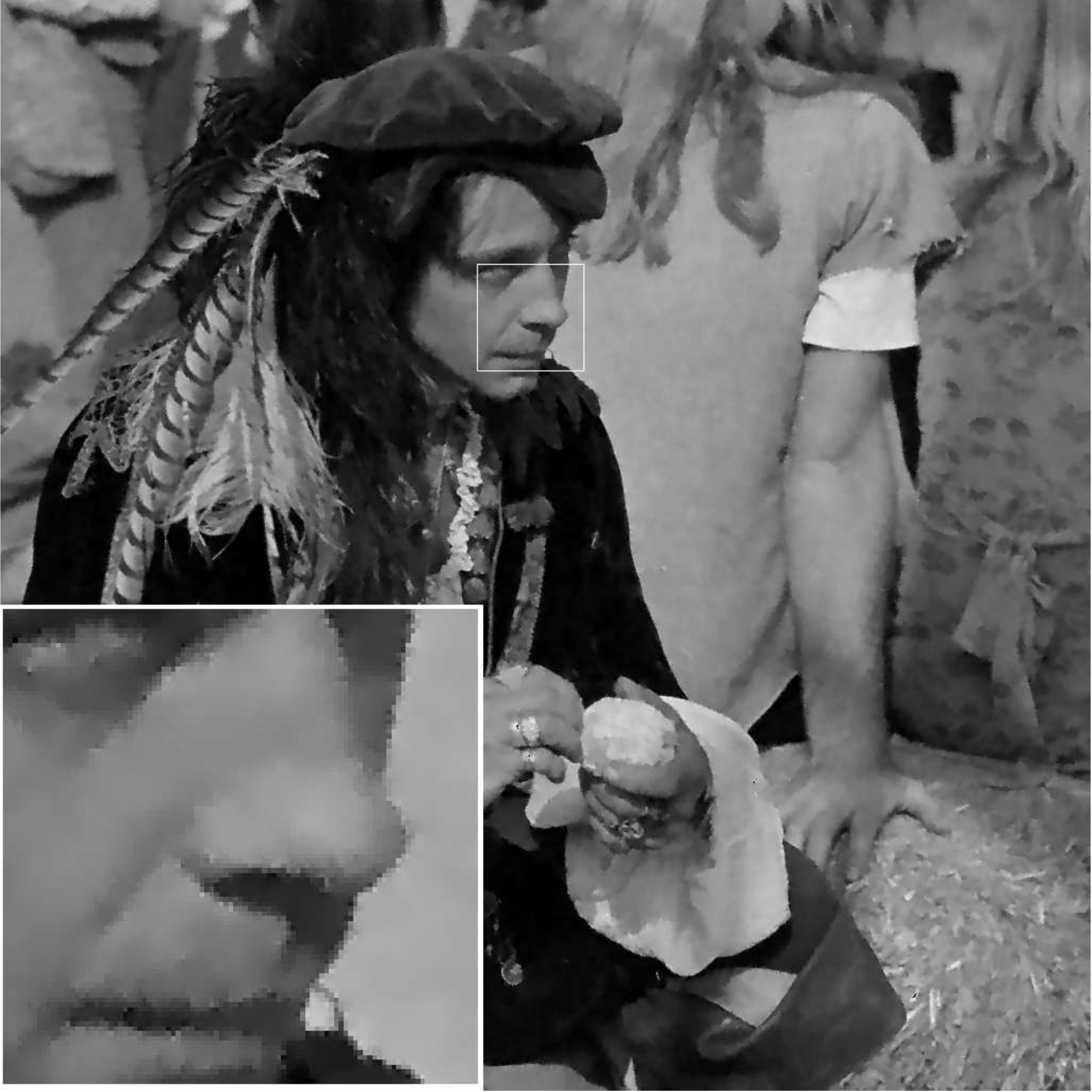}}
\hspace{-0.2mm}
\caption{The denoised results were obtained by different methods on image $\#15$ and $\#21$ with the noise level of $\sigma=20$.}\label{30 denoise result}
\end{figure*}

\begin{table*}[htbp]
  \centering
  \caption{The comparison of SSIM and PSNR (dB) among the  CFMC, Euler's elastica, mean curvature (MC), total absolute mean curvature (TAC) and our MGMC on 30 test images. }   \label{30 image restlt}
   \begin{tabular}{p{16pt}|p{20pt}p{20pt}p{20pt}p{20pt}p{20pt}p{20pt}p{22pt}|p{20pt}p{20pt}p{20pt}p{20pt}p{20pt}p{20pt}p{22pt}}
   \hline
   \hline
          & \multicolumn{7}{c|}{SSIM}                      & \multicolumn{7}{c}{PSNR} \\
          \hline
        & CFMC &TVTV$^2$&TGV    & Euler & MC    & TAC   & MGMC    & CFMC  &TVTV$^2$&TGV  & Euler & MC    & TAC   & MGMC \\
          \hline
    \#1 & 0.8362 & 0.8527  & 0.8543     & 0.8548  & 0.8618  & 0.8652  & \textbf{0.8678}  & 30.94 & 32.81  & 32.94   & 32.85  & 33.05  & 33.12  & \textbf{33.23} \\
    \#2  & 0.8348  & 0.8918  & 0.8976    & 0.8928  & 0.9115  & 0.9066  & \textbf{0.9124}   & 30.00& 31.73  & 31.95     & 31.77  & 31.85  & 31.97  & \textbf{32.32} \\
    \#3  & 0.8758 & 0.9289  & 0.9297     & 0.9355  & 0.9391  & 0.9311  & \textbf{0.9393}  & 28.81   & 30.55  & 30.76   & 30.33  & 30.70  & 30.55  & \textbf{30.80} \\
    \#4 & 0.8225 & 0.8622  & 0.8590   & 0.8592  & 0.8683  & 0.8656  & \textbf{0.8692}   & 24.01& 26.55  & 26.16   & 25.98  & 26.44  & 26.16  & \textbf{26.75} \\
    \#5& 0.7652& 0.9049  & 0.9015      & 0.9011  & 0.9142  & 0.9118  & \textbf{0.9169}  & 31.35& 32.55  & 32.46     & 32.32  & 32.61  & 32.54  & \textbf{32.81} \\
    \#6  & 0.8142 & 0.8279  & 0.8384    & 0.8315  & 0.8323  & 0.8311  & \textbf{0.8410}   & 30.27& 31.53  & 31.33     & 31.32  & 31.42  & 31.51  & \textbf{31.66} \\
    \#7 & 0.7634 & 0.8482  & 0.8467     & 0.8483  & 0.8491  & 0.8489  & \textbf{0.8496}& 25.71  & 27.99  & 28.11     & 27.93  & 28.22  & 28.09  & \textbf{28.24} \\
    \#8  & 0.8319& 0.8649  & 0.8671    & 0.8665  & 0.8787  & 0.8793  & \textbf{0.8798} & 28.54& 31.41  & 31.11      & 31.13  & 31.37  & 31.29  & \textbf{31.42} \\
    \#9  & 0.7365& 0.7765  & 0.7715     & 0.7747  & 0.7945  & \textbf{0.7981}  & \textbf{0.7981} & 27.78& 28.33  & 28.13     & 28.28  & 28.48  & \textbf{28.63}  & \textbf{28.63} \\
    \#10  & 0.8008 & 0.9021  & 0.9038    & 0.9042  & 0.9095  & 0.9007  & \textbf{0.9098} & 29.37& 30.98  & 31.13      & 30.98  & 31.38  & 31.37  & \textbf{31.71} \\
    \#11 & 0.7369 & 0.7505  & 0.7511    & 0.7501  & 0.7521  & 0.7514  & \textbf{0.7572}   & 23.30& 24.59  & 24.52    & 24.73  & 24.68  & 24.72  & \textbf{24.75} \\
    \#12 & 0.8381& 0.8511   & 0.8603    & 0.8655  & 0.8764  & 0.8741  & \textbf{0.8791}  & 30.04 & 32.51 & 32.38    & 32.32  & 32.74  & 32.73  & \textbf{32.97} \\
    \#13   & 0.7536 & 0.8189  & 0.8119    & 0.8151  & 0.8194  & 0.8233  & \textbf{0.8235}   & 27.18 & 28.55  & 29.07     & 28.73  & 29.08  & 28.84  & \textbf{29.16} \\
    \#14  & 0.7269 & 0.8761  & 0.8791    & 0.8824  & 0.8855  & 0.8838  & \textbf{0.8875} & 29.54& 32.19  & 32.22      & 32.12  & 32.37  & 32.24  & \textbf{32.36} \\
    \#15 & 0.8206  & 0.8221  & 0.8214   & 0.8207  & 0.8311  & 0.8321  & \textbf{0.8324}   & 27.45  & 28.49  & 28.52  & 28.67  & 28.81  & 28.88  & \textbf{29.08} \\
    \hline
    AVG    & 0.7599  &0.8588  & 0.8506    & 0.8502  & 0.8582  & 0.8569  & \textbf{0.8609}   & 28.29& 30.05 & 30.12      & 29.96  & 30.21  & 30.18  & \textbf{30.39} \\
    \hline
    \#16  & 0.7695& 0.8524  & 0.8556    & 0.8539  & 0.8624  & 0.8622  & \textbf{0.8642}  & 26.17 & 28.78  & 29.07     & 28.79  & 28.90  & 28.95  & \textbf{29.22} \\
    \#17 & 0.8138 & 0.8255  & 0.8282     & 0.8307  & 0.8315  & 0.8308  & \textbf{0.8401} & 29.82 & 31.11  & 30.98      & 31.05  & 31.22  & 31.14  & \textbf{31.28} \\
    \#18& 0.8107 & 0.8399  & 0.8403      & 0.8408  & 0.8413  & 0.8352  & \textbf{0.8421}   & 29.48& 33.04  & 32.96    & 33.01  & 33.35  & 33.28  & \textbf{33.38} \\
    \#19  & 0.7969 & 0.8984  & 0.8895    & 0.8942  & 0.9087  & 0.8931  & \textbf{0.9091}& 30.04& 32.53  & 32.88       & 32.88  & 33.28  & 33.21  & \textbf{33.30} \\
    \#20 & 0.7513  & 0.8266  & 0.8236    & 0.8329  & 0.8362  & 0.8389  & \textbf{0.8392} & 28.16 & 29.31  & 29.50    & 29.39  & 29.55  & 29.59  & \textbf{29.63} \\
    \#21 & 0.7070& 0.8228  & 0.8269      & 0.8268  & 0.8312  & 0.8280  & \textbf{0.8325} & 27.88 & 28.53  & 29.52     & 28.81  & 29.54  & 29.53  & \textbf{29.62} \\
     \#22 & 0.8073& 0.8794  & 0.8801     & 0.8794  & 0.8807  & 0.8803  & \textbf{0.8834} & 26.92& 27.55  & 27.42      & 27.45  & 27.66  & 27.73  & \textbf{27.88} \\
     \#23   & 0.8025 & 0.8799  & 0.8792    & 0.8835  & 0.8841  & 0.8852  & \textbf{0.8859} & 27.55& 29.68  & 29.55        & 29.60  & 29.67  & 29.67  & \textbf{29.74} \\
     \#24  & 0.8577  & 0.9136  & 0.9246    & 0.9248  & 0.9402  & 0.9439  & \textbf{0.9476}  & 28.61& 30.09  & 30.17    & 30.21  & 30.59  & 30.52  & \textbf{30.68} \\
     \#25  & 0.8943 & 0.8859  & 0.8873    & 0.8861  & 0.8976  & 0.8965  & \textbf{0.8985} & 28.75  & 30.24  & 30.51    & 30.04  & 30.42  & 30.65  & \textbf{30.81} \\
     \#26& 0.8523  & 0.8846  & 0.8873    & 0.8959  & 0.8949  & 0.8966  & \textbf{0.8968}   & 32.88 & 33.49  & 33.91     & 34.01  & 34.28  & 34.21  & \textbf{34.33} \\
     \#27  & 0.7926& 0.8349  & 0.8392      & 0.8451  & 0.8468  & 0.8489  & \textbf{0.8491}  & 25.62  & 27.29  & 27.31  & 27.23  & 27.43  & 27.46  & \textbf{27.49} \\
     \#28 & 0.8195& 0.8311  & 0.8305     & 0.8397  & 0.8401  & 0.8387  & \textbf{0.8402}& 25.51 & 26.69  & 26.71      & 26.71  & 26.83  & 26.86  & \textbf{26.88} \\
     \#29 & 0.8264  & 0.8828  & 0.8847    & 0.8796  & 0.8871  & 0.8869  & \textbf{0.8876} & 30.94 & 32.13  & 32.11    & 32.88  & \textbf{32.99}  & 32.86  & \textbf{32.99} \\
    \#30   & 0.8196 & 0.8691  & 0.8677   & 0.8659  & 0.8741  & 0.8738  & \textbf{0.8754}  & 31.20 & 32.68  & 33.45    & 33.69  & 33.82  & 33.71  & \textbf{33.89} \\
    \hline
    AVG   & 0.7588  & 0.8385  & 0.8396   & 0.8420  & 0.8471  & 0.8459  & \textbf{0.8494}  & 28.50 & 30.08  & 30.27   & 30.38  & 30.64  & 30.62  & \textbf{30.74} \\
    \hline
    \hline
    \end{tabular}
\end{table*}%

Table \ref{30 image restlt} records both PSNR and SSIM obtained by different methods, where
our method gives the best restoration qualities.
We also display two representative restoration results, i.e., image $\#15$ and $\#21$, in Fig. \ref{30 denoise result}.
By observing the results of image $\#15$, our proposed method can preserve very good textural structures, while other methods have smoothed out the fine details. The reason behind this is that we use more neighboring points to estimate the update for the central point; see the examples in Fig. \ref{ROF MC point}. Next, we compare the numerical energy and relative error of CFMC\cite{{Gong2019mean}}, MC \cite{Zhu2012image}, one-layer MGMC and MGMC on both image $\#15$ and $\#21$ in Fig. \ref{man energy}, which are corrupted by Gaussian noise with the noise level $\sigma=20$.
As shown, with the same regularization parameter, our MGMC provides lower numerical energy and faster convergence than the ALM-based algorithm in \cite{2017Augmented}.

\begin{figure}[!htbp]
\centering
\includegraphics[width=0.5\textwidth]{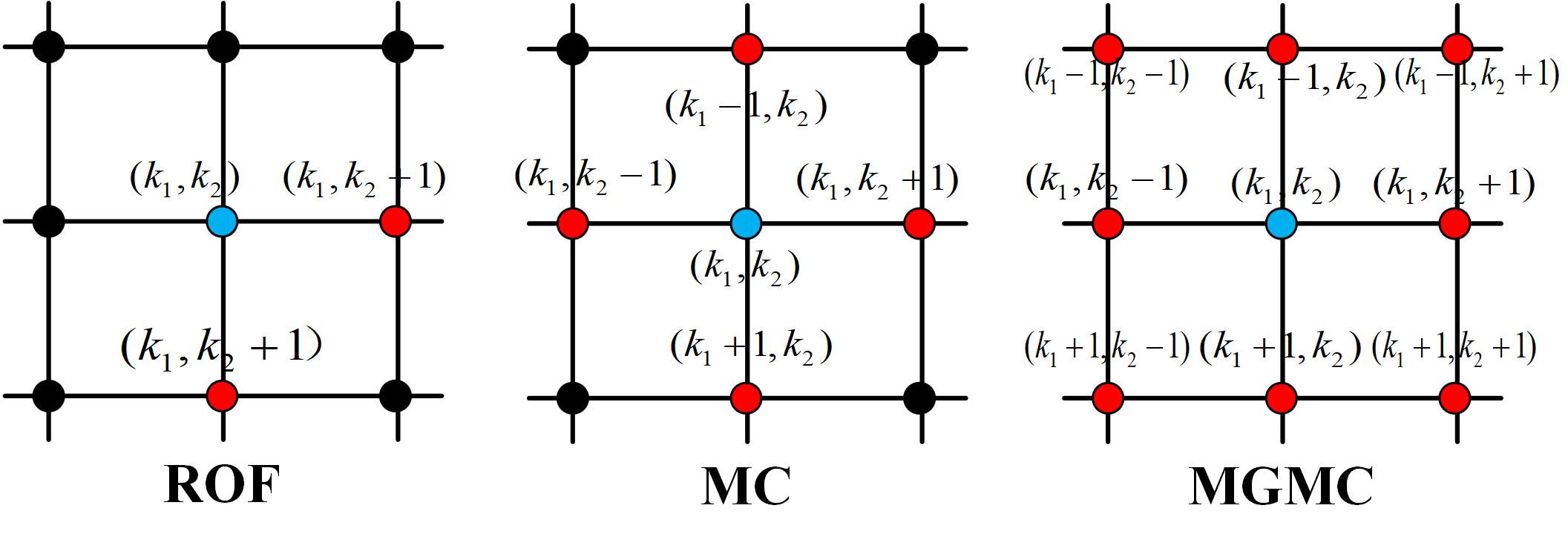}
\caption{The illustration of points used to estimate the update for the central point (blue one), where the red points are used for calculation.
}\label{ROF MC point}
\end{figure}

\begin{figure*}[!htbp]
\centering
\subfloat[\scriptsize{Energy decay of image $\#15$}]{\includegraphics[width=0.21\textwidth]{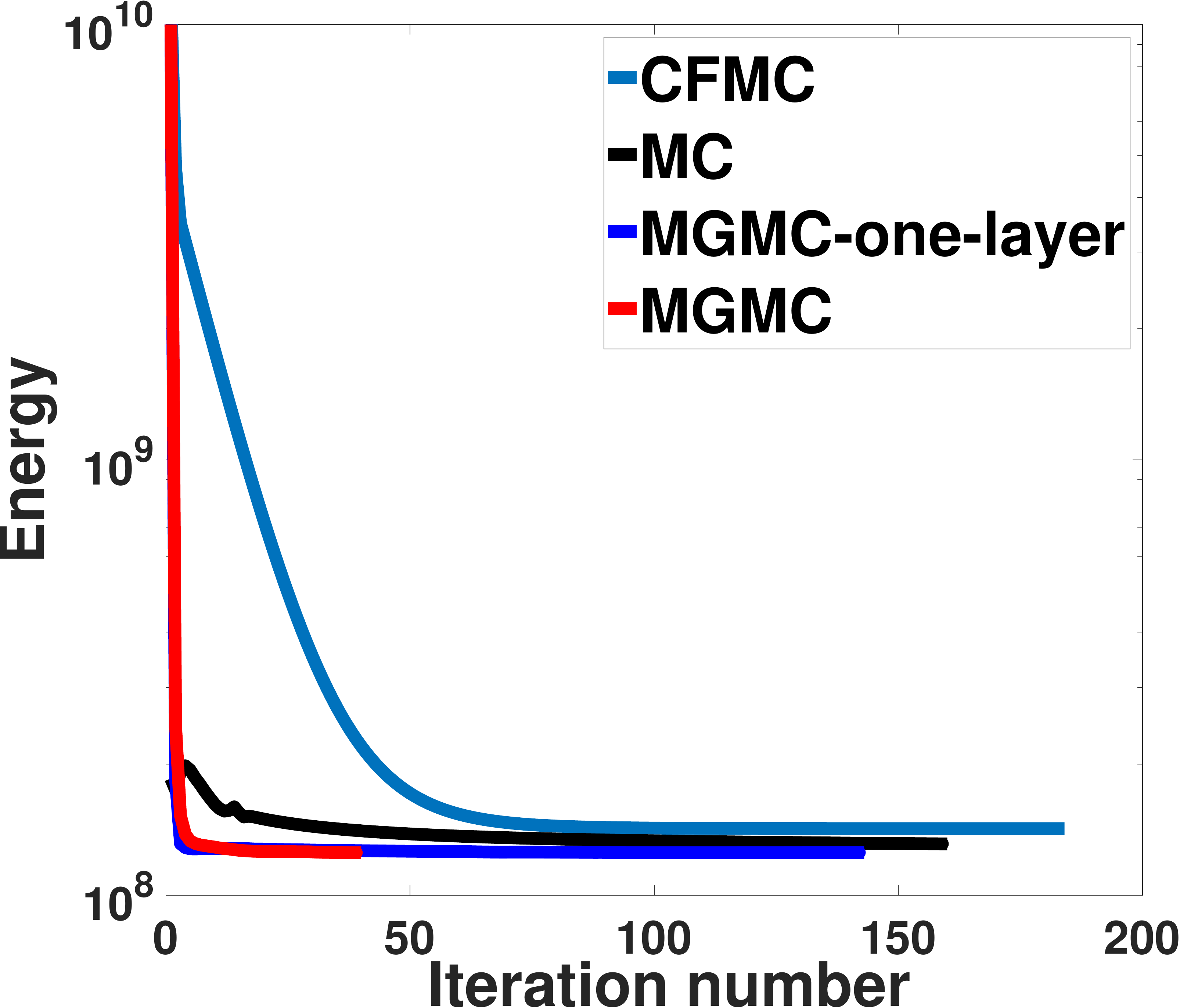}}
\subfloat[\scriptsize{Relative error of image $\#15$} ]{\includegraphics[width=0.21\textwidth]{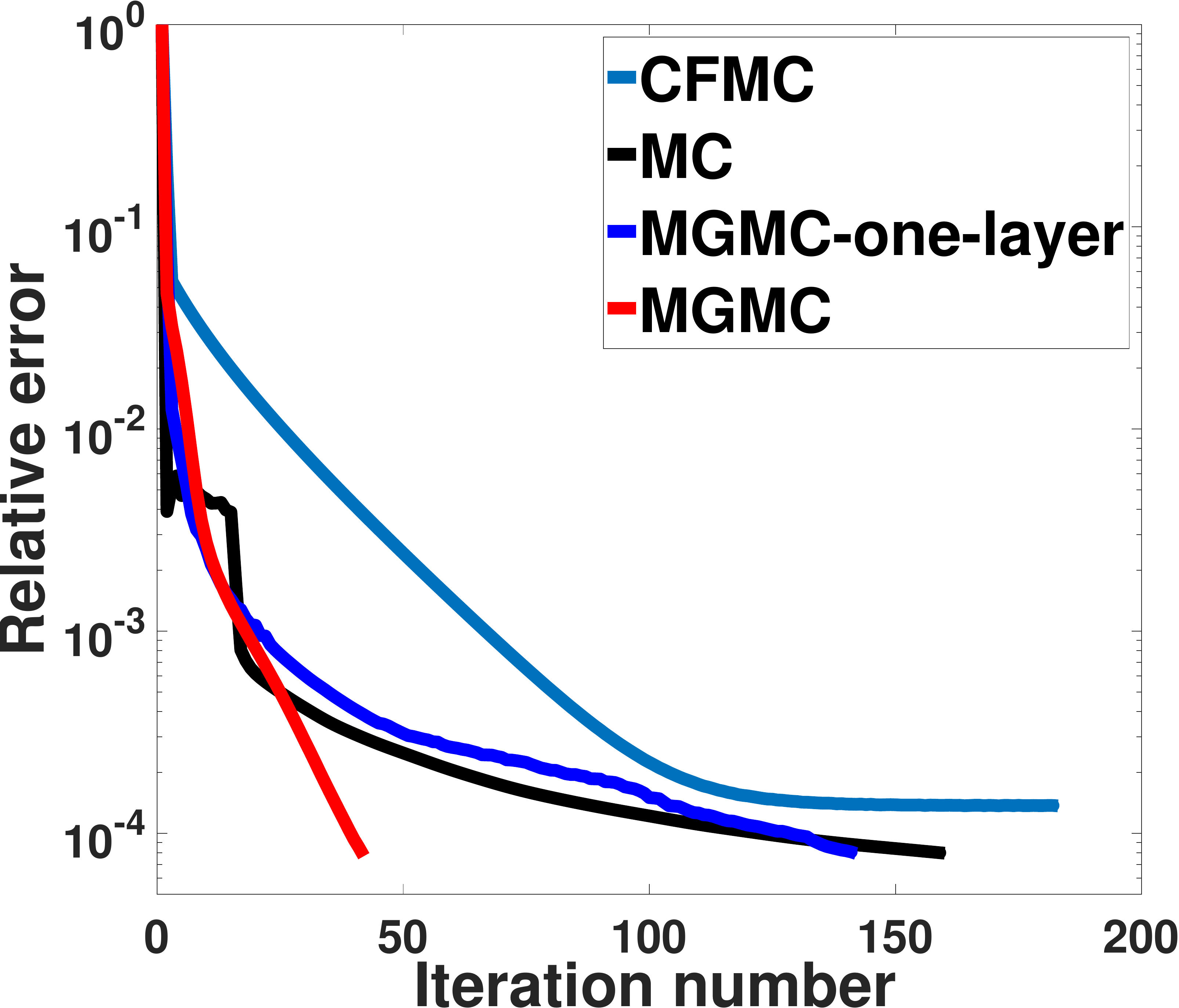}}
\subfloat[\scriptsize{Energy decay of image $\#21$}]{\includegraphics[width=0.21\textwidth]{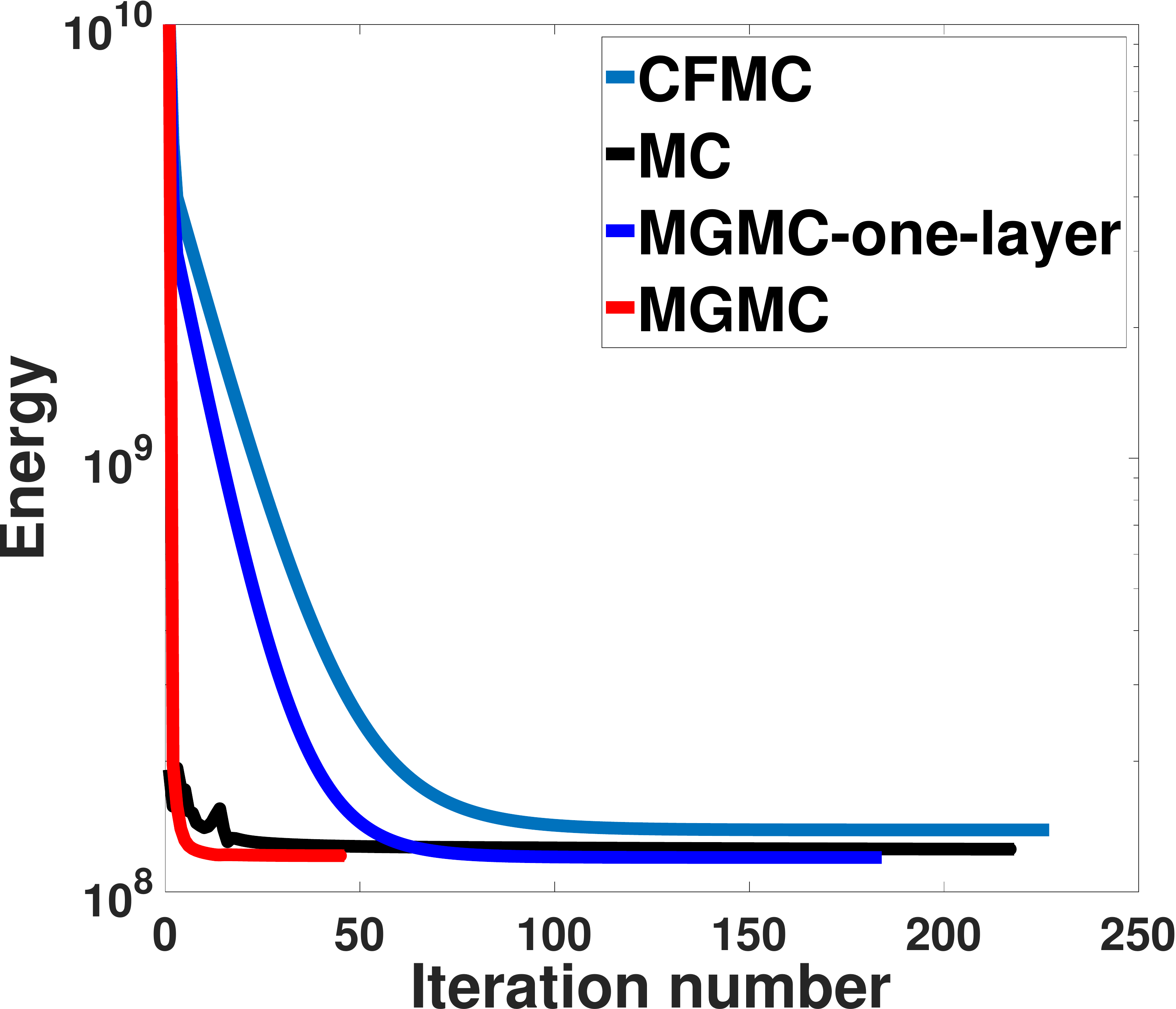}}
\subfloat[\scriptsize{Relative error of image  $\#21$}]{\includegraphics[width=0.21\textwidth]{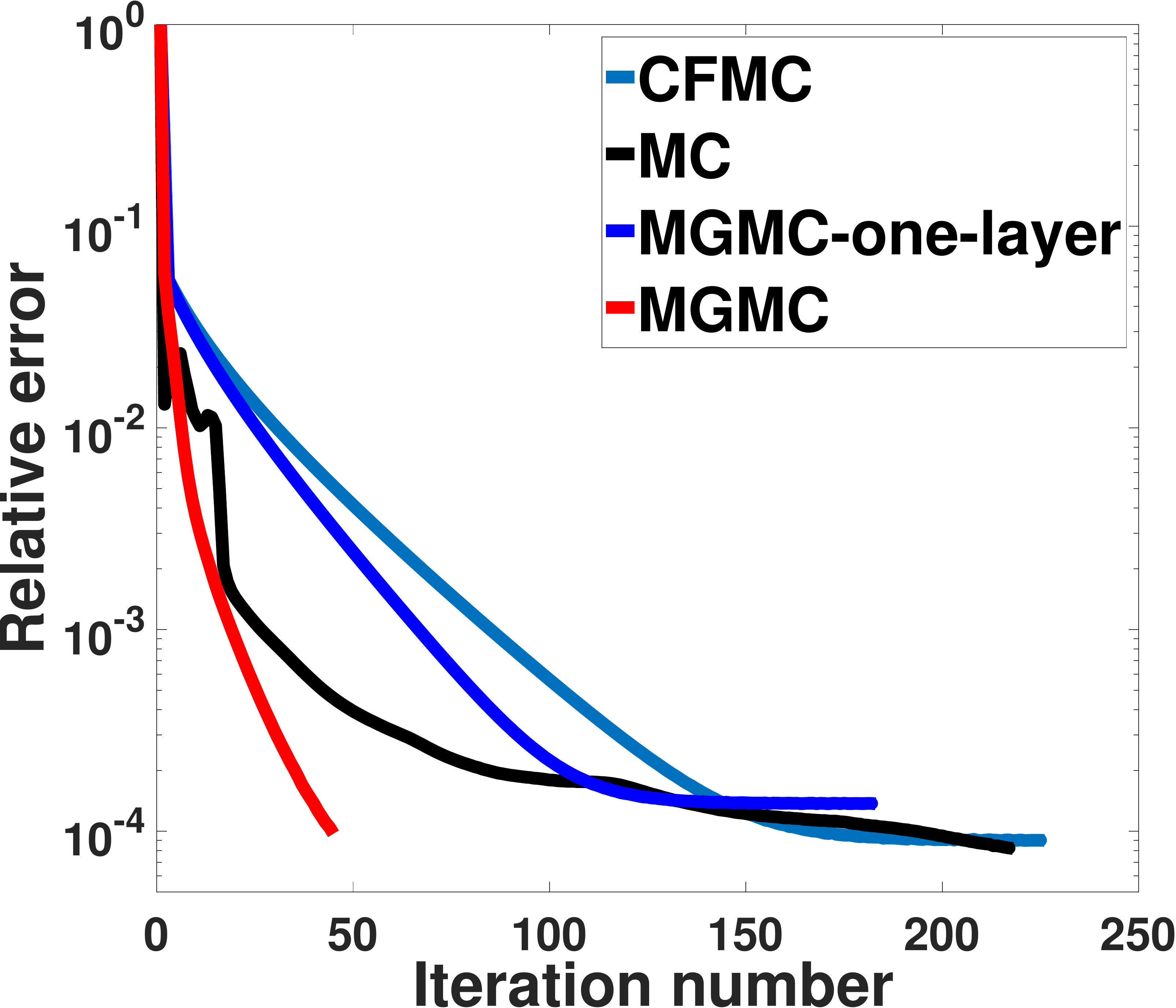}}
\caption{The decays of the numerical energy and relative error for image $\#15$ and $\#21$ corrupted by Gaussian noise of level $\sigma=20$.}\label{man energy}
\end{figure*}

\begin{figure}[!htbp]
\centering
\subfloat{\includegraphics[width=0.24\textwidth]{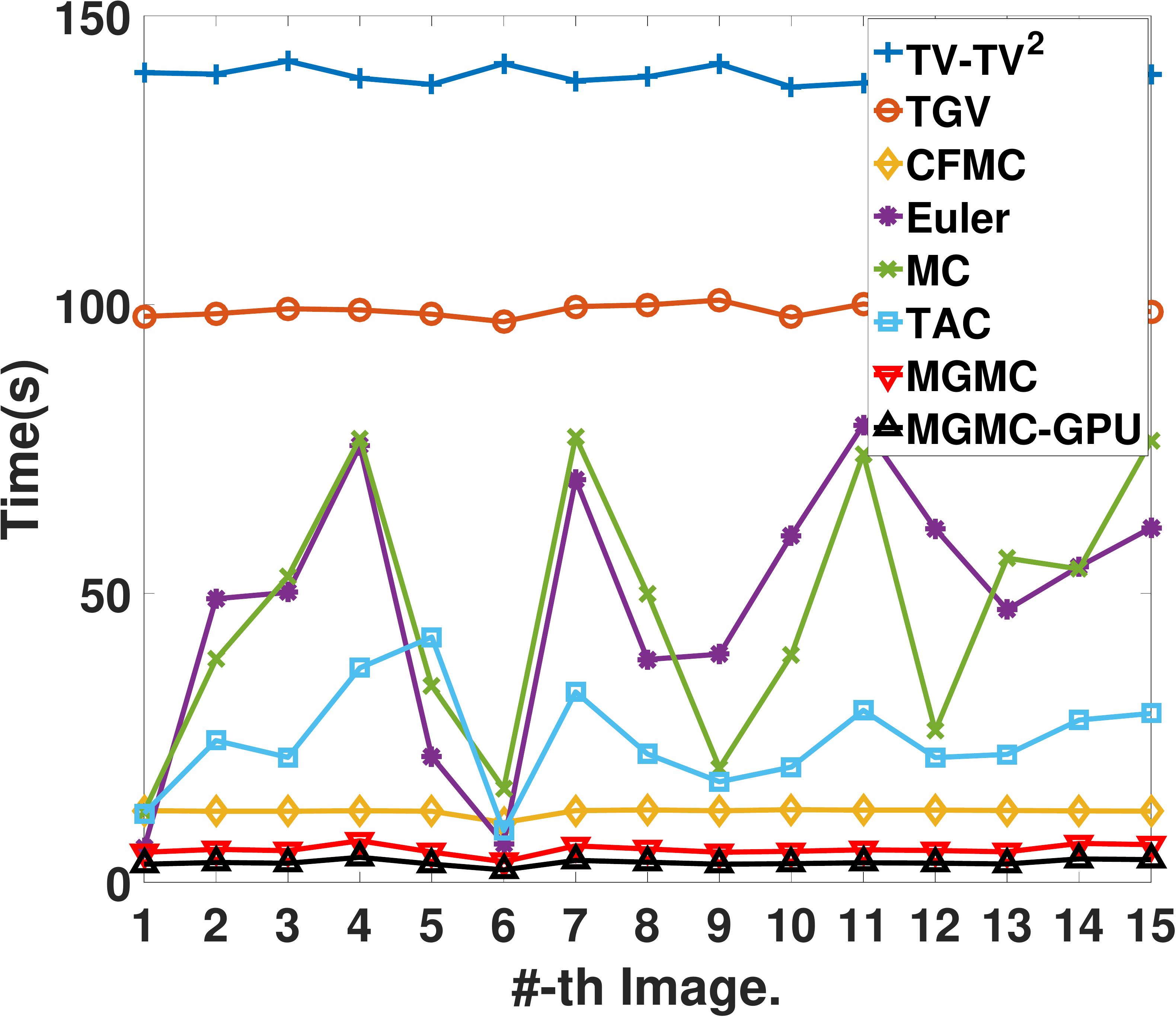}}~
\subfloat{\includegraphics[width=0.24\textwidth]{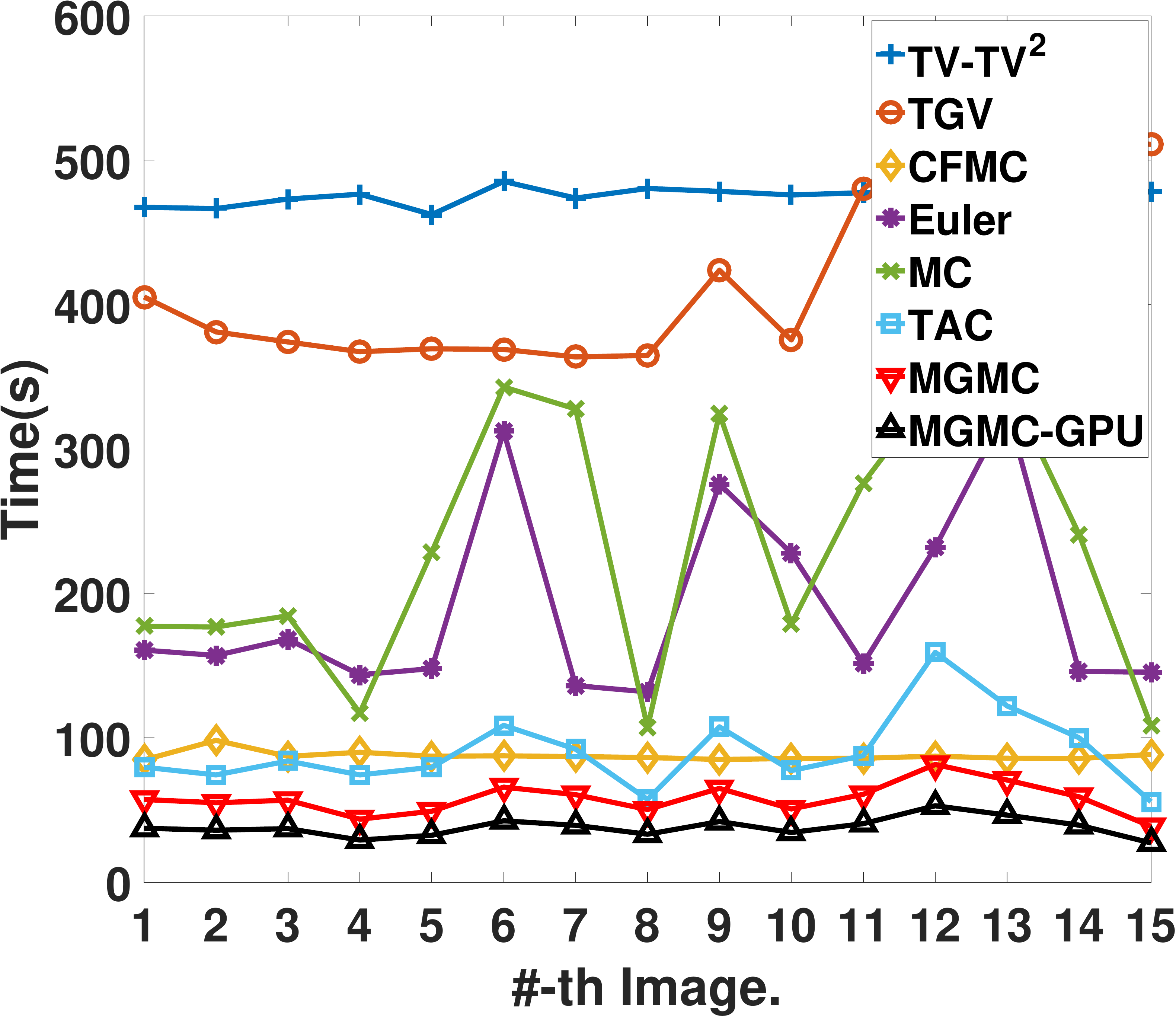}}
\caption{The computational time comparison among different methods on the 30 test images.}\label{30psnr_time}
\end{figure}

Since the sub-minimization problems of the same color in Algorithm \ref{MC_algor} are independent, we can use the GPU computation to improve its efficiency.
The GPU implementation was carried out on a computer with an Intel(R) Core i9 CPU at 3.30 GHz and Nvidia GeForce GTX 1050TI GPU card.
Fig. \ref{30psnr_time} displays the running time of different methods, where the left plot contains images numbered from $\#1$ to $\#15$ with the size of $512\times 512$ and the right plot contains images numbered from $\#16$ to $\#30$ with the size of $1024\times 1024$.
It can be observed, among the curvature minimization approaches, our MGMC method is the fastest one followed by CFMC, TAC, Euler's elastica, MC, TGV, and TVTV$^2$ model. Furthermore, both MC and Euler's elastica model spend similar computational time, more than TAC, which is in accord with our complexity analysis. Moreover, GPU implementation also accelerates efficiency. For images of size 1024$\times$1024, the computational time is improved from the 40s to 25s, which is very important for real applications. In summary, our multi-grid algorithm achieves the best performance on image restoration problems with very high efficiency, which does not trade accuracy for speed.

\section{Application to image reconstruction}\label{section4}

In this section, we extend the curvature regularization method and multi-grid algorithm to more general inverse problems.
The task is to recover $u\in \mathbb R^2$ from the observed data defined by
\begin{equation}\label{degradation}
b=Au +\nu,
\end{equation}
where $\nu$ is the random noise and $A$ is a linear and bounded operator varying with different image processing tasks. To be specific, $A$ represents the Radon transform and Fourier transform for CT and MRI reconstruction, respectively.

We use the mean curvature as the regularization term and are concern with the following image reconstruction problem
\begin{equation*}
\min_{u}~~\frac12\big\|Au-b\big\|^2_2+\alpha\sum_{ x\in\rm\Omega}\big|H(u(x))\big|,
\end{equation*}
which is solved by the aforementioned multi-grid method.
Similarly, we implement the non-overlapping domain decomposition method on each layer to make the subproblems become independent and can be solved in parallel. The sub-minimization problems belonging to the same color are gathered as follows
\begin{equation*}
\small
\min_{c_j\in \mathbb R^{N_k}}\frac12\Big\|A(u+\sum_{i\in I_k}c_j^i\phi_j^i)-b\Big\|_2^2\!+\!\alpha\sum_{i\in I_k}\sum_{x\in \tau_j^i} \big|H(u(x)+c_j^i\phi_j^i(x))\big|.
\end{equation*}
According to Proposition \ref{Bernstein}, the above subproblem can be further reformulated into the following quadratic problem
\begin{equation*}
\min_{c_j\in \mathbb R^{N_k}}\frac12\Big\|A(u+\sum_{i\in I_k}c_j^i\phi_j^i)-b\Big\|_2^2+\alpha\sum_{i\in I_k} \Big(c_j^i-d_j^i\Big)^2,
\end{equation*}
where the closed-form solution is given as
\begin{scriptsize}
\begin{equation}\label{linear equation}
\left[
\begin{array}{lll}
\!L_{1,1}\!&\!\cdots\!&\!\langle A\phi_j^1,A\phi_j^{N_k}\rangle\!\\
\!\langle A\phi_j^2,A\phi_j^1\rangle\!\!&\!\cdots\!&\!\langle A\phi_j^2,A\phi_j^{N_k}\rangle\!\\
      \vdots                              &     \vdots          &      \vdots      \\
\!\langle A\phi_j^{N_k-1},A\phi_j^1\rangle\!&\!\cdots\!&\!\langle A\phi_j^{N_{k-1}},A\phi_j^{N_k}\rangle\!\\
\!\langle A\phi_j^{N_k},A\phi_j^1\rangle\!&\!\cdots\!&\!L_{N_k,N_k}\!\\
\end{array}
\right]
\left[
\begin{array}{l}
\!c_j^1\!\\
\!c_j^2\!\\
\!\vdots\!\\
\!c_j^{N_k-1}\!\\
\!c_j^{N_k}\!\\
\end{array}
\right]
=\left[
\begin{array}{l}
\!r_j^1\!\\
\!r_j^2\!\\
\!\vdots\!\\
\!r_j^{N_k-1}\!\\
\!r_j^{N_k}\!\\
\end{array}
\right],
\end{equation}
\end{scriptsize}with
\[
L_{i,i}=\langle A\phi_j^i,A\phi_j^i\rangle +2\alpha,~~ \mbox{and}~
r_j^i=\langle b-Au,A\phi_j^i\rangle +2\alpha d_j^i,
\]
for $i=1,\cdots ,N_k$.
For such a symmetric linear system, we can implement the conjugate gradient as the numerical solver.

\subsection{CT reconstruction}
The CT reconstruction algorithms can be roughly divided into two categories \cite{2010Fundamentals}: the analytic algorithms and the iterative algorithms. The latter is known to be able to provide better reconstruction images especially when the inverse problem \eqref{degradation} becomes ill-posed \cite{2006Iterative}.
The total variation regularization \cite{Zhang2021ipi}, TV stokes model \cite{liu2013total} and total generalized variation (TGV) \cite{niu2014sparse,2015Efficient} have been studied as the regularization and was shown effective for sparse CT reconstruction problems.
Besides, the multi-grid method has also been applied for CT reconstruction to achieve better reconstruction results \cite{Marlevi2020,Zhang2021ipi}.

Now we discuss the numerical examples of the proposed multi-grid algorithm for CT reconstruction problem. Two phantom images `Shepp-Logan' and `Forbild-gen' with the size of $512\times 512$ and $1024\times 1024$, are used to evaluate the performance. We adopt the parallel-beam geometry for both images in the experiments and set the projection numbers to be $N_p=18$ and 36.

\begin{table*}[htbp]
  \centering
  \footnotesize
  \caption{The comparison in terms of PSNR, SSIM, CPU time, the number of iterations (denoted by \#) and CPU time per iteration (denoted by CPU/I) for CT reconstruction with projection numbers of $N_p=18$ and 36, where image intensity is projected to $[0,1]$.}
  \vspace{-2mm}
    \begin{tabular}{p{5pt}|p{15pt}|p{13pt}p{18pt}|p{13pt}p{18pt}|p{13pt}p{18pt}|p{13pt}p{18pt}|p{13pt}p{18pt}|p{13pt}p{18pt}|p{13pt}p{18pt}|p{13pt}p{18pt}}
\hline
\hline
          \multicolumn{2}{c|}{Sizes}    & \multicolumn{4}{c|}{512 ($\sigma=0$)}       & \multicolumn{4}{c|}{1024 ($\sigma=0$)}      & \multicolumn{4}{c|}{512 ($\sigma=0.005$)}       & \multicolumn{4}{c}{1024 ($\sigma=0.005$)} \\
          \hline
          \multicolumn{2}{c|}{Examples}   & \multicolumn{2}{c|}{\footnotesize{Shepp-Logan}} & \multicolumn{2}{c|}{\footnotesize{Fobild-gen}} & \multicolumn{2}{c|}{\footnotesize{Shepp-Logan}} & \multicolumn{2}{c|}{\footnotesize{Fobild-gen}} & \multicolumn{2}{c|}{\footnotesize{Shepp-Logan}} & \multicolumn{2}{c|}{\footnotesize{Fobild-gen}} & \multicolumn{2}{c|}{\footnotesize{Shepp-Logan}} & \multicolumn{2}{c}{\footnotesize{Fobild-gen}} \\
          \hline
        \multicolumn{2}{c|}{Methods}    &\scriptsize{MGMC} & \scriptsize{MGTV} &\scriptsize{MGMC} & \scriptsize{MGTV}& \scriptsize{MGMC} & \scriptsize{MGTV} &\scriptsize{MGMC} & \scriptsize{MGTV} & \scriptsize{MGMC} & \scriptsize{MGTV}& \scriptsize{MGMC} & \scriptsize{MGTV}& \scriptsize{MGMC} & \scriptsize{MGTV} & \scriptsize{MGMC} & \scriptsize{MGTV}  \\
          \hline
    \multirow{5}[0]{*}{18} & PSNR  & \textbf{36.85} & 34.08 & \textbf{33.01} & 31.38 & \textbf{32.42} & 30.88 & \textbf{28.52} & 27.18 & \textbf{29.37} & 28.52 & \textbf{24.67} & 24.18 & \textbf{29.37} & 28.99 & \textbf{24.48} & 23.57 \\
          & SSIM  & \textbf{0.9901} & 0.9881 & \textbf{0.9801} & 0.9746 & \textbf{0.9044} & 0.8947 & \textbf{0.8574} & 0.8361 & \textbf{0.8846} & 0.8508 & \textbf{0.8579} & 0.7126 & \textbf{0.8949} & 0.8735 & \textbf{0.8818} & 0.8715 \\
          & CPU   & 68.65 & \textbf{51.72} & 79.05 & \textbf{68.74} & \textbf{287.61} & 300.24 & \textbf{350.64} & 480.24 & \textbf{57.19} & 60.52 & \textbf{65.27} & 80.36 & \textbf{325.45} & 394.28 & \textbf{380.48} & 452.51 \\
          & \#    & 104   & \textbf{89} & 124   & \textbf{112} & 134   & \textbf{127} & 134   & \textbf{126} & 114   & \textbf{96} & 130   & \textbf{120} & 121   & \textbf{105} & 132   & \textbf{125} \\
          & CPU/I & 0.4782 & \textbf{0.3612} & 0.474 & \textbf{0.4647} & \textbf{1.9769} & 2.1793 & \textbf{2.6016} & 3.6585 & \textbf{0.5016} & 0.6304 & \textbf{0.5021} & 0.6696 & \textbf{2.6896} & 3.7551 & \textbf{2.8824} & 3.4041 \\
          \hline
    \multirow{5}[0]{*}{36} & PSNR  & \textbf{44.66} & 43.18 & \textbf{38.81} & 37.8  & \textbf{39.24} & 38.35 & \textbf{37.24} & 36.15 & \textbf{33.09} & 31.85 & \textbf{28.22} & 26.78 & \textbf{31.72} & 29.18 & \textbf{28.97} & 26.17 \\
          & SSIM  & \textbf{0.9962} & 0.9911 & \textbf{0.993} & 0.991 & \textbf{0.9914} & 0.9899 & \textbf{0.9854} & 0.9749 & \textbf{0.8796} & 0.8506 & \textbf{0.8579} & 0.8391 & \textbf{0.9245} & 0.9024 & \textbf{0.8546} & 0.8355 \\
          & CPU   & 90.65 & \textbf{73.23} & 98.85 & \textbf{77.23} & \textbf{376.43} & 454.75 & \textbf{486.42} & 654.75 & \textbf{80.54} & 90.23 & \textbf{89.85} & 110.25 & \textbf{345.52} & 425.45 & \textbf{454.46} & 500.58 \\
          & \#    & 116   & \textbf{108} & 134   & \textbf{128} & 145   & \textbf{139} & 168   & \textbf{151} & 128   & \textbf{125} & 151   & \textbf{142} & 138   & \textbf{131} & 152   & \textbf{148} \\
          & CPU/I & 0.6197 & \textbf{0.5188} & 0.594 & \textbf{0.4916} & \textbf{2.4745} & 3.1462 & \textbf{2.7804} & 4.2162 & \textbf{0.6291} & 0.7218 & \textbf{0.5956} & 0.7764 & \textbf{2.5037} & 3.2477 & \textbf{2.9767} & 3.3822 \\
          \hline
          \hline
    \end{tabular}%
   \label{CT_PSNR}%
\end{table*}%

\begin{figure}[t]
\centering
\vspace{-2.5mm}
\subfloat[MGTV($27.18$dB)]{\includegraphics[width=0.2\textwidth]{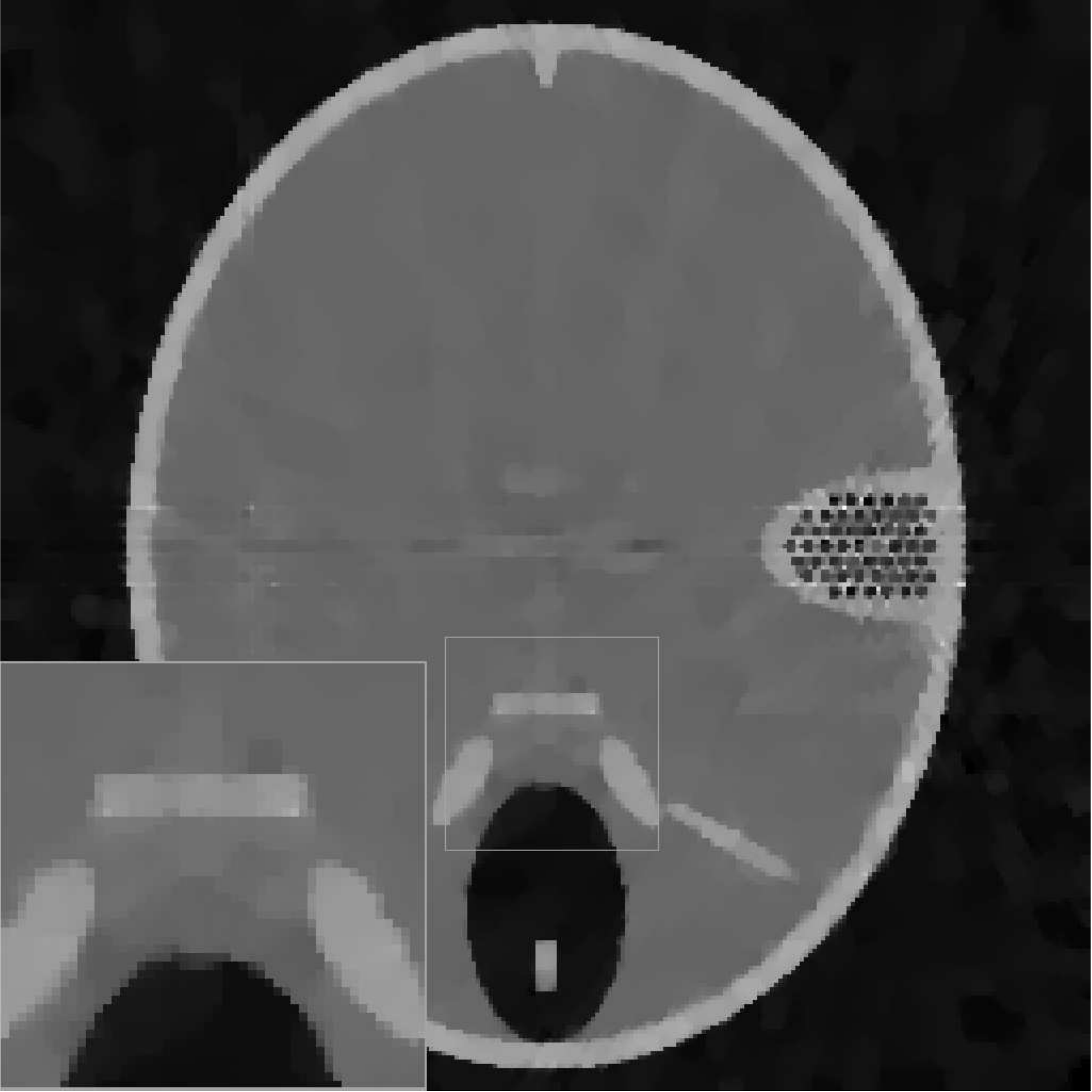}}
\hspace{-1mm}
\subfloat[MGMC($28.52$dB)]{\includegraphics[width=0.2\textwidth]{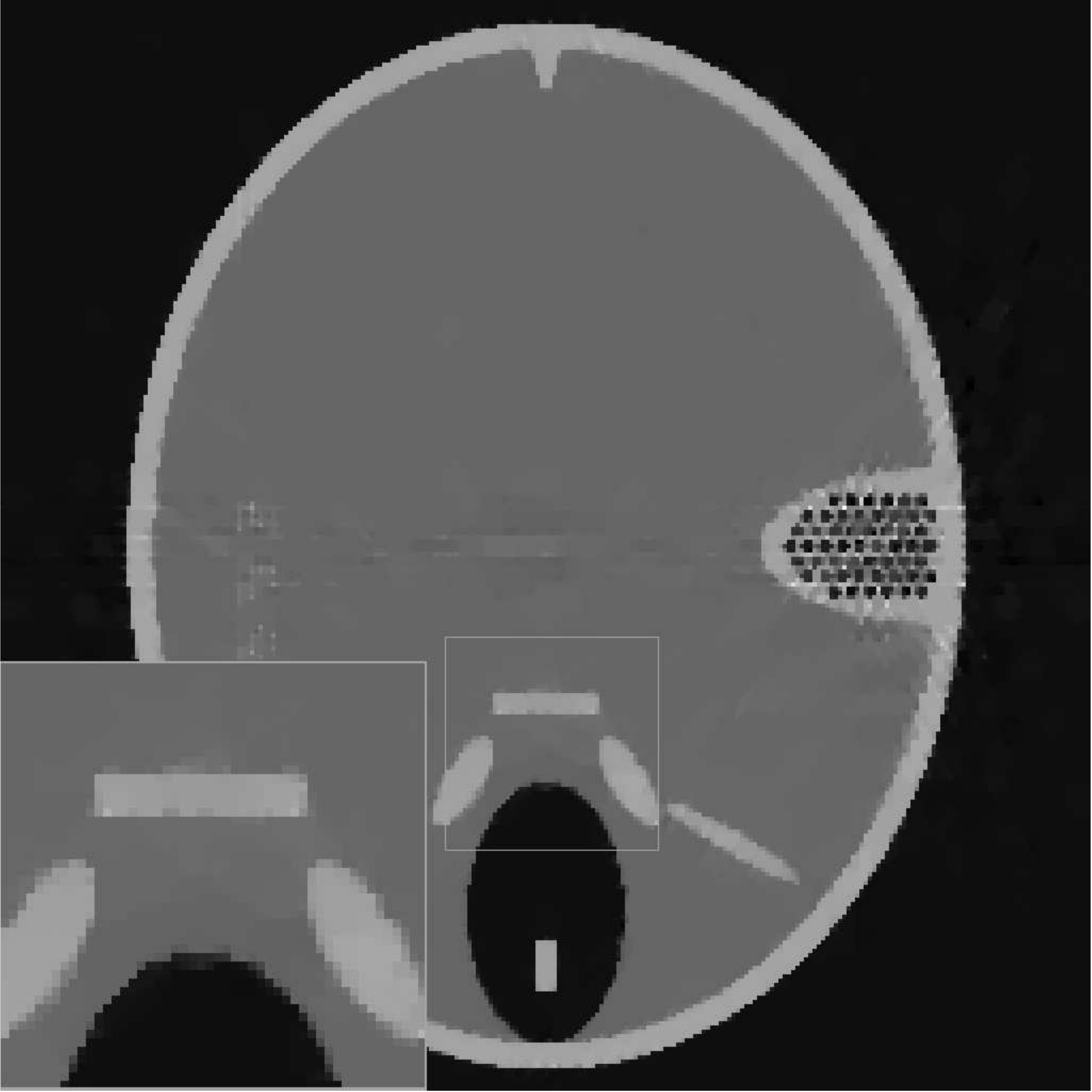}}

\vspace{-2mm}
\subfloat[MGTV($26.17$dB)]{\includegraphics[width=0.2\textwidth]{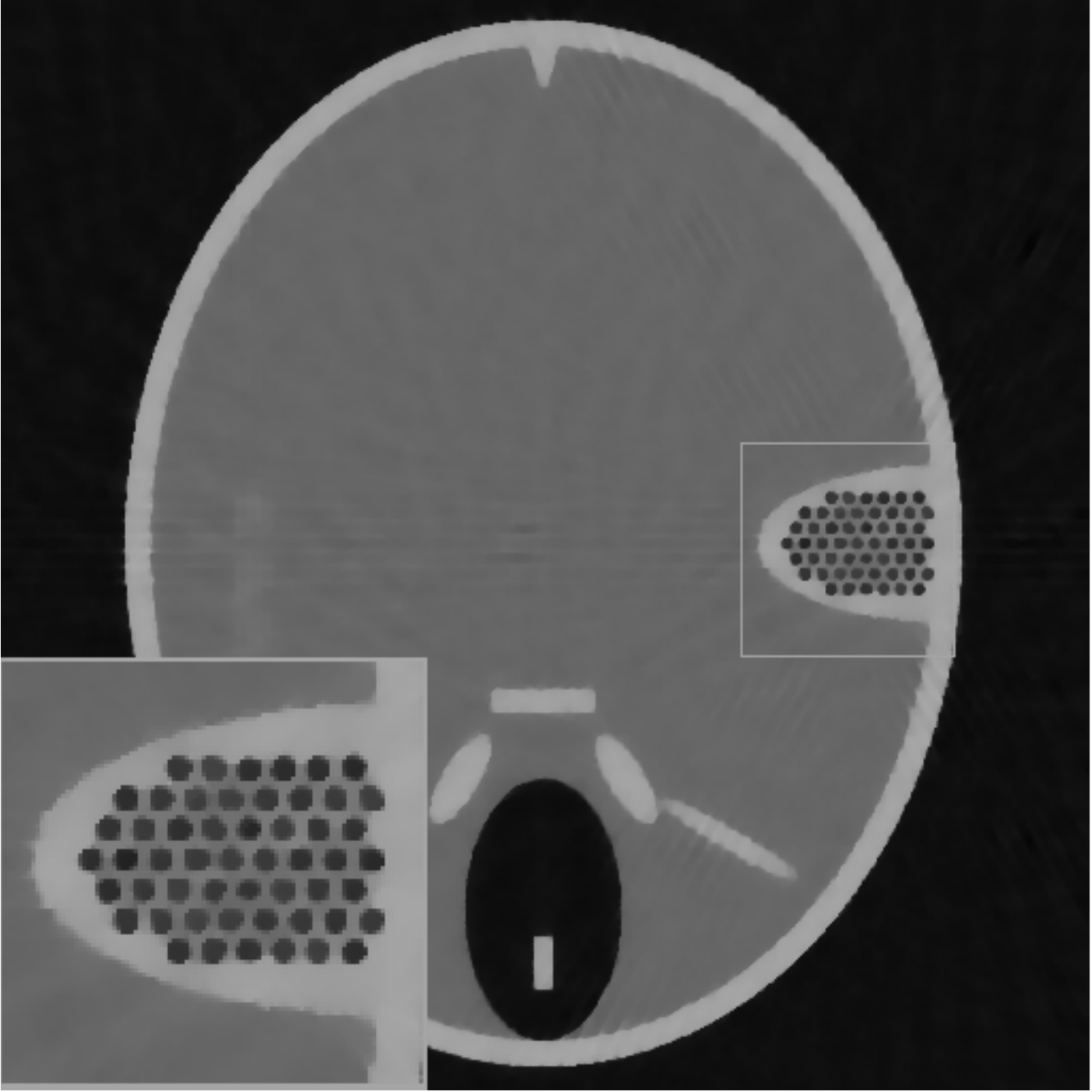}}
\hspace{-1mm}
\subfloat[MGMC($28.97$dB)]{\includegraphics[width=0.2\textwidth]{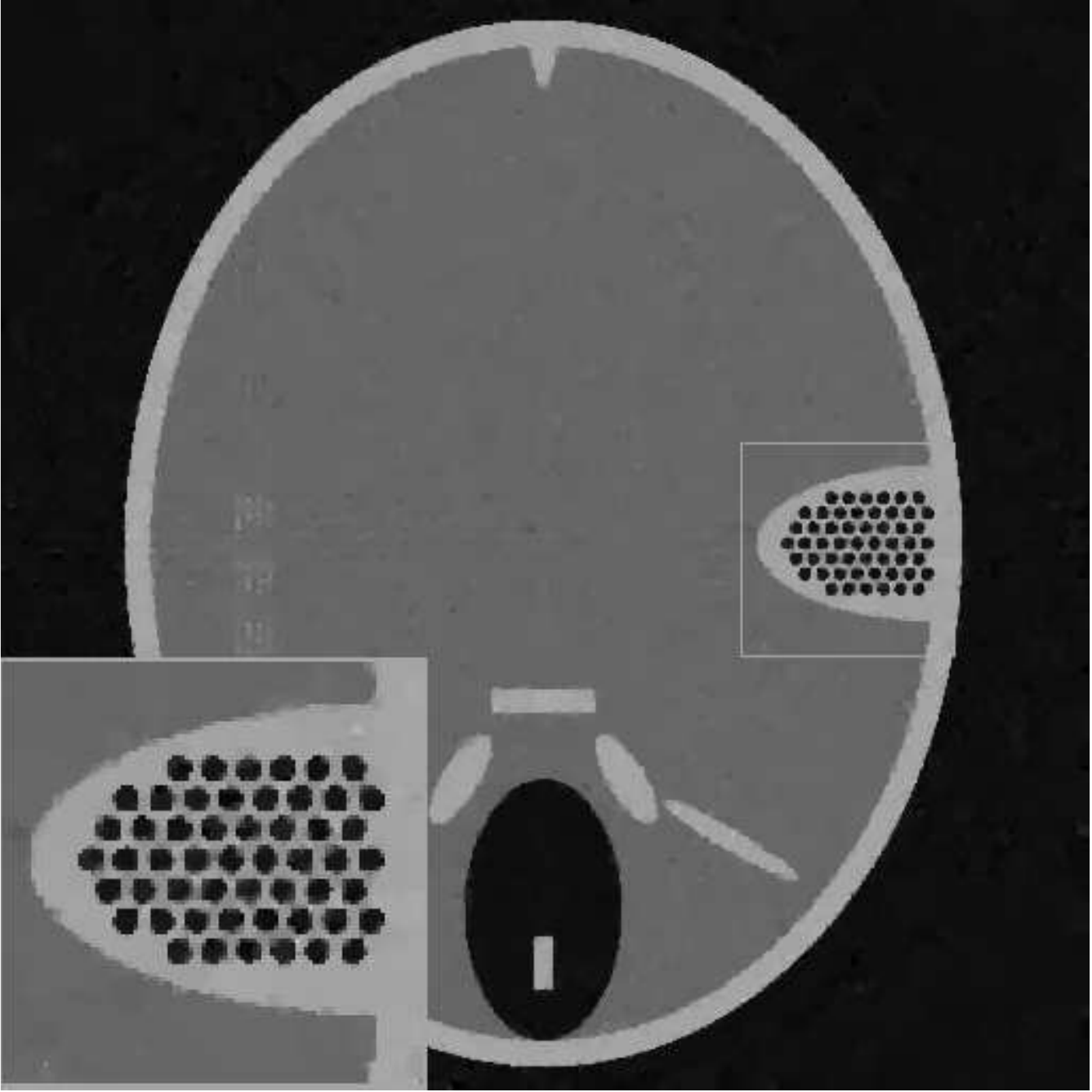}}
\caption{The comparison results on 'Forbild-gen' with the size of $1024\times1024$ and projection numbers be $N_p=18$, while the second row is the result with noise level $0.005$ and the projection numbers be $N_p=36$.}
\label{CT_figure}
\end{figure}

In what follows, we evaluate the effectiveness and efficiency of the proposed multi-grid method by comparing it with the total variation model \cite{Zhang2021ipi}, which is also implemented by the  multi-grid method.
The implementation details are described as follows
\begin{enumerate}
\item[1)] The multi-grid total variation model (denoted by MGTV) \cite{Zhang2021ipi}: The parameters are set as  $\alpha=3.5\times10^{-5}$ and $2\times10^{-5}$ for projection number $N_p=18$ and 36, respectively, and $\beta=10^{-6}$ for noiseless experiments. We set $\alpha=3\times10^{-4}$ and $5\times10^{-4}$ for $N_p=18$ and 36 on the noise level $\sigma=0.005$, where the image intensity is projected to $[0,1]$.
\item[2)] The multi-grid mean curvature model (denoted by MGMC): The parameters are set as $\alpha=4\times10^{-3}$ and $3\times10^{-3}$ for projection number $N_p=18$ and 36, respectively. We set $\alpha=1\times10^{-4}$ and $2\times10^{-4}$ for $N_p=18$ and 36 on the noise level $\sigma=0.005$.
\end{enumerate}
Both multi-grid methods are stopped using the relative error of the numerical energy \eqref{reerr} for $\epsilon=10^{-4}$ and the stopping criteria for the linear system \eqref{linear equation} is when the iteration number reaches the maximum iteration number of $10$. The number of layers is set as $J=4$ for both algorithms.

The comparison results of PSNR, SSIM,  CPU time, the number of iterations, and CPU time per iteration are all recorded in Table \ref{CT_PSNR}. For different combinations of images and projection numbers, our MGMC always gives higher PSNR and SSIM than MGTV, which benefits from the strong priors of the curvature regularization.
In terms of computational efficiency, because our local minimization problem has an analytical solution, the computational efficiency of our MGMC is quite high, which is verified by the CPU time per iteration. On the other hand, it requires solving a nonlinear PDE on the local problems for the TV regularization model, which is time consuming. The advantages of our method become stronger for the large-scale problems that our curvature regularization model performs faster than TV model.
Finally, we present the selective reconstruction results in Fig. \ref{CT_figure}. As can be seen, our MGMC outperforms the MGTV method, which produces homogeneous results with fine details and small structures.

\begin{table*}[htbp]
  \centering
  \caption{The comparison in terms of PSNR, SSIM, CPU time, the number of iterations (denoted by \#) and CPU time per iteration (denoted by CPU/I) among different methods for compressed sensing MRI reconstruction problems, where the zero-mean Gaussian noise with noise level of $\sigma =10$ is introduced into both data.}
    \begin{tabular}{c|c|ccccc|ccccc}
    \hline
    \hline
         \multirow{2}[0]{*}{Images}  & \multirow{2}[0]{*}{Methods}       & \multicolumn{5}{c|}{Cartersian sampling ($20.06\%$)}  & \multicolumn{5}{c}{Radial sampling ($12.65\%$)} \\
     \cline{3-12}
          &       & \multicolumn{1}{l}{CPU} & \multicolumn{1}{l}{$\#$} & \multicolumn{1}{l}{CPU/I} & \multicolumn{1}{l}{PSNR} & \multicolumn{1}{l|}{SSIM} & \multicolumn{1}{l}{CPU} & \multicolumn{1}{l}{$\#$} & \multicolumn{1}{l}{CPU/I} & \multicolumn{1}{l}{PSNR} & \multicolumn{1}{l}{SSIM} \\
    \hline
    \multirow{4}[0]{*}{Brain} & TV    & 19.47 & 4993  & 0.0039 & 22.91 & 0.6401 & 35.57 & 7412  & 0.0041 & 30.42 & 0.9419 \\
          & TGVST   & 24.32 & 191   & 0.1272 & 23.59 & 0.7842 & 15.11 & 151   & 0.1159 & 31.36 & 0.9615 \\
          & BM3D-MRI   & 24.75 &33   & 0.7487 & 23.72 & 0.6415 & 28.51 & 38   & 0.7500  & 31.47 & 0.9561 \\
          & MGMC  & \textbf{15.14} & 174   & 0.087 & \textbf{24.62} & \textbf{0.7855} & \textbf{12.78} & 145   & 0.079 & \textbf{31.85} & \textbf{0.9674} \\
    \hline
    \multirow{4}[0]{*}{Foot} & TV   & 25.95 & 5768  & 0.0045 & 25.96 & 0.8443 & 29.57 & 7214  & 0.0043 & 27.9  & 0.8466 \\
          & TGVST   & 23.27 & 185   & 0.1243 & 27.29 & 0.8556 & 22.34 & 160   & 0.1396 & 28.57 & 0.8546 \\
         & BM3D-MRI   & 29.41 & 39   & 0.7514  & 27.11 & 0.8547 & 28.45 & 37   & 0.7546  & 28.63 & 0.8389 \\
          & MGMC  & \textbf{13.45} & 149   &  0.085 & \textbf{27.63} & \textbf{0.8559} & \textbf{15.54} & 166   & 0.087 & \textbf{29.12} & \textbf{0.8569} \\
    \hline
    \hline
    \end{tabular}%
  \label{MRI_PSNR}%
\end{table*}%

\begin{figure*}[!htbp]
\centering
\subfloat{\includegraphics[width=0.12\textwidth]{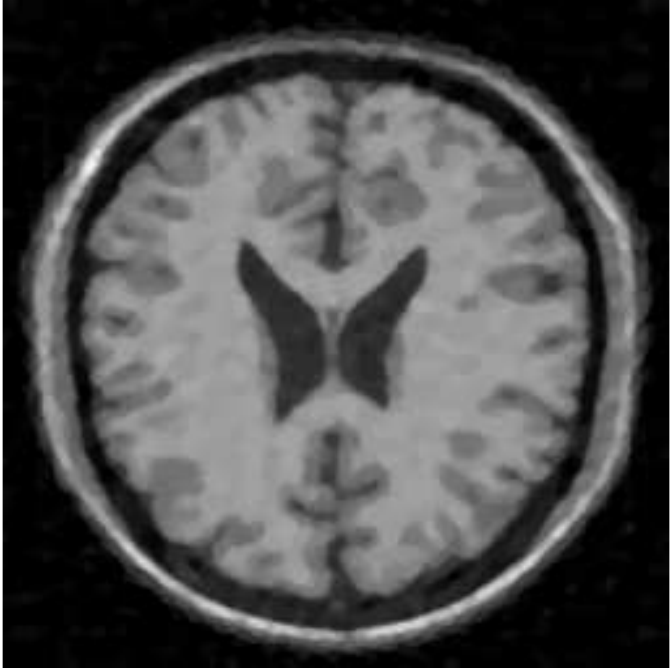}}
\hspace{-1mm}
\subfloat{\includegraphics[width=0.12\textwidth]{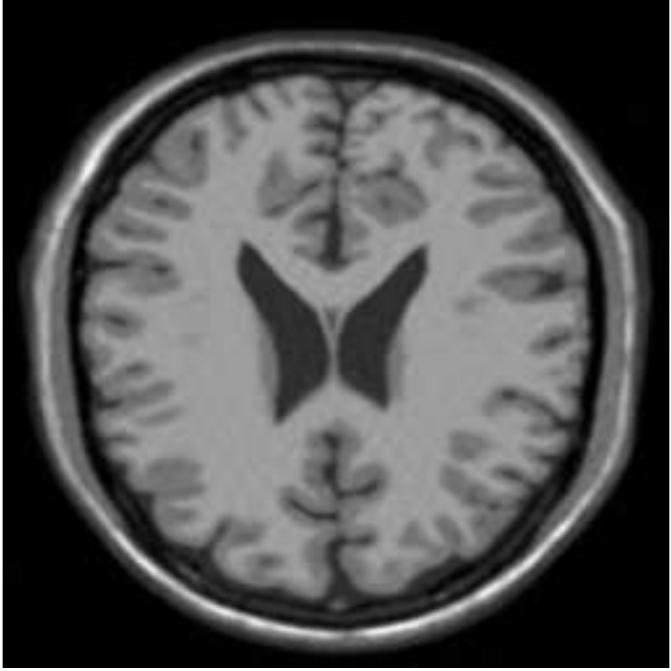}}
\hspace{-1mm}
\subfloat{\includegraphics[width=0.12\textwidth]{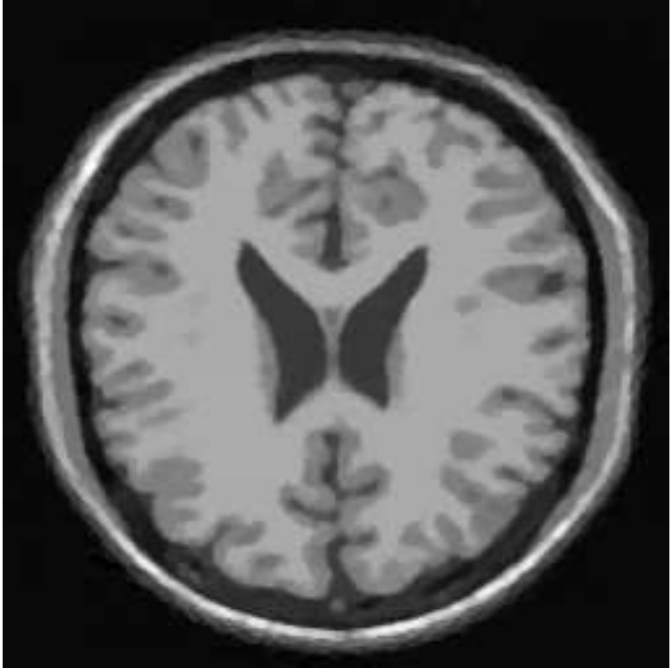}}
\hspace{-1mm}
\subfloat{\includegraphics[width=0.12\textwidth]{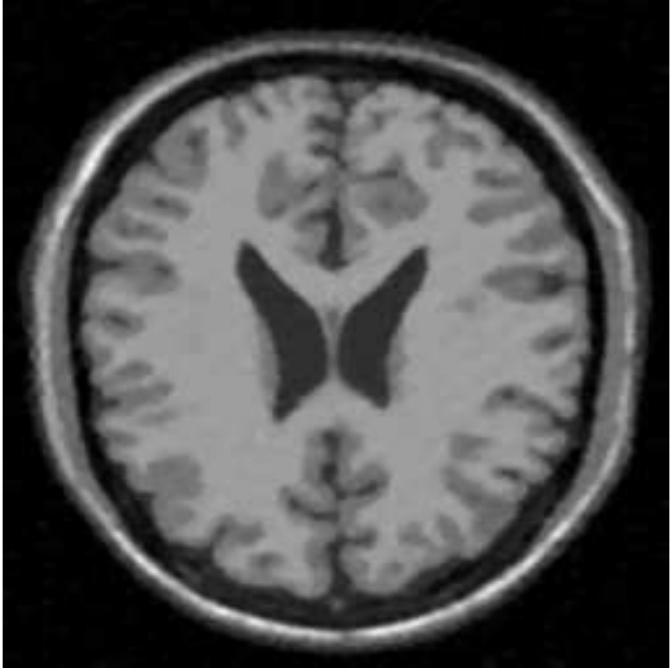}}
\hspace{-1mm}
\subfloat{\includegraphics[width=0.12\textwidth]{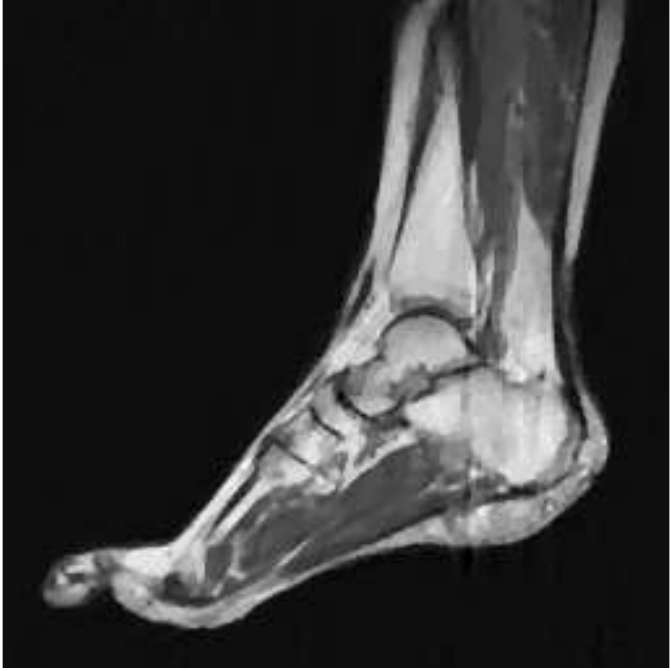}}
\hspace{-1mm}
\subfloat{\includegraphics[width=0.12\textwidth]{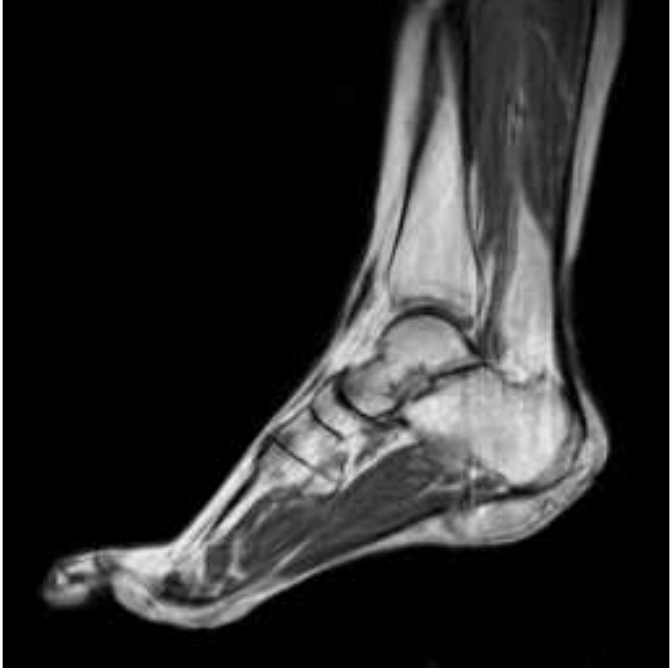}}
\hspace{-1mm}
\subfloat{\includegraphics[width=0.12\textwidth]{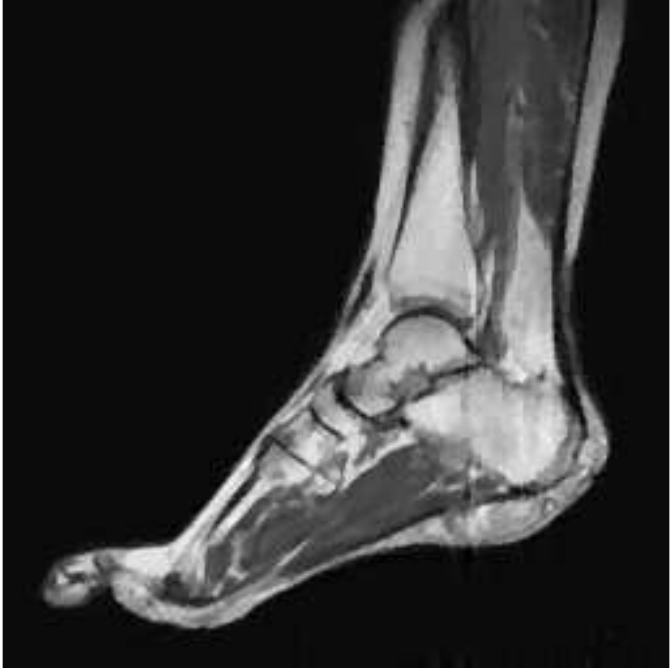}}
\hspace{-1mm}
\subfloat{\includegraphics[width=0.12\textwidth]{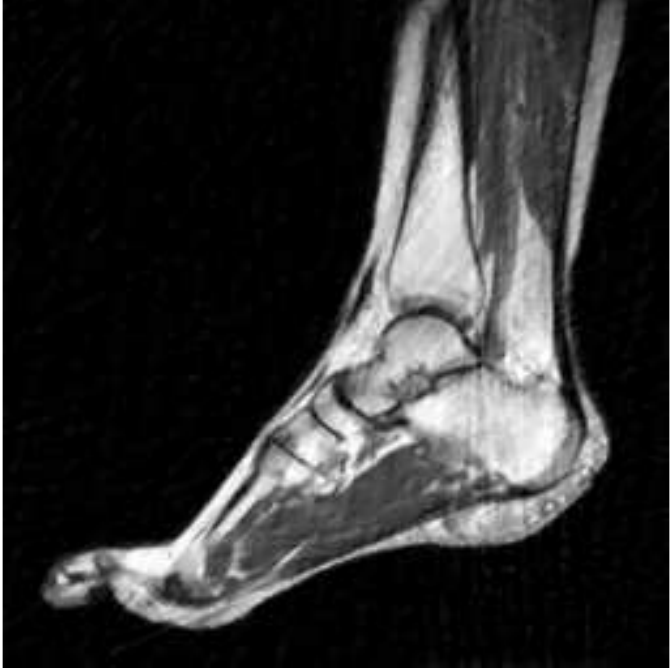}}

\vspace{-2mm}
\setcounter{subfigure}{0}
\subfloat[\footnotesize{TV}]{\includegraphics[width=0.12\textwidth]{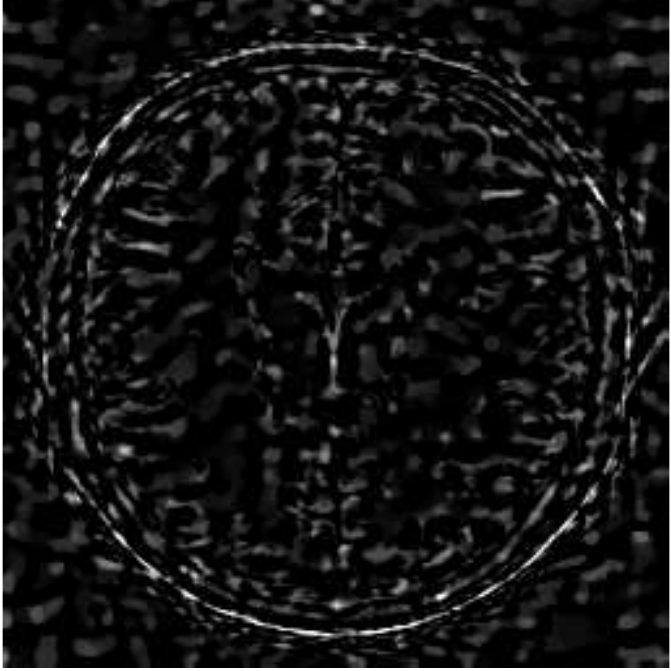}}
\hspace{-1mm}
\subfloat[\footnotesize{TGVST}]{\includegraphics[width=0.12\textwidth]{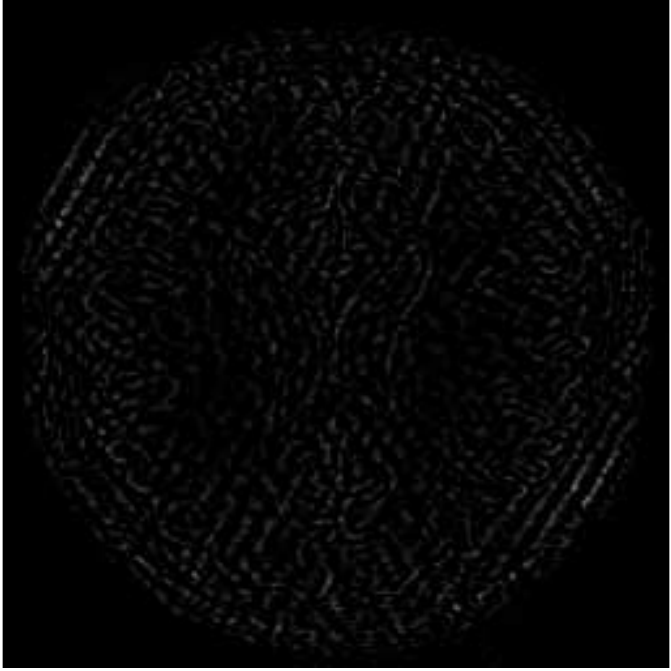}}
\hspace{-1mm}
\subfloat[\footnotesize{BM3D-MRI}]{\includegraphics[width=0.12\textwidth]{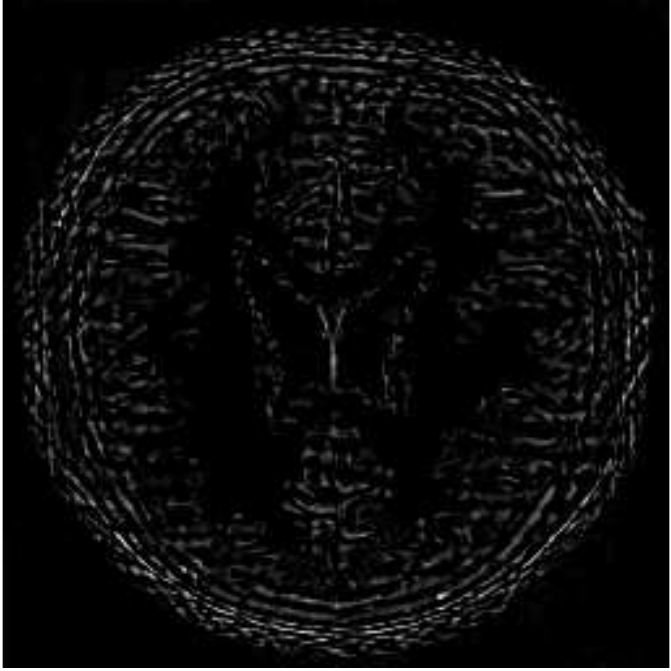}}
\hspace{-1mm}
\subfloat[\footnotesize{MGMC}]{\includegraphics[width=0.12\textwidth]{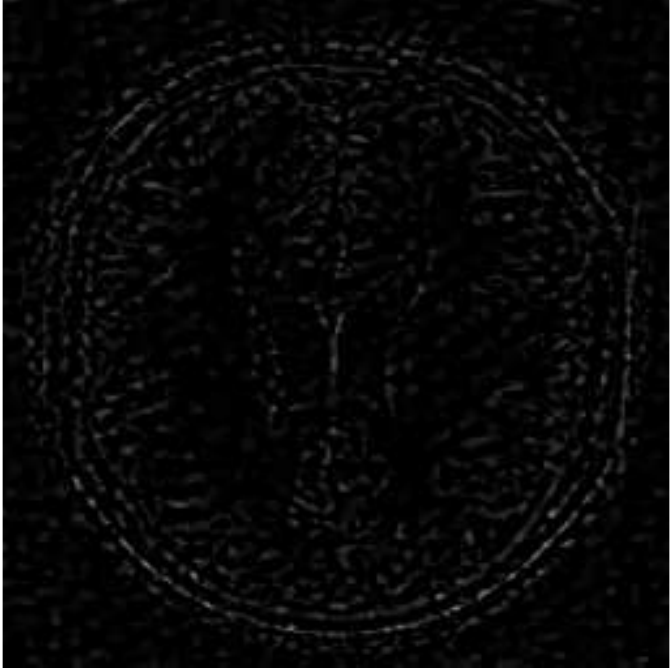}}
 \hspace{-1mm}
\subfloat[\footnotesize{TV}]{\includegraphics[width=0.12\textwidth]{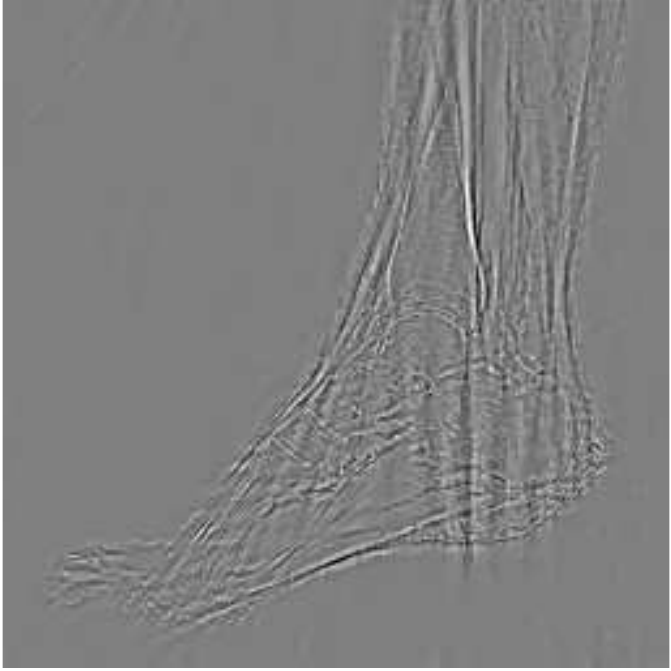}}
\hspace{-1mm}
\subfloat[\footnotesize{TGVST}]{\includegraphics[width=0.12\textwidth]{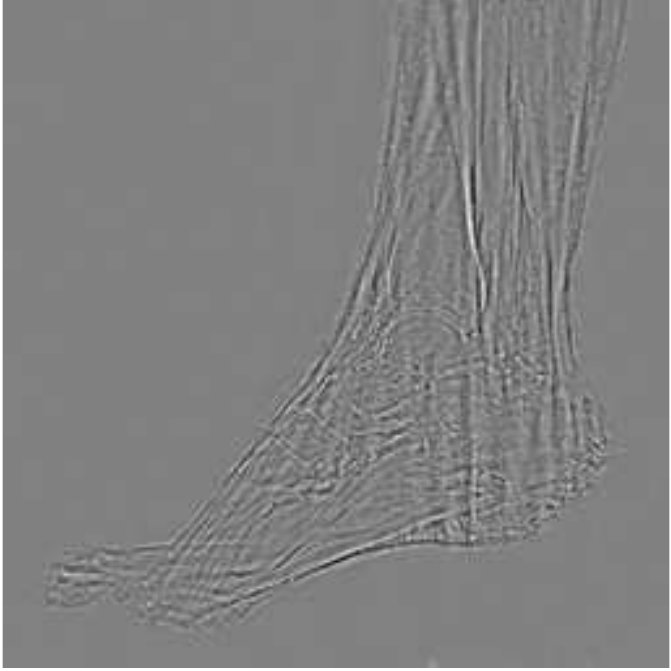}}
\hspace{-1mm}
\subfloat[\footnotesize{BM3D-MRI}]{\includegraphics[width=0.12\textwidth]{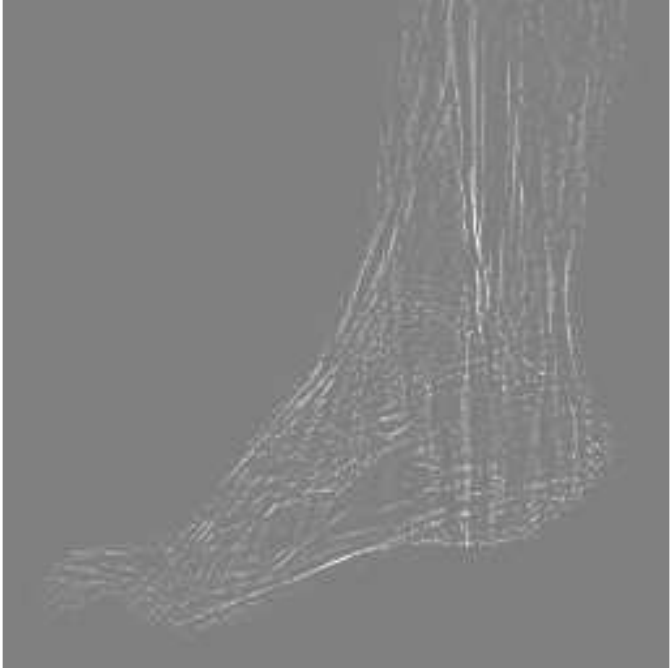}}
\hspace{-1mm}
\subfloat[\footnotesize{MGMC}]{\includegraphics[width=0.12\textwidth]{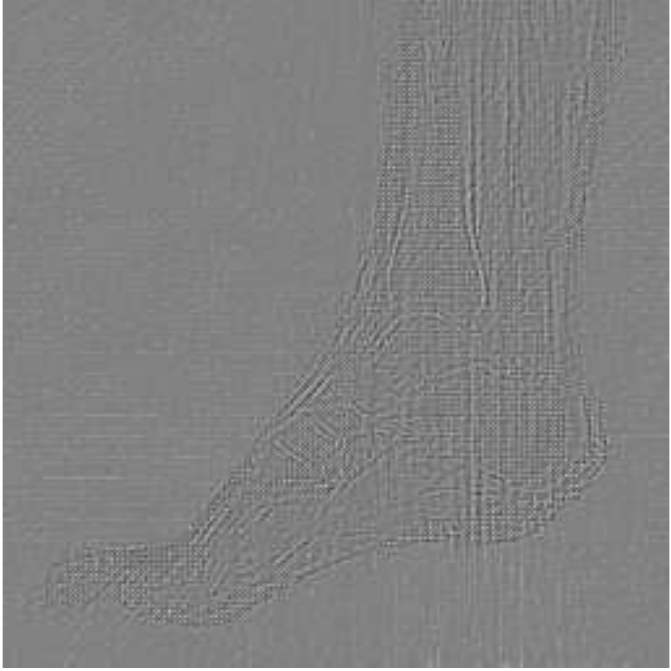}}
\caption{MRI reconstruction results and residual images of the brain image and foot image under $12.65\%$ radial sampling patterns and Gaussian noise of $\sigma=10$. Note that the residual images are displayed in $[0,0.2]$ and $[-0.4,0.4]$ for brain and foot images, respectively.}\label{MRI_figure}
\end{figure*}

\subsection{MRI reconstruction}
Similarly, different higher-order regularization terms have been used for compressed sensing MRI reconstruction problems, such as total generalized variation and shearlet transform (TGVST) \cite{guo2014new}, BM3D-MRI \cite{eksioglu2016decoupled}, Euler's elastica \cite{yashtini2020euler}, and nonlocal elastica regularization \cite{Yan2020} etc. These methods can effectively recover the missing details and preserve geometric information.
We also apply the MGMC method for compressed sensing MRI problem, where $A$ becomes a composite operator defined as $A=\mathcal{P}\mathcal{F}$ with the selection operator $\mathcal{P}$ and Fourier transform $\mathcal{F}$.

In the following part, we use two MR images as examples, one brain image and one foot image of size $256\times 256$.
Both Cartersian sampling pattern and radial sampling pattern are chosen for evaluation.
We also introduced the zero-mean complex Gaussian noise with the standard variation of $\sigma=10$ into the under-sampled data.
The performance of MGMC is compared with state-of-the-art variational methods including  TGVST \cite{guo2014new} and BM3D method \cite{eksioglu2016decoupled}, the implementation detail of which are presented as follows:
\begin{enumerate}
\item[1)] TV \cite{chambolle2011first}: The TV regularization model was solved by primal dual method. The step size is given as $\tau=1/(2L_F)$ and $\sigma=L_F/L^2$ for the primal and dual variable, respectively, where $L=\|\nabla\|$ and $L_F$ is the Lipschitz constant of $F(u)=\|\mathcal{P}\mathcal{F}u-f\|_2^2$. The regularization parameter is set as $\alpha=3\times10^{-3}$ for both under-sampled patterns.
\item[2)] TGVST \cite{guo2014new}:  We implement the TGVST algorithm
with same parameters as the ones used in the original paper such as $\beta=10^3$, $\lambda=0.01$, $\alpha_0$ is raised from $10^{-3}$ to $10^{-2}$ and $\alpha_1$ is fixed as $10^{-3}$.
\item[3)] BM3D-MRI \cite{eksioglu2016decoupled}: The range of parameter for the observation fidelity is $\lambda_{BM3D}\in [0,5]$. The total iteration number is set as 50 to balance the performance and efficiency.
Both the iteration number and relative error $\epsilon=1\times 10^{-4}$ are used as the terminating conditions.
\item[4)] MGMC: The parameters are set as $\alpha=5\times10^{-3}$ for both under-sampled patterns.
\end{enumerate}

Table \ref{MRI_PSNR} displays the PSNR, SSIM, and CPU time of the comparison methods. As illustrated, all high-order regularization methods produce higher PSNR and SSIM than TV model. And our MGMC outperforms the TGVST and BM3D-MRI in terms of PSNR, SSIM and CPU time.
We also exhibit reconstructed images and residual images in Figure \ref{MRI_figure}, which are consistent with quantitative results in Table \ref{MRI_PSNR}. It can be observed that less structural information is presented in the residual images obtained by our MGMC method.

\section{Conclusion}\label{section5}

We proposed an efficient multi-grid algorithm for solving the curvature-based minimization problems that rely on the piecewise constant basic spanned subspace correction. The original minimization was then transferred into a series of local problems from the fine layer to the coarse layer, each of which was solved by the forward-backward splitting scheme with a convergence guarantee. More importantly, there existed analytical solutions to the sub-minimization problems on local patches, which can be solved very efficiently. We also applied the non-overlapping domain decomposition method on each layer to increase the parallelism for improving computational efficiency. Furthermore, we implemented the proposed algorithms by GPU computation to deal with large-scale image processing tasks.
Comparative experiments on image denoising and reconstruction problems demonstrate the efficient performance of the proposed method by comparing it with several advanced denoising methods.

Although we proved the energy diminishing of the pixel-wise minimization problem \eqref{local_MSM}, it is difficult to show the fine layer problem converges to a minimizer of \eqref{MCM} due to the nonconvexity of the curvature energy. As far as we know, the convergence analysis of curvature regularization models has made some achievements, mainly for the Euler's elastica regularization model. In \cite{BrediesPW15,ChambolleP19}, convex relaxation of elastica energy via functional lifting was studied to establish numerical algorithms with convergence guarantee. He, Wang and Chen \cite{he2021penalty} proposed a penalty relaxation algorithm with the theoretical guarantee to find a stationary point of Euler's elastica model. However, the convergent algorithms for solving the mean curvature and Gaussian curvature energies are still very limited, which we would like to study in the future.
Other future works include developing more efficient algorithms for curvature-related minimization models and expanding the applications of curvature regularization, e.g., improving the robustness of deep neural network models \cite{Moosavi-Dezfooli_2019_CVPR,guo2020connections,Singla_2021_ICCV}.

\section*{Acknowledgment}
The authors would like to thank the associated editor and the anonymous reviewers for providing us numerous valuable suggestions to revise the paper.

\appendices

{\appendix[Proof of Lemma 1]
\begin{proof}
We prove the lemma followed the Lemma 1 in \cite{duchi2009efficient}.
The first-order optimality condition of Algorithm \ref{sub_alg} gives
\[c_{t+1}=c_t-\eta_t\partial f(c_{t+\frac 12}) -\eta_t\nabla r(c_{t+1}).\]
The convexity of $r(c)$ implies that for any $\widetilde{c}$
\begin{equation}\label{convex property}
r(\widetilde{c})\geq r(c_{t+1})+\langle \nabla r(c_{t+1}),\widetilde{c}-c_{t+1} \rangle.
\end{equation}
Since there is $\|\nabla r(c)\|^2\leq G^2$ and $\|\partial f(c)\|^2\leq D^2$, we can obtain the following inequality from the Cauchy-Shwartz inequality
\begin{equation}\label{cauchy}
\small
\begin{split}
\langle \nabla r(c_{t+1}),c_{t+1}-c_t \rangle&=\langle \nabla r(c_{t+1}),-\eta_t\partial f(c_{t+\frac 12})-\eta_t\nabla r(c_{t+1})  \rangle\\
&\leq   \eta_t (G^2+GD).
\end{split}
\end{equation}
By expanding the squared norm of the difference between $c_{t+1}$ and $\widetilde{c}$, it gives
\begin{small}
\begin{equation*}
\begin{split}
&\|c_{t+1}-\widetilde{c}\|^2= \|c_t-\eta_t\partial f(c_{t+\frac 12}) -\eta_t\nabla r(c_{t+1})-\widetilde{c}\|^2\\
&=\|c_t-\widetilde{c}\|^2-2\eta_t\langle\partial  f(c_{t+\frac 12}), c_t-\widetilde{c}\rangle +\|\eta_t\partial f(c_{t+\frac 12})+\eta_t\nabla r(c_{t+1})\|^2   \\
&-2\eta_t[\langle\nabla r(c_{t+1}),c_{t+1}-\widetilde{c} \rangle -\langle\nabla r(c_{t+1}),c_{t+1}-c_t \rangle].\\
\end{split}
\end{equation*}
\end{small}We now use \eqref{convex property} and \eqref{cauchy} to get
\begin{equation}\label{data estimation}
\small
\begin{split}
\|c_{t+1}&-\widetilde{c}\|^2\leq \|c_t-\widetilde{c}\|^2-2\eta_t r(c_{t+1})+2\eta_tr(\widetilde{c})\\
&+\eta_t^2(3G^2+D^2+4GD)+2\eta_t\langle\partial f(c_{t+\frac 12}), \widetilde{c}-c_t\rangle. \\
\end{split}
\end{equation}
By Proposition \ref{Bernstein}, we have $f(c_{t+\frac 12})\leq f(c)$.
Relying on the definition of $\partial f(c_{t+\frac 12})$, we can estimate the last term
\begin{equation}\label{last estimation}
\small
\begin{split}
&\langle\partial f(c_{t+\frac 12}), \widetilde{c}-c_t\rangle =\langle\partial f(c_{t+\frac 12}), \widetilde{c}-c_{t+\frac 12}\rangle+ \langle \partial f(c_{t+\frac 12}),c_{t+\frac 12}-c_t\rangle\\
&\leq  \Big\langle \frac{f(c_{t+\frac 12})-f(\widetilde{c}) }{c_{t+\frac 12}-\widetilde{c}}, \widetilde{c}-c_{t+\frac 12} \Big\rangle+ \Big\langle \frac{f(c_{t+\frac 12})-f(c_t) }{c_{t+\frac 12}-c_t},c_{t+\frac 12}-c_t \Big \rangle \\
&\leq f(\widetilde{c})-f(c_{t}).
\end{split}
\end{equation}
By substituting \eqref{last estimation} to  \eqref{data estimation}, we obtain
\begin{equation*}
\begin{split}
 2\eta_t \Big(f(c_{t})+r(c_{t+1})&-J(\widetilde{c})\Big)\leq \|c_{t}- \widetilde{c}\|^2 \\
&- \|c_{t+1}- \widetilde{c}\|^2+ \eta_t^2(5G^2+3D^2).
\end{split}
\end{equation*}
 which completes the proof.
\end{proof}
}



\begin{thebibliography}{10}
\providecommand{\url}[1]{#1}
\csname url@samestyle\endcsname
\providecommand{\newblock}{\relax}
\providecommand{\bibinfo}[2]{#2}
\providecommand{\BIBentrySTDinterwordspacing}{\spaceskip=0pt\relax}
\providecommand{\BIBentryALTinterwordstretchfactor}{4}
\providecommand{\BIBentryALTinterwordspacing}{\spaceskip=\fontdimen2\font plus
\BIBentryALTinterwordstretchfactor\fontdimen3\font minus
  \fontdimen4\font\relax}
\providecommand{\BIBforeignlanguage}[2]{{%
\expandafter\ifx\csname l@#1\endcsname\relax
\typeout{** WARNING: IEEEtran.bst: No hyphenation pattern has been}%
\typeout{** loaded for the language `#1'. Using the pattern for}%
\typeout{** the default language instead.}%
\else
\language=\csname l@#1\endcsname
\fi
#2}}
\providecommand{\BIBdecl}{\relax}
\BIBdecl

\bibitem{2017Augmented}
W.~Zhu, X.~C. Tai, and T.~Chan, ``Augmented lagrangian method for a mean
  curvature based image denoising model,'' \emph{Inverse Problems and Imaging},
  vol.~7, no.~4, pp. 1409--1432, 2017.

\bibitem{rudin1992nonlinear}
L.~I. Rudin, S.~Osher, and E.~Fatemi, ``Nonlinear total variation based noise
  removal algorithms,'' \emph{Physica D: Nonlinear Phenomena}, vol.~60, no.~4,
  pp. 259--268, 1992.

\bibitem{Yves2002}
Y.~Meyer, ``Oscillating patterns in image processing and nonlinear evolution
  equations,'' \emph{University Lecture Ser. 22, AMS}, 2002.

\bibitem{Lysaker2003Noise}
M.~Lysaker, A.~Lundervold, and X.-C. Tai, ``Noise removal using fourth-order
  partial differential equation with applications to medical magnetic resonance
  images in space and time,'' \emph{{IEEE} Transactions on Image Processing},
  vol.~12, no.~12, pp. 1579--1590, 2003.

\bibitem{Bredies2010total}
K.~Bredies, K.~Kunisch, and T.~Pock, ``Total generalized variation,''
  \emph{{SIAM} Journal on Imaging Sciences}, vol.~3, no.~3, pp. 492--526, 2010.

\bibitem{2011AA}
X.~C. Tai, J.~Hahn, and G.~J. Chung, ``A fast algorithm for {E}uler's
  {E}lastica model using augmented lagrangian method,'' \emph{SIAM Journal on
  Imaging Sciences}, vol.~4, no.~1, pp. 313--344, 2011.

\bibitem{Zhu2012image}
W.~Zhu and T.~Chan, ``Image denoising using mean curvature of image surface,''
  \emph{{SIAM} Journal on Imaging Sciences}, vol.~5, no.~1, pp. 1--32, 2012.

\bibitem{ChambolleP19}
A.~Chambolle and T.~Pock, ``Total roto-translational variation,''
  \emph{Numerische Mathematik}, vol. 142, no.~3, pp. 611--666, 2019.

\bibitem{belyaev2018adaptive}
A.~Belyaev and P.-A. Fayolle, ``Adaptive curvature-guided image filtering for
  structure+ texture image decomposition,'' \emph{IEEE Transactions on Image
  Processing}, vol.~27, no.~10, pp. 5192--5203, 2018.

\bibitem{pei2020curvature}
H.~Pei, B.~Wei, K.~Chang, C.~Zhang, and B.~Yang, ``Curvature regularization to
  prevent distortion in graph embedding,'' \emph{Advances in Neural Information
  Processing Systems}, vol.~33, pp. 20\,779--20\,790, 2020.

\bibitem{dong2020cure}
B.~Dong, H.~Ju, Y.~Lu, and Z.~Shi, ``Cure: Curvature regularization for missing
  data recovery,'' \emph{SIAM Journal on Imaging Sciences}, vol.~13, no.~4, pp.
  2169--2188, 2020.

\bibitem{2010Multigrid}
C.~Brito-Loeza and K.~Chen, ``Multigrid algorithm for high order denoising,''
  \emph{{SIAM} Journal on Imaging Sciences}, vol.~3, no.~3, pp. 363--389, 2010.

\bibitem{gong2019weighted}
Y.~Gong and O.~Goksel, ``Weighted mean curvature,'' \emph{Signal Processing},
  vol. 164, pp. 329--339, 2019.

\bibitem{2017Curvature}
Y.~Gong and I.~F. Sbalzarini, ``Curvature filters efficiently reduce certain
  variational energies,'' \emph{{IEEE} Transactions on Image Processing},
  vol.~26, no.~4, pp. 1786--1798, 2017.

\bibitem{Zhong2020image}
Q.~Zhong, K.~Yin, and Y.~Duan, ``Image reconstruction by minimizing curvatures
  on image surface,'' \emph{Journal of Mathematical Imaging and Vision},
  vol.~63, no.~1, pp. 30--55, 2021.

\bibitem{wang2022efficient}
C.~Wang, Z.~Zhang, Z.~Guo, T.~Zeng, and Y.~Duan, ``Efficient sav algorithms for
  curvature minimization problems,'' \emph{IEEE Transactions on Circuits and
  Systems for Video Technology}, 2022.

\bibitem{2009Analogue}
M.~Elsey and S.~Esedoglu, ``Analogue of the total variation denoising model in
  the context of geometry processing,'' \emph{{SIAM} Journal on Multiscale
  Modeling and Simulation}, vol.~7, no.~4, pp. 1549--1573, 2009.

\bibitem{2015Image}
C.~Brito-Loeza, K.~Chen, and V.~Uc-Cetina, ``Image denoising using the gaussian
  curvature of the image surface,'' \emph{Numerical Methods for Partial
  Differential Equations}, vol.~32, no.~3, pp. 518--527, 2015.

\bibitem{lu2016implementation}
W.~Lu, J.~Duan, Z.~Qiu, Z.~Pan, R.~W. Liu, and L.~Bai, ``Implementation of
  high-order variational models made easy for image processing,''
  \emph{Mathematical Methods in the Applied Sciences}, vol.~39, no.~14, pp.
  4208--4233, 2016.

\bibitem{2000Polyhedral}
L.~Alboul and R.~Damme, ``Polyhedral metrics in surface reconstruction: Tight
  triangulations,'' \emph{The Mathematics of Surfaces}, pp. 309--336, 1997.

\bibitem{gong2013local}
Y.~Gong and I.~F. Sbalzarini, ``Local weighted gaussian curvature for image
  processing,'' in \emph{2013 IEEE International Conference on Image
  Processing}.\hskip 1em plus 0.5em minus 0.4em\relax IEEE, 2013, pp. 534--538.

\bibitem{Gong2019mean}
Y.~Gong, ``Mean curvature is a good regularization for image processing,''
  \emph{{IEEE} Transactions on Circuits and Systems for Video Technology},
  vol.~29, no.~8, pp. 2205--2214, 2019.

\bibitem{Nash2000A}
Nash and Stephen, ``A multigrid approach to discretized optimization
  problems,'' \emph{Optimization Methods and Software}, vol.~14, no.~1, pp.
  99--116, 2000.

\bibitem{2003An}
R.~Kimmel and I.~Yavneh, ``An algebraic multigrid approach for image
  analysis,'' \emph{{SIAM} Journal on Scientific Computing}, vol.~24, no.~4,
  pp. 1218--1231, 2003.

\bibitem{Frohn2004Nonlinear}
C.~Frohn-Schauf, S.~Henn, and K.~Witsch, ``Nonlinear multigrid methods for
  total variation image denoising,'' \emph{Computing and Visualization in
  Science}, vol.~7, no.~4, pp. 199--206, 2004.

\bibitem{2006On}
T.~F. Chan and K.~Chen, ``On a nonlinear multigrid algorithm with primal
  relaxation for the image total variation minimisation,'' \emph{Numerical
  Algorithms}, vol.~41, no.~4, pp. 387--411, 2006.

\bibitem{Chen2007A}
K.~Chen and X.~C. Tai, ``A nonlinear multigrid method for total variation
  minimization from image restoration,'' \emph{Journal of Scientific
  Computing}, vol.~33, no.~2, pp. 115--138, 2007.

\bibitem{savage2005improved}
J.~Savage and K.~Chen, ``An improved and accelerated non-linear multigrid
  method for {T}otal-{V}ariation denoising,'' \emph{International Journal of
  Computer Mathematics}, vol.~82, no.~8, pp. 1001--1015, 2005.

\bibitem{Chan2006An}
T.~F. Chan and K.~Chen, ``An optimization based multilevel algorithm for total
  variation image denoising,'' \emph{Multiscale Modeling and Simulation},
  vol.~5, no.~2, pp. 615--645, 2006.

\bibitem{Zhang2021ipi}
Z.~Zhang, X.~Li, Y.~Duan, K.~Yin, and X.-C. Tai, ``An efficient multi-grid
  method for {TV} minimization problems,'' \emph{Inverse Problems and Imaging},
  vol.~15, no.~5, pp. 125--138, 2021.

\bibitem{Tai2021}
X.-C. Tai, L.-J. Deng, and K.~Yin, ``A multigrid algorithm for maxflow and
  min-cut problems with applications to multiphase image segmentation,''
  \emph{Journal of Scientific Computing}, vol.~87, no.~3, pp. 101--128, 2021.

\bibitem{2011A}
N.~Chumchob, K.~Chen, and C.~Brito-Loeza, ``A fourth-order variational image
  registration model and its fast multigrid algorithm,'' \emph{{SIAM} Journal
  on Multiscale Modeling and Simulation}, vol.~9, no.~1, pp. 89--128, 2011.

\bibitem{combettes2005signal}
P.~L. Combettes and V.~R. Wajs, ``Signal recovery by proximal forward-backward
  splitting,'' \emph{Multiscale Modeling and Simulation}, vol.~4, no.~4, pp.
  1168--1200, 2005.

\bibitem{raguet2013generalized}
H.~Raguet, J.~Fadili, and G.~Peyr{\'e}, ``A generalized forward-backward
  splitting,'' \emph{SIAM Journal on Imaging Sciences}, vol.~6, no.~3, pp.
  1199--1226, 2013.

\bibitem{chen2007nonlinear}
K.~Chen and X.-C. Tai, ``A nonlinear multigrid method for total variation
  minimization from image restoration,'' \emph{Journal of Scientific
  Computing}, vol.~33, no.~2, pp. 115--138, 2007.

\bibitem{chan2010multilevel}
R.~H. Chan and K.~Chen, ``A multilevel algorithm for simultaneously denoising
  and deblurring images,'' \emph{SIAM Journal on Scientific Computing},
  vol.~32, no.~2, pp. 1043--1063, 2010.

\bibitem{pustelnik2011parallel}
N.~Pustelnik, C.~Chaux, and J.-C. Pesquet, ``Parallel proximal algorithm for
  image restoration using hybrid regularization,'' \emph{IEEE Transactions on
  Image Processing}, vol.~20, no.~9, pp. 2450--2462, 2011.

\bibitem{tang2020compressive}
V.~H. Tang, A.~Bouzerdoum, and S.~L. Phung, ``Compressive radar imaging of
  stationary indoor targets with low-rank plus jointly sparse and total
  variation regularizations,'' \emph{IEEE Transactions on Image Processing},
  vol.~29, pp. 4598--4613, 2020.

\bibitem{duchi2009efficient}
J.~Duchi and Y.~Singer, ``Efficient online and batch learning using forward
  backward splitting,'' \emph{The Journal of Machine Learning Research},
  vol.~10, pp. 2899--2934, 2009.

\bibitem{papafitsoros2014combined}
K.~Papafitsoros and C.-B. Sch{\"o}nlieb, ``A combined first and second order
  variational approach for image reconstruction,'' \emph{Journal of
  mathematical imaging and vision}, vol.~48, no.~2, pp. 308--338, 2014.

\bibitem{2010Fundamentals}
G.~T. Herman, \emph{Fundamentals of computerized tomography}.\hskip 1em plus
  0.5em minus 0.4em\relax Fundamentals of Computerized Tomography, 2010.

\bibitem{2006Iterative}
J.~Qi and R.~M. Leahy, ``Iterative reconstruction techniques in emission
  computed tomography,'' \emph{Physics in Medicine and Biology}, vol.~51,
  no.~15, pp. 541--549, 2006.

\bibitem{liu2013total}
Y.~Liu, Z.~Liang, J.~Ma, H.~Lu, K.~Wang, H.~Zhang, and W.~Moore, ``Total
  variation-stokes strategy for sparse-view {X}-ray {CT} image
  reconstruction,'' \emph{{IEEE} transactions on medical imaging}, vol.~33,
  no.~3, pp. 749--763, 2013.

\bibitem{niu2014sparse}
S.~Niu, Y.~Gao, Z.~Bian, J.~Huang, W.~Chen, G.~Yu, Z.~Liang, and J.~Ma,
  ``Sparse-view {X}-ray {CT} reconstruction via total generalized variation
  regularization,'' \emph{Physics in Medicine and Biology}, vol.~59, no.~12,
  pp. 2997--3009, 2014.

\bibitem{2015Efficient}
J.~Chen, L.~Wang, B.~Yan, H.~Zhang, and G.~Cheng, ``Efficient and robust 3{D}
  {CT} image reconstruction based on total generalized variation regularization
  using the alternating direction method,'' \emph{Journal of X-ray science and
  Technology}, vol.~23, no.~6, pp. 683--698, 2015.

\bibitem{Marlevi2020}
D.~Marlevi, H.~Kohr, J.-W. Buurlage, B.~Gao, K.~J. Batenburg, and
  M.~Colarieti-Tosti, ``Multigrid reconstruction in tomographic imaging,''
  \emph{{IEEE} Transactions on Radiation and Plasma Medical Sciences}, vol.~4,
  no.~3, pp. 300--310, May 2020.

\bibitem{guo2014new}
W.~Guo, J.~Qin, and W.~Yin, ``A new detail-preserving regularization scheme,''
  \emph{SIAM journal on imaging sciences}, vol.~7, no.~2, pp. 1309--1334, 2014.

\bibitem{eksioglu2016decoupled}
E.~M. Eksioglu, ``Decoupled algorithm for mri reconstruction using nonlocal
  block matching model: Bm3d-mri,'' \emph{Journal of Mathematical Imaging and
  Vision}, vol.~56, no.~3, pp. 430--440, 2016.

\bibitem{yashtini2020euler}
M.~Yashtini, ``Euler’s elastica-based algorithm for parallel mri
  reconstruction using sensitivity encoding,'' \emph{Optimization Letters},
  vol.~14, no.~6, pp. 1435--1458, 2020.

\bibitem{Yan2020}
M.~Yan and Y.~Duan, ``Nonlocal elastica model for sparse reconstruction,''
  \emph{Journal of Mathematical Imaging and Vision}, vol.~62, no.~4, pp.
  532--548, 2020.

\bibitem{chambolle2011first}
A.~Chambolle and T.~Pock, ``A first-order primal-dual algorithm for convex
  problems with applications to imaging,'' \emph{Journal of mathematical
  imaging and vision}, vol.~40, no.~1, pp. 120--145, 2011.

\bibitem{BrediesPW15}
K.~Bredies, T.~Pock, and B.~Wirth, ``A convex, lower semicontinuous
  approximation of {E}uler's {E}lastica energy,'' \emph{{SIAM} Journal on
  Mathematical Analysis}, vol.~47, no.~1, pp. 566--613, 2015.

\bibitem{he2021penalty}
F.~He, X.~Wang, and X.~Chen, ``A penalty relaxation method for image processing
  using {E}uler's {E}lastica model,'' \emph{SIAM Journal on Imaging Sciences},
  vol.~14, no.~1, pp. 389--417, 2021.

\bibitem{Moosavi-Dezfooli_2019_CVPR}
S.-M. Moosavi-Dezfooli, A.~Fawzi, J.~Uesato, and P.~Frossard, ``Robustness via
  curvature regularization, and vice versa,'' in \emph{Proceedings of the
  IEEE/CVF Conference on Computer Vision and Pattern Recognition (CVPR)}, June
  2019, pp. 9070--9078.

\bibitem{guo2020connections}
Y.~Guo, L.~Chen, Y.~Chen, and C.~Zhang, ``On connections between
  regularizations for improving {DNN} robustness,'' \emph{IEEE Transactions On
  Pattern Analysis And Machine Intelligence}, vol.~43, no.~12, pp. 4469--4476,
  2020.

\bibitem{Singla_2021_ICCV}
V.~Singla, S.~Singla, S.~Feizi, and D.~Jacobs, ``Low curvature activations
  reduce overfitting in adversarial training,'' in \emph{Proceedings of the
  IEEE/CVF International Conference on Computer Vision (ICCV)}, October 2021,
  pp. 16\,423--16\,433.

\end{thebibliography}
\end{document}